\numberwithin{equation}{section}
\newcommand{\R}{{\mathbb R}}
\newcommand{\C}{{\mathbb C}}
\newcommand{\N}{{\mathbb N}}
\newcommand{\Z}{{\mathbb Z}}
\renewcommand{\d}{\partial}
\renewcommand{\Re}{{\operatorname{Re\,}}}
\renewcommand{\Im}{{\operatorname{Im\,}}}
\newcommand{\Ai}{{\operatorname{Ai}}}
\newcommand{\al}{\alpha}
\newcommand{\ga}{\gamma}
\newcommand{\Ga}{\Gamma}
\newcommand{\la}{\lambda}
\newcommand{\ep}{\varepsilon}
\newcommand{\de}{\delta}
\newcommand{\De}{\Delta}
\newcommand{\sg}{\sigma}
\newcommand{\Sg}{\Sigma}
\newcommand{\Om}{\Omega}
\renewcommand{\th}{\theta}
\newcommand{\z}{\zeta}
\newcommand{\fcal}{{\mathcal F}}
\newtheorem{theo}{{\sc \bf Theorem}}[section]
\newtheorem{lem}[theo]{{\sc \bf Lemma}}
\newtheorem{prop}[theo]{{\sc \bf Proposition}}
\begin{document}

\title[Nonintersecting Brownian motions on the half-line]{Nonintersecting Brownian motions on the half-line and discrete Gaussian orthogonal polynomials}

\author{Karl Liechty}
\address{Department of Mathematics,
University of Michigan,
530 Church St., Ann Arbor, MI 48109, U.S.A.}
\email{kliechty@umich.edu}

\thanks{I would like to thank Gr\'{e}gory Schehr and Peter Forrester for bringing this problem to my attention, and Schehr for useful correspondence.  I would also like to thank Jinho Baik and Peter Miller for discussions, advice, and feedback.}

\date{\today}

\begin{abstract} 
We study the distribution of the maximal height of the outermost path in the model of $N$ nonintersecting Brownian motions on the half-line as $N\to \infty$, showing that it converges in the proper scaling to the Tracy-Widom distribution for the largest eigenvalue of the Gaussian orthogonal ensemble.  This is as expected from the viewpoint that the maximal height of the outermost path converges to the maximum of the $\textrm{Airy}_2$ process minus a parabola.  Our proof is based on Riemann-Hilbert analysis of a system of discrete orthogonal polynomials with a Gaussian weight in the double scaling limit as this system approaches saturation.  We consequently compute the asymptotics of the free energy and the reproducing kernel of the corresponding discrete orthogonal polynomial ensemble in the critical scaling in which the density of particles approaches saturation.  Both of these results can be viewed as dual to the case in which the mean density of eigenvalues in a random matrix model is vanishing at one point.
\end{abstract}

\maketitle

\section{Introduction}
\subsection{Nonintersecting Brownian motions on the half-line}
Consider a model of $N$ nonintersecting Brownian motions $\{b_j(t)\}_{j=1}^N$ which remain non-negative for $0<t<1$ and whose initial and terminal points are at zero.  That is,
\begin{equation}\label{in1}
\begin{aligned}
&b_1(0)=b_1(1)=b_2(0)=b_2(1)= \cdots = b_N(0)=b_N(1)=0\,, \\
0& \le b_1(t)<b_2(t)< \cdots < b_N(t) \quad \textrm{for} \quad 0<t<1.
\end{aligned}
\end{equation}
There are two standard ways to enforce the condition that the Brownian motions remain non-negative: an absorbing wall and a reflecting wall at zero.  The transition probability for a single Brownian motion with an absorbing wall at zero to pass from $y$ to $x$ over the time interval $t$ is given by
\begin{equation}\label{j4}
p_{abs}(t, x|y)=\frac{1}{\sqrt{2\pi t}} \left[\exp\left(-\frac{(y-x)^2}{2t}\right)-\exp\left(-\frac{(y+x)^2}{2t}\right)\right]\,,
\end{equation}
and the transition probability for a single Brownian motion with a reflecting wall at zero to pass from $y$ to $x$ over the time interval $t$ is given by
\begin{equation}\label{j4a}
p_{ref}(t, x|y)=\frac{1}{\sqrt{2\pi t}} \left[\exp\left(-\frac{(y-x)^2}{2t}\right)+\exp\left(-\frac{(y+x)^2}{2t}\right)\right].
\end{equation}
  For the positions at time $t\in (0,1)$ of the $N$ nonintersecting Brownian motions with an absorbing wall at zero, we will use the notation
\begin{equation}\label{j4b}
0 < b_1^{(BE)}(t)<b_2^{(BE)}(t)< \cdots < b_N^{(BE)}(t).
\end{equation}
The superscript $BE$ stands for Brownian excursion, which is the name for a Brownian motion with an absorbing wall which is conditioned to return to its starting point.  For the positions at time $t\in (0,1)$ of the $N$ nonintersecting Brownian motions with a reflecting wall at zero, we use the notation
\begin{equation}\label{j4c}
0 \le b_1^{(R)}(t)<b_2^{(R)}(t)< \cdots < b_N^{(R)}(t),
\end{equation}
where the superscript $R$ stands for reflecting.
The ensembles of nonintersecting Brownian motions can be derived from the transition probabilities (\ref{j4}) and (\ref{j4a}) using the Karlin-McGregor formula \cite{KM}.  Even though it seems like a degenerate condition to force all of the Brownian motions to begin and end at zero, it is possible to define these models of nonintersecting Brownian motions with such a condition by starting with a model for which the starting and ending points are all distinct and positive, and taking a limit as they go to zero.   
Let us now give a precise definition of the two models of nonintersecting Brownian motions in terms of their transition probabilities.  See \cite{KIK} for a derivation of these transition probabilities in the absorbing case.  The reflecting case is similar.
Introduce the notations
\begin{equation}\label{j1}
{\bf x}_k= (x_{k,1}, x_{k,2}, \dots, x_{k,N}),\quad {\bf x}_k^2=(x_{k,1}^2, x_{k,2}^2, \dots, x_{k,N}^2), 
\end{equation}
and let
\begin{equation}\label{j1a}
 \De({\bf x})=\prod_{1\leq j < k \leq N}(x_k-x_j),
 \end{equation}
 be the Vandermonde determinant.
Define the probability density functions
\begin{equation}\label{j2}
\begin{aligned}
p_0^{(BE)}(t,{\bf x}_1)&=\frac{\big(t(1-t)\big)^{-N^2-N/2} 2^{N/2}}{\pi^{N/2} \prod_{j=0}^{N-1} (2j+1)! } \big(\De({\bf x}_1^2)\big)^2 \left(\prod_{j=1}^N x_{1,j}^2\right) \exp\left(-\frac{1}{2t(1-t)} \sum_{j=1}^N x_{1,j}^2\right), \\
p_0^{(R)}(t,{\bf x}_1)&=\frac{\big(t(1-t)\big)^{-N^2+N/2} 2^{N/2}}{\pi^{N/2} \prod_{j=0}^{N-1} (2j)! } \big(\De({\bf x}_1^2)\big)^2  \exp\left(-\frac{1}{2t(1-t)} \sum_{j=1}^N x_{1,j}^2\right).
\end{aligned}
\end{equation}
and
\begin{equation}\label{j3}
\begin{aligned}
p^{(BE)}(t_1, {\bf x}_1;t_2,{\bf x}_2)=&\left(\frac{1-t_1}{1-t_2}\right)^{N^2+N/2} \left(\prod_{j=1}^N \frac{x_{2,j}}{x_{1,j}}\right) \frac{\De({\bf x}_2^2)}{\De({\bf x}_1^2)} \exp\left\{-\frac{1}{2}\sum_{j=1}^N \left(\frac{x_{2,j}}{1-t_2}-\frac{x_{1,j}}{1-t_1}\right)\right\} \\
&\qquad \times \det\bigg[p_{abs}(t_2-t_1, x_{2,j}|x_{1,k})\bigg]_{j,k=1}^N, \\
p^{(R)}(t_1, {\bf x}_1;t_2,{\bf x}_2)=&\left(\frac{1-t_1}{1-t_2}\right)^{N^2-N/2} \frac{\De({\bf x}_2^2)}{\De({\bf x}_1^2)} \exp\left\{-\frac{1}{2}\sum_{j=1}^N \left(\frac{x_{2,j}}{1-t_2}-\frac{x_{1,j}}{1-t_1}\right)\right\} \\
&\qquad \times \det\bigg[p_{ref}(t_2-t_1, x_{2,j}|x_{1,k})\bigg]_{j,k=1}^N. \\
\end{aligned}
\end{equation}
Let us adopt the convention that $b_j(t)$ with no superscript refers to either the model with the reflecting or absorbing wall at zero.
For any time $t\in (0,1)$, the ordered particles $(b_1(t), \dots ,b_N(t))$ must lie in the region in $\R^N$
\begin{equation}\label{j5}
{\bf W}_N=\{(b_1,\dots,b_n) : 0 \le b_1 < b_2 <\cdots < b_N\}.
\end{equation}
The meaning of the probability density functions above is the following.   For some sequence of times 
\begin{equation}\label{j6}
0<t_1<\cdots<t_K< 1,
\end{equation}
and some sequence of regions $\De_k \subset {\bf W}_N$, $k=1,\dots,K$ we have
\begin{equation}\label{j7}
\begin{aligned}
&\mathbb{P}\bigg[\big(b_1(t_k), b_2(t_k),\dots,b_N(t_k)\big) \in \De_k, \ k=1,\dots,K\bigg]= \\
&\qquad\qquad \int_{\De_1}\cdots \int_{\De_K} p_0(t_1,{\bf x}_1)\left(\prod_{k=1}^{K-1} p(t_k,{\bf x}_k;t_{k+1},{\bf x}_{k+1})\right) d{\bf x}_1 \cdots d{\bf x}_K,
\end{aligned}
\end{equation}
where  ${\bf x}_k$ is the vector of integration variables corresponding to the region $\De_k$, $p_0=p_0^{(BE)}$ (resp. $p_0^{(R)}$), and $p=p^{(BE)}$ (resp. $p^{(R)}$) for the model of nonintersecting Brownian motions with an absorbing (resp. reflecting) wall at zero.

From (\ref{j2}) it is immediate that for fixed $t \in (0,1)$, the particles $b_j^{(BE)}(t)$ are distributed as the eigenvalues of a random matrix from the Laguerre unitary ensemble (see e.g., \cite{For}).  As such, the largest particle at each fixed time, in the proper scaling limit, is distributed according to the distribution which describes the largest eigenvalue in the Gaussian unitary ensemble (GUE) of random matrices as the size of the matrix tends to infinity.  In this paper we study the distribution of the random variable
\begin{equation}\label{in2}
\max_{0<t<1} b_N(t),
\end{equation}
the maximal height of the outermost path for both the absorbing and reflecting case.  To obtain a formula for the distribution of this random variable, one can apply the Karlin-McGregor formula in the affine Weyl alcove of height $M$
\begin{equation}\label{in2a}
{\bf W}_N^M=\{(b_1,\dots,b_n) : 0 \le b_1 < b_2 <\cdots < b_N<M\}.
\end{equation}
In the absorbing wall case, the formula obtained is 
\begin{equation}\label{n1}
\begin{aligned}
\mathbb{P} \bigg(\max_{0<t<1} b_N^{(BE)}(t)<M\bigg) &= \frac{2^{-N/2}\pi^{2N^2+N/2}}{M^{N(2N+1)} N!\prod_{k=0}^{N-1}(2k+1)!} \\
& \quad \times \sum_{{\bf x} \in \Z^N} \big(\De({\bf x^2})\big)^2 \left(\prod_{j=1}^N x_j^2\right)\exp\left\{\frac{-\pi^2}{2M^2} \sum_{j=1}^N x_j^2\right\},
\end{aligned}
\end{equation}
where 
\begin{equation}\label{n1a}
{\bf x}= (x_1, x_2, \dots, x_N).
\end{equation}
Notice that this is the Hankel determinant 
\begin{equation}\label{n1aa}
\frac{2^{-N/2}\pi^{2N^2+N/2}}{M^{N(2N+1)}\prod_{k=0}^{N-1}(2k+1)!} \det\left[ \left.\left(\frac{\d^{j+k-2}}{\d \la^{j+k-2}} \sum_{x =-\infty}^\infty  x e^{\la x^2}\right)\right|_{\la=-\pi^2/2M^2}  \right]_{j,k=1}^N.
\end{equation}
This formula first appeared in the paper \cite{SMCR}, in which the authors use a path integral technique to derive it.  A derivation using the Karlin-McGregor formula appeared soon after in \cite{KIK}.   See also \cite{Fe} in which an equivalent formula is derived from lattice paths.  
In the case that there is a reflecting wall at zero, the formula is
\begin{equation}\label{n1b}
\mathbb{P} \bigg(\max_{0<t<1} b_N^{(R)}(t)<M\bigg) =\frac{2^{-N/2}\pi^{2N^2-3N/2}}{M^{N(2N-1)} N!\prod_{k=0}^{N-1}(2k)!}  \sum_{{\bf x} \in \{\Z-1/2\}^N} \big(\De({\bf x^2})\big)^2 \exp\left\{\frac{-\pi^2}{2M^2} \sum_{j=1}^N x_j^2\right\}.
\end{equation}
To our knowledge, the formula (\ref{n1b}) has not appeared in the literature before, but it can be derived in a manner similar to those used in \cite{SMCR} and \cite{KIK} to derive (\ref{n1}).

In the case of an absorbing wall, this model was first introduced in the papers \cite{KT} and \cite{Gi}, and is often called the model of nonintersecting Brownian excursions, or noncolliding Bessel bridges.  It is also sometimes referred to as ``watermelons with a wall," although this sometimes refers to the discrete time and space (simple random walk) version as well \cite{Gi}.   See \cite{Gi} and \cite{KTNK} for a derivation of this model as a scaling limit of an ensemble of simple random walks conditioned not to intersect and to stay positive.

  Our analysis of (\ref{n1}) and (\ref{n1b}) is based on analysis of the {\it discrete Gaussian orthogonal polynomials} $\{P_k^{(\al)}(x)\}_{k=0}^\infty$  and their normalizing constants $\{h_k^{(\al)}\}_{k=0}^\infty$ defined via the orthogonality condition
\begin{equation}\label{n5}
\sum_{x\in \{\Z-\al\}} P_n^{(\al)}(x)P_m^{(\al)}(x) w(x) =h_n^{(\al)} \de_{mn}\,, \quad w(x)= \exp\left\{\frac{-\pi^2}{2M^2}x^2\right\}, \quad P_n^{(\al)}(x)=x^n+\cdots.
\end{equation}
A routine calculation (see \cite{De}) shows that (\ref{n1}) and (\ref{n1b}) can be written as
\begin{equation}\label{n4}
\begin{aligned}
\mathbb{P} \bigg(\max_{0<t<1} b_N^{(BE)}(t)<M\bigg) &= \frac{2^{-N/2}\pi^{2N^2+N/2}}{M^{N(2N+1)} \prod_{k=0}^{N-1}(2k+1)!} \prod_{k=0}^{N-1} h_{2k+1}^{(0)}\,, \\
\mathbb{P} \bigg(\max_{0<t<1} b_N^{(R)}(t)<M\bigg) &=\frac{2^{-N/2}\pi^{2N^2-3N/2}}{M^{N(2N-1)} \prod_{k=0}^{N-1}(2k)!}  \prod_{k=0}^{N-1} h_{2k}^{(1/2)}.
\end{aligned}
\end{equation}

In a recent paper of Forrester, Majumdar, and Schehr \cite{FMS}, an analogy beween nonintersecting Brownian excursions and Yang-Mills theory on the sphere is made, and the authors use some non-rigorous methods from gauge theory (\cite{CNS},\cite{DK})  to deduce that the maximal height of the outermost path in this ensemble is, in the proper scaling limit, distributed as the largest eigenvalue in the Gaussian orthogonal ensemble (GOE) of random matrices.  A main result of this paper is a rigorous verification of this fact.  In order to state this theorem, let us review the Tracy-Widom distributions which describe the location of the largest eigenvalue in the classical random matrix ensembles.  These distributions may be described in terms of the Hastings-McLeod solution to the Painlev\'{e} II equation.  The homogeneous Painlev\'{e} II equation is the second order nonlinear ODE 
\begin{equation}\label{in3}
q''(s)=sq(s)+2q(s)^3\,.
\end{equation}
The Hastings-McLeod solution to this equation \cite{HM} is characterized by its behavior at positive infinity.  In particular, it is the solution satisfying 
\begin{equation}\label{in4}
q(s)=\Ai(s)(1+o(1))=\frac{e^{-\frac{2}{3}s^{3/2}}}{2\sqrt{\pi} s^{1/4}} \bigg(1+o(1)\bigg) \quad \textrm{as} \quad s\to +\infty,
\end{equation}
where $\Ai$ is the Airy function.   The distribution functions $\fcal_1$ and $\fcal_2$ are defined as
\begin{equation}\label{in5}
\fcal_1(x)=F(x)E(x)\,, \quad \fcal_2(x)=F(x)^2\,,
\end{equation}
where
\begin{equation}\label{in6}
E(x)=\exp\left(-\frac{1}{2}\int_x^\infty q(s) ds\right)\,, \quad F(x)=\exp\left(-\frac{1}{2}\int_x^\infty R(s) ds\right)\,, \quad R(x)=\int_x^\infty q(s)^2 ds,
\end{equation}
and $q(s)$ is the Hastings-McLeod solution to the Painlev\'{e} II equation.  The function $\fcal_1$ describes the distribution of the largest (or smallest) rescaled eigenvalue in GOE, while $\fcal_2$ describes the distribution of the largest (or smallest) rescaled eigenvalue in GUE (see \cite{TW1}, \cite{TW2}, \cite{TW3}).
We now state our main theorem.
\begin{theo}\label{maxheight} (Distribution of the maximal height of the outermost particle)
Consider either of the models of nonintersecting Brownian motions described in (\ref{j2}) and (\ref{j3}).  The maximal height of the outermost particle has the limiting distribution
\begin{equation}\label{main21}
\lim_{N\to \infty} \mathbb{P} \left[2^{11/6} N^{1/6} \left(\max_{0<t<1} b_N(t)-\sqrt{2N}\right)< k \right]=\fcal_1(k),
\end{equation}
where $\fcal_1$, defined in (\ref{in5}) and (\ref{in6}), is the limiting distribution function for the location of the largest eigenvalue in the Gaussian orthogonal ensemble of random matrices, and $b_N(t)$ is either $b_N^{(BE)}(t)$ or $b_N^{(R)}(t)$.
\end{theo}

This theorem is widely expected from the point of view that the distribution of the uppermost curve in the model of $N$ nonintersecting Brownian bridges should converge (after rescaling and recentering) to the $\textrm{Airy}_2$ process, which was first introduced in \cite{PS}.   In the case of an absorbing wall, the framework to prove this convergence at the level of finite dimensional distributions was given by Tracy and Widom in \cite{TW}, although they stopped just short of stating it as a theorem (their main interest in that paper was the asymptotics of the bottom curve).  
It is known that the maximum of the $\textrm{Airy}_2$ process over a continuum of times is given by the Tracy-Widom GOE distribution.  Such a result was first proved by Johansson \cite{Jo} by first proving a functional limit theorem for the convergence of the polynuclear growth (PNG) model to the $\textrm{Airy}_2$ process and using connections between PNG and the longest increasing subsequence of a random permutation found by Baik and Rains \cite{BR}.  A more direct proof was recently given by Corwin, Quastel and Remenik \cite{CQR}.  See also \cite{MQR}.  Thus Theorem \ref{maxheight} could be proved by establishing the functional convergence of the top curve to the $\textrm{Airy}_2$ process.  In fact, for the absorbing boundary case, given the finite dimensional convergence implied by  \cite{TW} the functional convergence follows from a result of Corwin and Hammond \cite{CH}, who showed that finite dimensional convergence of a line ensemble implies functional convergence given some fairly mild local condition.  In this sense, at least in the absorbing case, Theorem \ref{maxheight} is not new, but here we give an alternative direct proof by analyzing the formula (\ref{n1}) asymptotically.  Moreover, the analysis is uniform for both the absorbing and reflecting boundaries.  The rigorous result for the reflecting boundary case does seem to be new.  Our proof is based on the asymptotic evaluation of the formulas (\ref{n4}) by Riemann-Hilbert methods.

Let us note here that in the paper \cite{FMS} the authors give expressions similar to (\ref{n1}) for the normalized reunion probabilities for nonintersecting Brownian motions with periodic boundary conditions and with reflecting boundary conditions, which are shown to correspond to partition functions of 2-d Yang-Mills theory on the sphere with the gauge groups U($N$) and SO($2N$), respectively.   These expressions do not have a probabilistic interpretation, but can also be expressed in terms of discrete Gaussian orthogonal polynomials and their asymptotic evaluation is a straightforward application of Theorems \ref{maxheight} and \ref{dopefreeenergy}.

\subsection{Discrete Gaussian orthogonal polynomials}
For asymptotic analysis, it is convenient to use a rescaling of the polynomials (\ref{n5}).  Consider the infinite regular lattice of mesh $1/n$,
\begin{equation}\label{dope1}
L_{n,\al}=\left\{x_k=\frac{k-\al}{n},\quad k \in \Z\right\}, \quad \al \in \left[-\frac{1}{2}, \frac{1}{2}\right].
\end{equation}
and the polynomials orthogonal with respect to a discrete Gaussian weight on this lattice.  More specifically, consider the system of monic polynomials $\{P_{n,j}^{(\al)}(x)\}_{j=0}^\infty$ and the normalizing constants $\{h_{n,j}^{(\al)}\}_{j=0}^\infty$ satisfying the orthogonality condition
\begin{equation}\label{dope2}
\frac{1}{n}\sum_{x \in L_{n,\al}} P_{n,j}^{(\al)}(x)P_{n,k}^{(\al)}(x)e^{-n\frac{\pi^2 a}{2}x^2}=h_{n,k}^{(\al)} \de_{jk}.
\end{equation}
As usual, $P_{n,k}^{(\al)}(x)$ is a polynomial of degree $k$, and $a>0$ is a positive parameter.  As the mesh of the lattice goes to zero these polynomials converge to the (monic and rescaled) Hermite polynomials.  The polynomials $P_{n,k}^{(\al)}(x)$ and the normalizing constants $h_{n,k}^{(\al)}$ depend on the parameter $a$.  To highlight that dependence, let us write
\begin{equation}\label{dope3}
h_{n,k}^{(\al)} \equiv h_{n,k}^{(\al)}(a).
\end{equation}
The relation to the polynomials (\ref{n5}) is
\begin{equation}\label{rop3}
P_k^{(\al)}(nx)=n^kP_{n,k}^{(\al)}(x)\,, \qquad h_{k}^{(\al)}=n^{2k+1} h_{n,k}^{(\al)}\left(a\right)\,, \quad M=\sqrt{\frac{n}{a}}\,.
\end{equation}

The main distinguishing feature between the asymptotic analysis of discrete orthogonal polynomials and that of continuous ones is the phenomenon of saturation.  If we denote by $\mu_n$ the normalized counting measure on the zero set of the polynomials $P_{n,n}^{(\al)}(x)$, it is known that, as $n\to \infty$, $\mu_n$ converges to a probability measure with finite support and piecewise smooth density, known as the {\it equilibrium measure}.  It is a general fact that for any system of polynomials orthogonal with respect to a measure which lies on a discrete subset of $\R$, call it $D$, all zeroes of the polynomials are real and there can be no more than one zero between two consecutive nodes of $D$.  This leads to an upper constraint on the distribution of zeroes.  For polynomials orthogonal with respect to a continuous weight, there is no such constraint.  

If the discrete orthogonal polynomials are such that the equilibrium measure does not approach this upper constraint, then their asymptotic properties match those of a corresponding continuous system.  If the upper constraint is active, then they do not, see \cite{BKMM}, \cite{BL}.  In the case of the discrete Gaussian orthogonal polynomials described in (\ref{dope2}), the mesh of the lattice is $1/n$ and thus the the upper constraint on the equilibrium measure is that it should have a density which is no greater than 1.  In the case that the parameter $a$ is greater than 1, this upper constraint is realized, meaning that there is an interval on which the density of the equilibrium measure is identically 1, see Figure \ref{sat}.  
%%%%%%%%%%%%%%%%%%%%%%%%%%%%%%%%%%%
\begin{center}
\begin{figure}
\begin{center}
\scalebox{0.39}{\includegraphics{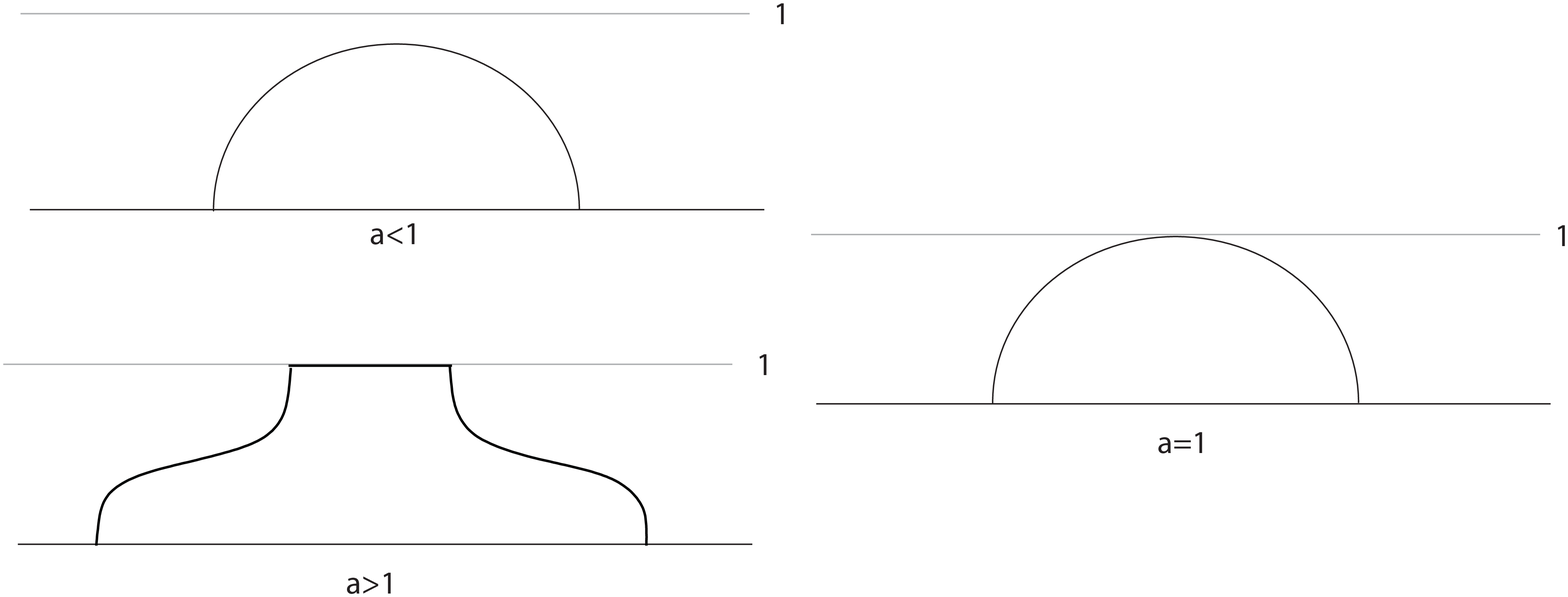}}
\caption{The equilibrium measure density for discrete Gaussian orthogonal polynomials in the subcritical case $a<1$, the supercritical case $a>1$, and the critical case $a=1$.}
\label{sat}
\end{center}
\end{figure}
\end{center}
%%%%%%%%%%%%%%%%%%%%%%%%%%%%%%%%%%%

In \cite{BKMM} and \cite{BL} the authors present the asymptotic properties of very general classes of discrete orthogonal polynomials on the real line assuming some ``regularity" condition on the equilibrium measure. The proof of Theorem \ref{maxheight} requires that we explore the critical case in which the upper constraint is approached in a double scaling limit, which is not considered in \cite{BKMM} or \cite{BL}.  For the continuous weight case, a similar double scaling limit was studied in the seminal paper of Baik, Deift, and Johannson \cite{BDJ} in the context of the longest increasing subsequence of a random permutation.  In the context of random matrix theory, this type of double scaling limit appears  when the limiting mean density of eigenvalues in a random matrix model vanishes at one point, see \cite{BI1}, \cite{CK}.  In the present paper we adapt the analysis to a discrete weight where the lower constraint is replaced by the upper constraint.  A similar double scaling limit was considered recently by Baik and Jenkins \cite{BJ}  for a system of discrete orthogonal polynomials on the circle when the upper constraint is about to be active.

The orthogonal polynomials (\ref{dope3}) satisfy the recurrence relation (see e.g., \cite{Sze})
\begin{equation}\label{dope4}
xP_{n,k}^{(\al)}(x)=P_{n,k+1}^{(\al)}(x)+A_{n,k}^{(\al)} P_{n,k}^{(\al)}(x)+B_{n,k}^{(\al)}P_{n,k-1}^{(\al)}(x), \quad B_{n,k}^{(\al)} \equiv B_{n,k}^{(\al)}(a)=\frac{h_{n,k}^{(\al)}(a)}{h_{n,k-1}^{(\al)}(a)}.
\end{equation}
In the case that $\al=0$ or $\al=1/2$, the lattice is symmetric about zero and $A_{n,k}^{(\al)}$ is zero.  The polynomials $P_{n,j}^{(\al)}$ of course depend on the parameter $a$, and if we show that dependence by writing $P_{n,j}^{(\al)}( x ; a)$, then we have the lattice rescaling relations
\begin{equation}\label{dope4a}
P_{n,j}^{(\al)}(\xi_{\pm} x ; a)=\xi_{\pm}^j P_{n\pm1,j}^{(\al)}(x; a\xi_{\pm})\,, \quad h_{n,j}^{(\al)}(a)=\xi_\pm^{2j+1} h_{n\pm1,j}^{(\al)}(a\xi_\pm)\,, \quad A_{n,k}^{(\al)}(a)=\xi_\pm A_{n\pm1,k}^{(\al)}(a\xi_\pm)\,,
\end{equation}
where
\begin{equation}\label{dope4b}
\xi_{\pm}= 1\pm \frac{1}{n}.
\end{equation}

A basic physical model described by these orthogonal polynomials is the {\it discrete orthogonal polynomial ensemble} with Gaussian weight, which is a discrete version of GUE.  This ensemble is described as the probability distribution on $n$-tuples of points $\la=(\la_1,\la_2,\dots,\la_n)\in (L_{n,\al})^n$
\begin{equation}\label{dope5}
\mathbb{P}\big(\textrm{there are particles at each of the points } \la_1,\dots,\la_n\big)=\left(Z_n^{(DOPE)}\right)^{-1}\prod_{k>j}(\la_j-\la_k)^2\prod_{j=1}^n \frac{e^{-nV(\la_j)}}{n},
\end{equation}
where
\begin{equation}\label{dope6}
Z_n^{(DOPE)} =\sum_{\la\in (L_{n,\al})^n}\prod_{1\le j<k\le n}(\la_k-\la_j)^2\prod_{j=1}^n \frac{e^{-nV(\la_j)}}{n}=n!\prod_{k=0}^{n-1} h_{n,k}^{(\al)},
\end{equation}
and
\begin{equation}\label{dope7}
V(x)=\frac{\pi^2 a}{2} x^2.
\end{equation}
As proved in Appendix \ref{defOP}, the partition function $Z_n^{(DOPE)}$ satisfies the deformation equation
\begin{equation}\label{dope9}
\frac{\d^2}{\d a^2} \log Z_n^{(DOPE)}=\left(\frac{n\pi^2}{2}\right)^2 B_{n,n}^{(\al)}\bigg(B_{n,n-1}^{(\al)}+B_{n,n+1}^{(\al)}+\big(A_{n,n}^{(\al)}+A_{n,n-1}^{(\al)}\big)^2\bigg).
\end{equation}
The deformation (\ref{dope9}) is one of the isospectral flows in a general system known as the {\it Toda lattice hierarchy}, see \cite{DLT}, \cite{DNT} and references therein.  For a derivation of the Toda lattice from the Riemann-Hilbert problem for orthogonal polynomials, see \cite{BKMM}, and for a broad description of differential equations related to orthogonal polynomials, see \cite{Eyn}.  Let us note that there also exist similar deformations with respect to other parameters in the weight which yield closed form expressions for the first logarithmic derivative of the partition function, see \cite{DIK}, \cite{IK}, \cite{Kra}.

If we define the free energy as
\begin{equation}\label{dope10}
F_n^{(DOPE)}=-\frac{1}{n^2}\log Z_n^{(DOPE)},
\end{equation}
then (\ref{dope9}) reads as
\begin{equation}\label{dope11}
\begin{aligned}
\frac{\d^2}{\d a^2} F_n^{(DOPE)}&=-\left(\frac{\pi^2}{2}\right)^2 B_{n,n}^{(\al)}\bigg(B_{n,n-1}^{(\al)}+B_{n,n+1}^{(\al)}+\big(A_{n,n}^{(\al)}(a)+A_{n,n-1}^{(\al)}(a)\big)^2\bigg) \\
&=-\left(\frac{\pi^2}{2}\right)^2 \left(\frac{h_{n,n}^{(\al)}(a)}{h_{n,n-2}^{(\al)}(a)}+\frac{h_{n,n+1}^{(\al)}(a)}{h_{n,n-1}^{(\al)}(a)}+\frac{h_{n,n}^{(\al)}(a)}{h_{n,n-1}^{(\al)}(a)}\big(A_{n,n}^{(\al)}(a)+A_{n,n-1}^{(\al)}(a)\big)^2\right).
\end{aligned}
\end{equation}
In light of the deformation equations (\ref{dope9}) and (\ref{dope11}), let us write
\begin{equation}\label{dope11a}
Z_n^{(DOPE)} \equiv Z_n^{(DOPE)}(a)\,, \qquad F_n^{(DOPE)} \equiv F_n^{(DOPE)}(a).
\end{equation}

All correlation functions for this ensemble can be written in terms of a reproducing kernel which is defined in terms of orthogonal polynomials.  
Introduce the $\psi$-functions
\begin{equation}\label{dope12}
\psi_{n,k}(x)=\frac{1}{\sqrt{h_{n,k}^{(\al)}}}P_{n,k}^{(\al)}(x)\frac{e^{-nV(x)/2}}{\sqrt{n}}\,,
\end{equation}
and the Christoffel-Darboux kernel
\begin{equation}\label{dope13}
K_n(x,y)=\sum_{k=0}^{n-1}\psi_{n,k}(x)\psi_{n,k}(y).
\end{equation}
Then the $m$-point correlation function $R_{m,n}(\la_1,\dots \la_n)$ is defined by the formula
\begin{equation}\label{dope14}
R_{m,n}(\la_1,\dots \la_m)=\det\big(K_n(\la_k,\la_l)\big)_{k,l=1}^m.
\end{equation}

The proof of Theorem \ref{maxheight} requires an asymptotic formula for $h_{n,k}^{(\al)}(a)$ in the scaling limit $a\to 1$ as $n\to \infty$.  In order to state that expansion, let us first fix some notations.  Let the parameter $s$ be defined in terms of $a$ as
\begin{equation}\label{main1}
s\equiv s(a;n)=-\left[(3\pi n)\left(z_1-\int_0^{z_1} \rho(\eta) d\eta \right)\right]^{2/3},
\end{equation}
where
\begin{equation}\label{main2}
z_1=\frac{2}{\pi a} \sqrt{a-1}\,, \quad \rho(\eta)=\frac{\pi a}{2}\sqrt{\frac{4}{\pi^2 a}-\eta^2}\,.
\end{equation}
One may check that as $a\to 1$, 
\begin{equation}\label{main3}
s=2^{2/3} n^{2/3} \left((1-a)+\frac{4}{5}(1-a)^2+\frac{122}{175}(1-a)^3+O\big((1-a)^4\big)\right),
\end{equation}
and thus if 
\begin{equation}\label{main4}
a=1- x n^{-\de} +O(n^{-\ep})\,, \quad 0<\de<\ep,
\end{equation}
then
\begin{equation}\label{main5}
\lim_{n\to\infty} s=\left\{ 
\begin{aligned}
(\textrm{sgn}\, x) \infty \quad \textrm{if} \quad 0<\de<2/3\\
2^{2/3} x \quad \textrm{if} \quad \de=2/3 \\
0 \quad \textrm{if} \quad 2/3<\de.
\end{aligned}
\right.
\end{equation}
Let $q(s)$ be the Hastings-McLeod solution to the Painlev\'{e} II equation, and $R(s)$ be as defined in (\ref{in6}).  We then have the following proposition.

\begin{prop}\label{asymptoticsh} (Asymptotics in discrete Gaussian orthogonal polynomials)
Consider the orthogonal polynomials defined in (\ref{dope2}) such that $(1-a)n^{2/3}$ remains bounded as $n\to \infty$.  Let $s$ be defined in terms of $a$ as in (\ref{main1}).  The normalizing constants for the orthogonal polynomials (\ref{dope2}) satisfy, as $n\to \infty$,
\begin{equation}\label{main6}
\begin{aligned}
h_{n,n}^{(\al)}(a)&=\frac{2}{\sqrt{a}}\left(\frac{1}{\pi^2 a e}\right)^n\left(1-\frac{2^{2/3}}{n^{1/3}}T_n(s)+\frac{2^{1/3}}{n^{2/3}}U_n(s)+O(n^{-1})\right)  \\
h_{n,n-1}^{(\al)}(a)^{-1}&=\frac{1}{2\sqrt{a}\pi^2}\left({\pi^2 a e}\right)^n\left(1+ \frac{2^{2/3}}{n^{1/3}}T_{n-1}(s)+\frac{2^{1/3}}{n^{2/3}}U_{n-1}(s)+O(n^{-1})\right) .
\end{aligned}
\end{equation}
where
\begin{equation}\label{main7}
\begin{aligned}
T_n(s)&=R(s)-(-1)^n \cos(2\pi\al)q(s)\,, \\
U_n(s)&=R^2(s)-(-1)^n\cos(2\pi\al)\big(q'(s)+2q(s)R(s)\big)-q^2(s)\sin^2(2\pi\al).
\end{aligned}
\end{equation}
The recurrence coefficients $A_{n,n-1}^{(\al)}$ satisfy
\begin{equation}\label{main7a}
A_{n,n-1}^{(\al)}(a)=\frac{(-1)^n 2^{4/3} \sin(2\pi \al) }{\pi n^{1/3}}\left(2^{1/3} q(s)+\frac{q'(s)}{n^{1/3}}+O(n^{-2/3})\right).
\end{equation}
\end{prop}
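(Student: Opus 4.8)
The plan is to prove Proposition~\ref{asymptoticsh} by a Deift--Zhou nonlinear steepest descent analysis of the Riemann--Hilbert problem for the discrete orthogonal polynomials (\ref{dope2}), following the framework of \cite{BKMM} and \cite{BL} for discrete weights on $\R$, but replacing their regularity hypothesis near the constraint by a local parametrix built from the Painlev\'e~II equation. First I would encode $P_{n,k}^{(\al)}$ and $h_{n,k}^{(\al)}$ in the standard $2\times2$ interpolation problem: $Y(z)$ analytic in $\C\setminus L_{n,\al}$, with simple poles at the nodes of $L_{n,\al}$ whose residues are prescribed by the weight $e^{-n\pi^2 a x^2/2}$, normalized so that $Y(z)z^{-n\sigma_3}\to I$ as $z\to\infty$. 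The lattice offset enters here: writing the interpolation data in terms of the meromorphic function $\big(1-e^{2\pi i(nz+\al)}\big)^{-1}$, whose poles lie exactly on $L_{n,\al}$, and conjugating $Y$ by an appropriate triangular factor turns the interpolation problem into a genuine Riemann--Hilbert problem with jumps on horizontal contours just above and below $\R$; the phases $e^{\pm2\pi i\al}$ that ultimately produce the $\cos(2\pi\al)$ and $\sin(2\pi\al)$ of (\ref{main7}) first appear at this step.

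Next comes the $g$-function. I would use the equilibrium measure $\mu^\ast$ for the external field $V(x)=\pi^2ax^2/2$ subject to the upper constraint $d\mu^\ast/dx\le1$: for $0<1-a$ small it is a small perturbation of the semicircle-type law, with density strictly below $1$ but approaching $1$ quadratically at the single point $x=0$ as $a\uparrow1$, the local data being encoded by the band edge $z_1$ and density $\rho$ of (\ref{main2}). With $g(z)=\int\log(z-x)\,d\mu^\ast(x)$ and the associated $\phi$-functions one performs the usual chain $Y\mapsto T\mapsto S$ (normalization at infinity by $g$, opening of lenses around the bands, and the complementary ``upper-constraint'' lens opening around the saturated neighborhood of the origin as in \cite{BKMM}), so that $S$ has jumps exponentially close to $I$ away from the soft band edges and from $x=0$. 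The outer parametrix $N(z)$ is built from the band structure together with the Szeg\H{o}-type function of the weight, and at the soft edges $\pm z_1$ (and at the outer endpoints of the support) one uses the standard Airy parametrices. The essential new ingredient is the local parametrix in a fixed disk about $x=0$: since $d\mu^\ast/dx$ is approaching the constraint $1$ there, the model problem is solved by the $\Psi$-function of the homogeneous Painlev\'e~II equation (the Flaschka--Newell Lax pair for the Hastings--McLeod solution of (\ref{in3})), with a conformal coordinate chosen so that the natural local scale is $n^{1/3}$ and the Painlev\'e parameter is exactly the $s$ of (\ref{main1}); this is the analogue, ``dual'' to saturation, of the Painlev\'e~II parametrix that appears when the eigenvalue density vanishes at a point (\cite{BDJ}, \cite{BI1}, \cite{CK}), and it is closest in spirit to \cite{BJ}. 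The functions $q(s)$, $q'(s)$ and $R(s)=\int_s^\infty q(\eta)^2\,d\eta$ enter through the large-argument expansion of this $\Psi$-function, and the residual lattice phases $e^{\pm2\pi i\al}$ that persist into the jump of $S$ at $x=0$ combine with them to give the precise combinations in (\ref{main7}).

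Finally, set $R(z)=S(z)P(z)^{-1}$, where $P$ denotes the outer or local parametrix on each region. Then $R$ solves a small-norm Riemann--Hilbert problem with jump $I+O(n^{-1/3})$ on the boundary of the Painlev\'e disk and $I+O(e^{-cn})$ elsewhere, so $R(z)=I+n^{-1/3}R^{(1)}(z)+n^{-2/3}R^{(2)}(z)+O(n^{-1})$ with $R^{(1)},R^{(2)}$ given by residue calculus from the expansion of the parametrix mismatch on that circle. Unwinding $Y\mapsto T\mapsto S\mapsto R$ produces the full expansion $Y(z)=\big(I+Y_1z^{-1}+Y_2z^{-2}+\cdots\big)z^{n\sigma_3}$ at $z\to\infty$, and the classical identities expressing $h_{n,n}^{(\al)}$, $\big(h_{n,n-1}^{(\al)}\big)^{-1}$, $B_{n,n}^{(\al)}$ and $A_{n,n}^{(\al)}$ through the entries of $Y_1$ and $Y_2$ (the $n^{-1}$ prefactor and the $e^{-nV}$ being carried along by the transformations) yield (\ref{main6}); the $q'(s)$ terms in $U_n$ there are the $n^{-1/3}$ correction in the large-argument expansion of the Painlev\'e $\Psi$-function, and the parity factors $(-1)^n$ come from evaluating the lattice phase $e^{2\pi inz}$ near $z=0$ against the parity of the exponent in $z^{n\sigma_3}$. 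The coefficient $A_{n,n-1}^{(\al)}$ of (\ref{main7a}) follows from the same analysis by extracting recurrence coefficients at two consecutive degrees, or via the lattice-rescaling relations (\ref{dope4a}).

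I expect the main obstacle to be the construction and matching of the Painlev\'e~II local parametrix: one must produce a $\Psi$-function for the Painlev\'e~II Lax pair whose Stokes data are compatible with the discrete, saturated jump structure of $S$ at $x=0$, show that it matches $N(z)$ on the disk boundary to relative order $n^{-1}$ (which forces the conformal map to reproduce the exact $s$ of (\ref{main1}) rather than merely its leading term (\ref{main3})), and then push its large-argument expansion far enough to read off \emph{two} subleading corrections cleanly --- this is precisely where the upper constraint changes the analysis compared with \cite{BKMM} and \cite{BL}. As consistency checks I would verify that $s\to+\infty$ (so $q,R\to0$) recovers the subcritical asymptotics of \cite{BKMM}, that $s\to-\infty$ recovers the supercritical asymptotics of \cite{BL} via the Painlev\'e~II connection formulas, and that (\ref{main6})--(\ref{main7a}) are compatible with the deformation equation (\ref{dope11}).
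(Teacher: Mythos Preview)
Your proposal is essentially the paper's own proof: the interpolation problem is reduced to an RHP via the lattice function $\sin(n\pi z+\pi\al)$, then normalized by the $g$-function of the (unconstrained, since $a<1$ in the critical window) semicircle equilibrium measure, lenses are opened with a dedicated sector at the origin, Airy parametrices are placed at the outer band edges $\pm b=\pm 2/(\pi\sqrt a)$, a Painlev\'e~II local parametrix is built at the origin with the conformal map chosen so that the Painlev\'e parameter is exactly the $s$ of (\ref{main1}), and $h_{n,n}$, $h_{n,n-1}^{-1}$, $A_{n,n-1}$ are read off from the first two large-$z$ coefficients of the resulting small-norm solution. Two minor corrections to your description: there are no separate Airy parametrices at $\pm z_1$ --- in the critical regime $a<1$ these points are purely imaginary and enter only as the stationary points that define the conformal map at the origin --- and the outer parametrix is the bare one-cut model $((z+b)/(z-b))^{1/4}$ with no Szeg\H{o} factor.
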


Let us also note a formula for ratios of normalizing constants which will be useful in the proofs of Theorems \ref{maxheight} and \ref{dopefreeenergy}.  Introduce the notations
\begin{equation}\label{ins1}
s_+=s(a\xi_+;n+1)\,, \qquad s_-=s(a\xi_-;n-1)\,, \qquad s=s(a;n)\,,
\end{equation}
where $s(a;n)$ is defined in (\ref{main1}), and $\xi_{\pm}$ is as defined in (\ref{dope4b}).  Assume that $a=1-xn^{-2/3}$ for $x\in \R$.  A direct application of Proposition \ref{asymptoticsh} gives
\begin{equation}\label{ins2}
\begin{aligned}
\frac{h_{n,n}^{(\al)}(a)}{h_{n-1,n-2}^{(\al)}(\xi_-a)}&=\frac{1}{\pi^4 a^2 e^2} \left(1+\frac{2^{2/3}}{n^{1/3}} \big(T_n(s_-)-T_n(s)\big)\right. \\
&\left. \hspace{4cm} + \frac{2^{1/3}}{n^{2/3}} \big( U_n(s_-)+U_n(s)-2T_n(s)T_n(s_-)\big) +O(n^{-1})\right)\,, \\
\frac{h_{n+1,n+1}^{(\al)}(\xi_+a)}{h_{n,n-1}^{(\al)}(a)}&=\frac{1}{\pi^4 a^2 e^2} \left(1+\frac{2^{2/3}}{n^{1/3}} \big(T_{n+1}(s)-T_{n+1}(s_+)\big) \right. \\
&\left. \hspace{3cm} + \frac{2^{1/3}}{n^{2/3}} \big( U_{n+1}(s_+)+U_{n+1}(s)-2T_{n+1}(s)T_{n+1}(s_+)\big) +O(n^{-1})\right)\,.
\end{aligned}
\end{equation}
From (\ref{main3}) we see that
\begin{equation}\label{ins3}
s-s_\pm = \pm \frac{2^{2/3}}{n^{1/3}}+O(n^{-1})\,,
\end{equation} 
and thus we can rewrite (\ref{ins2}) as
\begin{equation}\label{ins4}
\begin{aligned}
\frac{h_{n,n}^{(\al)}(a)}{h_{n-1,n-2}^{(\al)}(\xi_-a)}&=\frac{1}{\pi^4 a^2 e^2} \left(1+ \frac{2^{4/3}}{n^{2/3}} \big( T_n'(s)+U_n(s)-T_n(s)^2\big) +O(n^{-1})\right)\,, \\
\frac{h_{n+1,n+1}^{(\al)}(\xi_+a)}{h_{n,n-1}^{(\al)}(a)}&=\frac{1}{\pi^4 a^2 e^2} \left(1+ \frac{2^{4/3}}{n^{2/3}} \big(T_{n+1}'(s) +U_{n+1}(s)-T_{n+1}(s)^2\big) +O(n^{-1})\right)\,.
\end{aligned}
\end{equation}
In fact, it easy to see, using $R'(s)=-q^2(s)$, that $U_n(s)-T_n(s)^2=T_n'(s)$, and thus we have
\begin{equation}\label{ins5}
\begin{aligned}
\frac{h_{n,n}^{(\al)}(a)}{h_{n-1,n-2}^{(\al)}(\xi_-a)}&=\frac{1}{\pi^4 a^2 e^2} \left(1+ \frac{2^{7/3}T_n'(s)}{n^{2/3}}  +O(n^{-1})\right)\,, \\
\frac{h_{n+1,n+1}^{(\al)}(\xi_+a)}{h_{n,n-1}^{(\al)}(a)}&=\frac{1}{\pi^4 a^2 e^2} \left(1+ \frac{2^{7/3}T_{n+1}'(s)}{n^{2/3}} +O(n^{-1})\right)\,.
\end{aligned}
\end{equation}

A study of the asymptotic properties of the discrete Gaussian orthogonal polynomials in the critical scaling above naturally leads to asymptotic results for the discrete Gaussian orthogonal polynomial ensemble (\ref{dope5}) in the critical scaling such that the distribution of particles is approaching saturation.  The theorems below give such results, and we note that they are nearly identical to the results in \cite{BI1}, \cite{BI2}, and \cite{CK}, which concern a random matrix model for which the distribution of eigenvalues is vanishing at a single point.

We first compare the free energy in the discrete Gaussian orthogonal polynomial ensemble with that of the Gaussian unitary ensemble.  The free energy of GUE is defined as
\begin{equation}\label{main18}
F_n^{(GUE)}=-\frac{1}{n^2}\log Z_n^{(GUE)}\,, \quad Z_n^{(GUE)}=\int \cdots \int_{\R^n} \prod_{1\le j < k \le n} (x_k-x_j)^2 \exp\left(-\sum_{j=1}^n x_j^2\right)dx_1\cdots dx_n.
\end{equation}

\begin{theo}\label{dopefreeenergy} (Free energy in the discrete Gaussian orthogonal polynomial ensemble)
Let the parameter $a$ be such that $(1-a)n^{2/3}$ remains bounded as $n \to \infty$.
The free energy of the discrete orthogonal polynomial ensemble, $F_n^{(DOPE)}(a)$, defined in (\ref{dope6}) and (\ref{dope10}), satisfies as $n\to \infty$ 
\begin{equation}\label{main19}
F_n^{(DOPE)}(a)-F_n^{(GUE)}=\frac{\log(a)}{2}-\frac{1}{2}\log\left(\frac{2}{n\pi^2}\right)-\frac{1}{n^2}\log\bigg[\fcal_2\big(2^{2/3}n^{2/3}(1-a)\big)\bigg]+O(n^{-\de}),
\end{equation}
where $\fcal_2(x)$ is the Tracy-Widom distribution function associated with the largest eigenvalue of GUE, and $2<\de<7/3$.
\end{theo}

We would now like to describe the Christoffel-Darboux kernel (\ref{dope13}) close to the origin as $n\to\infty$.  In order to state the theorem, we need to fix some notation.  Let $\Phi^1(\z;s)$ and $\Phi^2(\z;s)$ be defined via the system of differential equations
\begin{equation}\label{main8}
\begin{aligned}
\frac{\partial}{\partial \z} \begin{pmatrix} \Phi^1(\z;s) \\ \Phi^2(\z;s) \end{pmatrix} &=\begin{pmatrix} 4\z q & 4\z^2 +s +2q^2+2r \\ -4\z^2-s-2q^2+2r & -4\z q \end{pmatrix}\begin{pmatrix} \Phi^1(\z;s) \\ \Phi^2(\z;s) \end{pmatrix} \\
\frac{\partial}{\partial s} \begin{pmatrix} \Phi^1(\z;s) \\ \Phi^2(\z;s) \end{pmatrix} &=\begin{pmatrix}  q & \z \\ -\z & -q \end{pmatrix}\begin{pmatrix} \Phi^1(\z;s) \\ \Phi^2(\z;s) \end{pmatrix}\,, \\
\end{aligned}
\end{equation}
satisfying the properties that $\Phi^1(\z;s)$ and $\Phi^2(\z;s)$ are real for real $\z$ and $s$, 
\begin{equation}\label{main9}
\Phi^1(-\z;s)=\Phi^1(\z;s), \quad \Phi^2(-\z;s)=-\Phi^2(\z;s),
\end{equation}
and have the real asymptotics
\begin{equation}\label{main10}
\Phi^1(\z;s)=\cos\left(\frac{4}{3}\z^3+s\z\right)+O(\z^{-1}), \quad \Phi^2(\z;s)=-\sin\left(\frac{4}{3}\z^3+s\z\right)+O(\z^{-1})\,,
\end{equation}
as $\z \to \pm \infty$.  These are the so-called psi-functions associated with the Painlev\'{e} II equation. We then define the functions
\begin{equation}\label{main11}
\Phi_1(\z;s)=\Phi^1(\z;s)+i\Phi^2(\z;s), \quad \Phi_2(\z;s)=\Phi^1(\z;s)-i\Phi^2(\z;s),
\end{equation}
and the critical kernel
\begin{equation}\label{main12}
K^{crit}(u,v;s)=\frac{-\Phi_1(u;s)\Phi_2(v;s)+\Phi_2(u;s)\Phi_1(v;s)}{2\pi i(u-v)}.
\end{equation}
Let us also give two other expressions for $K^{crit}$ which may be useful for analysis (see e.g., \cite{CK}),
\begin{equation}\label{main13}
\begin{aligned}
K^{crit}(u,v;s)&=\frac{\Phi^1(u;s)\Phi^2(v;s)-\Phi^2(u;s)\Phi^1(v;s)}{\pi(u-v)} \\
&=\frac{1}{\pi}\int_{-\infty}^s\bigg[\Phi^1(u;\xi)\Phi^1(v;\xi)+\Phi^2(u;\xi)\Phi^2(v;\xi)\bigg]d\xi.
\end{aligned}
\end{equation}
As shown in \cite{BI1} and \cite{CK}, this is the limiting correlation kernel of a random matrix model in the case that the limiting distribution of eigenvalues vanishes at a single point.
We have the following expression for the kernel (\ref{dope13}) as $n\to \infty$.

\begin{theo}\label{kernel}(Limiting Christoffel-Darboux kernel near the point of saturation)
Let $K_n(x,y)$ be the Christoffel-Darboux kernel defined in (\ref{dope13}) for the discrete orthogonal polynomial ensemble (\ref{dope5}) with potential $V(x)$ given by (\ref{dope7}).  Consider the scaling
\begin{equation}\label{main14}
a = 1-Ln^{-2/3}, \quad x=\frac{k_n-\al}{n}\sim \frac{u}{cn^{1/3}}, \quad y=\frac{m_n-\al}{n}\sim \frac{v}{cn^{1/3}}, \quad c=\pi2^{-5/3},
\end{equation}
where $k_n$ and $m_n$ are integers.  Then, for $u \neq v$, 
\begin{equation}\label{main15}
\lim_{n\to \infty} (-1)^{k_n+m_n+1} \frac{n^{2/3}}{c} K_n(x,y)=K^{crit}(u,v;s_\infty),
\end{equation}
where
\begin{equation}\label{main16}
s_\infty=2^{2/3}L.
\end{equation}
The diagonal terms satisfy
\begin{equation}\label{main17}
\lim_{n\to \infty}\frac{n^{2/3}}{c}\bigg(1-K_n(x,x)\bigg)=K^{crit}(u,u;s_\infty),
\end{equation}
where $K^{crit}(u,u;s_\infty)$ is obtained from (\ref{main12}) using L'Hospital's rule or directly from (\ref{main13}).
\end{theo}
Theorem \ref{kernel} may have some application to nonintersecting Brownian excursions.  In a recent paper of Rambeau and Schehr \cite{RS}, the authors derive a formula for the joint distribution of the maximal height of the outermost path and the time at which it occurs.  Their formula can be written in terms of the Christoffel-Darboux kernel (\ref{dope13}) \cite{Sch2}, and thus Theorem \ref{kernel} may be of use in the asymptotic analysis of this joint distribution.  In fact, very recently Schehr gave a limiting formula for this joint distribution in the critical scaling which involves the Painlev\'{e} II psi-function \cite{Sch2}.  The argument of \cite{Sch2} is based on a differential Ansatz, and it would be interesting to see if one could give a rigorous verification of that result using Theorem \ref{kernel}.

The rest of this paper is organized as follows.  In section 2, we derive an integral formula for the distribution of the random variable $\max_{0<t<1} b_N(t)$ in terms of discrete Gaussian orthogonal polynomials and use Proposition \ref{asymptoticsh} to evaluate it in the large $N$ limit, which proves Theorem \ref{maxheight}.  In section 3, we prove Theorem \ref{dopefreeenergy} in a similar way.  In section 4, we present the steepest descent analysis of a Riemann-Hilbert problem for the discrete Gaussian orthogonal polynomals by the method of Deift and Zhou \cite{DZ2}, and in section 5 we explicity compute the first two error terms of this analysis in the critical scaling limit.  Finally, in section 6 we use the results of sections 4 and 5 to prove Proposition \ref{asymptoticsh} and Theorem \ref{kernel}.

\medskip

\section{Proof of Theorem \ref{maxheight}}
\subsection{Integral formula for the distribution of the maximal height of $b_N(t)$}
We would like to study the double scaling limit of ({\ref{n4}) as $N \to \infty$, and $M=\sqrt{2N}+kN^{-1/6}$ for some $k\in \R$.  With that in mind, we scale $M$ as $M=\sqrt{\frac{2N}{a}}$, and will study the limit as $N\to \infty$ and $a=1-LN^{-2/3}+O(N^{-4/3})$.  Then formulas (\ref{n4}) become
\begin{equation}\label{n4a}
\begin{aligned}
\mathbb{P} \bigg(\max_{0<t<1} b_N^{(BE)}(t)<M\bigg)&=\frac{a^{N^2+N/2}\pi^{2N^2+N/2}}{2^{N^2}N^{N^2+N/2}\prod_{k=0}^{N-1}(2k+1)!} \prod_{k=0}^{N-1} h_{2k+1}^{(0)}, \\
\mathbb{P} \bigg(\max_{0<t<1} b_N^{(R)}(t)<M\bigg)&=\frac{a^{N^2-N/2}\pi^{2N^2-3N/2}}{2^{N^2-N}N^{N^2-N/2}\prod_{k=0}^{N-1}(2k)!} \prod_{k=0}^{N-1} h_{2k}^{(1/2)}. \\
\end{aligned}
\end{equation}
As proved in Appendix \ref{defOP}, the products of normalizing constants in (\ref{n4a}) satisfy the deformation equations
\begin{equation}\label{n4b}
\begin{aligned}
\frac{d^2}{d a^2}\left(\log \prod_{k=0}^{N-1} h_{2k+1}^{(0)}\right)&=\left(\frac{\pi^2}{4N}\right)^2\frac{h_{2N+1}^{(0)}}{h_{2N-1}^{(0)}}\,, \\
\frac{d^2}{d a^2}\left(\log \prod_{k=0}^{N-1} h_{2k}^{(1/2)}\right)&=\left(\frac{\pi^2}{4N}\right)^2\frac{h_{2N}^{(1/2)}}{h_{2N-2}^{(1/2)}}\,. \\
\end{aligned}
\end{equation}
It follows that, if we denote
\begin{equation}\label{n4d}
F_N^{(BE)}(a)=\log\left[\mathbb{P} \left(\max_{0<t<1} b_N^{(BE)}(t)<\sqrt{\frac{2N}{a}}\right)\right], \quad F_N^{(R)}(a)=\log\left[\mathbb{P} \left(\max_{0<t<1} b_N^{(R)}(t)<\sqrt{\frac{2N}{a}}\right)\right]\,,
\end{equation}
then we have the deformation equations
\begin{equation}\label{n4e}
\frac{d^2}{d a^2} F_N^{(BE)}(a)=-\frac{2N^2+N}{2a^2}+\left(\frac{\pi^2}{4N}\right)^2\frac{h_{2N+1}^{(0)}}{h_{2N-1}^{(0)}}\,, \quad
\frac{d^2}{d a^2} F_N^{(R)}(a)=-\frac{2N^2-N}{2a^2}+\left(\frac{\pi^2}{4N}\right)^2\frac{h_{2N}^{(1/2)}}{h_{2N-2}^{(1/2)}}.
\end{equation}

In the scaling of $M$ described above, we can write the orthogonality condition  (\ref{n5}) as
\begin{equation}\label{rop2}
\sum_{x\in L_{n,\al}} P_k^{(\al)}(nx)P_m^{(\al)}(nx) e^{-nV(x)} =h_k^{(\al)} \de_{km}\,, \quad V(x)=\frac{\pi^2\xi a}{2}x^2\,, \quad \xi=\frac{n}{2N}.
\end{equation}
This is the same orthogonality condition as (\ref{dope2}) with $a \mapsto a\xi$, and thus using (\ref{dope4a}) we can write 
the formulas (\ref{n4e}) as
\begin{equation}\label{rop5}
\begin{aligned}
\frac{d^2}{d a^2} F_N^{(BE)}(a)=-\frac{2N^2+N}{2a^2}+\pi^4N^2\xi_+^{4N+3}\frac{h_{2N+1,2N+1}^{(0)}(a\xi_+)}{h_{2N,2N-1}^{(0)}(a)}\,, \qquad \xi_+=1+\frac{1}{2N}\,, \\
\frac{d^2}{d a^2} F_N^{(R)}(a)=-\frac{2N^2-N}{2a^2}+\pi^4N^2\xi_-^{-4N+3}\frac{h_{2N,2N}^{(1/2)}(a)}{h_{2N-1,2N-2}^{(1/2)}(a\xi_-)}\,, \qquad \xi_-=1-\frac{1}{2N}\,.
\end{aligned}
\end{equation}
Notice that as $a\to 0$, $M\to \infty$.  Since $M$ is typically close to $\sqrt{2N}$, it is reasonable to assume that both $F_N^{(BE)}(a)$ and $F_N^{(R)}(a)$  go to zero very quickly as $a \to 0$.  One might easily guess the following lemma.
\begin{lem}\label{smalla}
\begin{equation}\label{n21}
F_N^{(BE)}(0)=F_N^{(R)}(0)=\frac{d}{d a} F_N^{(BE)}(a) \bigg|_{a=0}=\frac{d}{d a} F_N^{(R)}(a) \bigg|_{a=0}=0.
\end{equation}
Furthermore, as $N\to \infty$, 
\begin{equation}\label{n20a}
\frac{d^2}{d a^2} F_N^{(BE)}(a) \bigg|_{a=0}=O\left(N^{-2}\right)\,, \qquad \frac{d^2}{d a^2} F_N^{(R)}(a) \bigg|_{a=0}=O\left(N^{-2}\right).
\end{equation}
\end{lem}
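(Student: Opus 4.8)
The equalities $F_N^{(BE)}(0)=F_N^{(R)}(0)=0$ are essentially built in: as $a\to 0^+$ one has $M=\sqrt{2N/a}\to\infty$, and the probabilities in (\ref{n4d}) tend to $1$ because the $N$ paths a.s.\ remain in ${\bf W}_N$. The real content of the lemma is that the approach to $1$ is so fast that every $a$-derivative of $F_N$ vanishes at $a=0$. The plan is to make this quantitative by Poisson summation applied to the lattice sums in (\ref{n1}) and (\ref{n1b}); this handles all four vanishing statements at once, and in particular forces $\frac{d^2}{da^2}F_N|_{a=0}=0$, which is trivially $O(N^{-2})$.

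Concretely, write the right-hand side of (\ref{n1}) with $M=\sqrt{2N/a}$ as $C(a)\sum_{{\bf x}\in\Z^N}g({\bf x})$, where $g({\bf x})=(\De({\bf x}^2))^2\big(\prod_j x_j^2\big)\exp(-\tfrac{\pi^2 a}{4N}\|{\bf x}\|^2)$ is a polynomial times a Gaussian (hence Schwartz class) and $C(a)$ is the explicit prefactor of (\ref{n1}), which is proportional to $a^{N^2+N/2}$. Poisson summation gives $\sum_{{\bf x}\in\Z^N}g({\bf x})=\sum_{{\bf m}\in\Z^N}\widehat g({\bf m})$, and similarly with a sign $(-1)^{m_1+\cdots+m_N}$ for the half-integer lattice in (\ref{n1b}). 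The term ${\bf m}=0$ equals $\widehat g(0)=\int_{\R^N}g$, which after the substitution $y_j=x_j^2$ is a Laguerre--Selberg integral with weight $y^{1/2}e^{-y}$ (resp.\ $y^{-1/2}e^{-y}$ in the reflecting case); evaluating it, one checks the normalization $C(a)\,\widehat g(0)=1$ exactly — equivalently, $C(a)\widehat g(0)$ is the $M\to\infty$ limit of the probability. Each Fourier coefficient with ${\bf m}\neq 0$ is a polynomial in ${\bf m}$ and $a^{-1}$ times $e^{-4N\|{\bf m}\|^2/a}$, so $\sum_{{\bf m}\neq 0}\widehat g({\bf m})=O_N(a^{-P}e^{-4N/a})$ and, multiplying through by $C(a)$, $\mathbb{P}(\max_{0<t<1}b_N(t)<M)=1+O_N(a^{-P'}e^{-4N/a})$. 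Hence $F_N^{(BE)}(a)$ and $F_N^{(R)}(a)$ are themselves $O_N(a^{-P'}e^{-4N/a})$ as $a\to 0^+$. Each $F_N$ is real-analytic on $(0,\infty)$, and since this bound is $o(a^k)$ for every $k$, a short argument — e.g.\ integrating $F_N''$, which is absolutely integrable near $0$ by (\ref{n4e}) — shows the one-sided limits of $F_N$, $F_N'$, $F_N''$ at $a=0$ all exist and equal $0$.

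For the second-derivative statement one may instead argue directly from (\ref{n4e}): it suffices to show $\big(\tfrac{\pi^2}{4N}\big)^2 h_{2N+1}^{(0)}(a)/h_{2N-1}^{(0)}(a)=\tfrac{2N^2+N}{2a^2}+O(e^{-cN/a})$ as $a\to 0$, together with the analogue for $h_{2N}^{(1/2)}(a)/h_{2N-2}^{(1/2)}(a)$ with $2N^2-N$. This reduces to one-dimensional Poisson summation: for $\al\in\{0,\tfrac12\}$ the moments $\sum_{x\in\Z-\al}x^je^{-\frac{\pi^2 a}{4N}x^2}$ agree with the Gaussian moments $\int_\R x^je^{-\frac{\pi^2 a}{4N}x^2}\,dx$ up to a term exponentially small in $1/a$ (the odd moments vanishing identically by symmetry), so the same is true of the Hankel determinants, and hence $h_k^{(\al)}(a)=\big(\tfrac{\pi^2 a}{4N}\big)^{-k-1/2}\tfrac{k!\sqrt\pi}{2^k}\big(1+O(e^{-cN/a})\big)$, the rescaled monic Hermite normalization. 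The elementary identity $\tfrac{(2N+1)!}{(2N-1)!}\cdot 2^{-2}=\tfrac{N(2N+1)}{2}$ then makes the leading term of $\big(\tfrac{\pi^2}{4N}\big)^2 h_{2N+1}^{(0)}/h_{2N-1}^{(0)}$ cancel the singular term $-\tfrac{2N^2+N}{2a^2}$ exactly, leaving an exponentially small remainder that tends to $0$ as $a\to 0$.

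There is nothing conceptually difficult here; the only steps requiring care are the Selberg evaluation giving $C(a)\widehat g(0)=1$ (or, if one prefers to avoid it, the probabilistic identification of the $M\to\infty$ limit) and keeping track of the $N$-dependence of $P$ and $c$ in the Fourier-tail bounds — both routine given the explicit formulas. I expect the Selberg normalization to be the one place where a computational slip is easy.
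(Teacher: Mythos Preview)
Your approach via Poisson summation is correct and in fact yields a stronger conclusion than the paper needs: the nonzero Fourier modes contribute $O_N(a^{-P}e^{-4N/a})$, so after taking logarithms $F_N(a)$ and each of its $a$-derivatives vanish at $a=0^+$, giving $F_N''(0)=0$ outright rather than merely $O(N^{-2})$.  The paper instead uses an elementary Euler--Maclaurin type estimate: one writes the lattice sum as a Riemann sum for the Laguerre--Selberg integral with mesh $\ep=\pi/(M\sqrt2)\sim\sqrt{a/N}$, Taylor-expands each cell, and checks by integration by parts that the first two correction integrals $\int A_1$ and $\int A_2$ vanish, forcing the error to be $O(\ep^4)=O(a^2/N^2)$.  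Your method is sharper and makes the uniformity in $N$ automatic; the paper's method avoids any Fourier analysis and only needs the vanishing of two explicit integrals.

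One small point to tighten: the parenthetical ``integrating $F_N''$, which is absolutely integrable near $0$ by (\ref{n4e})'' is not self-contained, since (\ref{n4e}) exhibits $F_N''$ as a difference of two terms each of size $a^{-2}$, and the integrability is precisely the cancellation you are trying to establish.  You do supply the missing ingredient in your third paragraph (one-dimensional Poisson summation on the moments $\sum x^j e^{-\pi^2 a x^2/4N}$ shows $h_k^{(\al)}$ agrees with the Hermite normalizing constant up to $O(e^{-cN/a})$, and then the ratio reproduces $\tfrac{2N^2\pm N}{2a^2}$ exactly), so the argument is complete; but the cleanest route is simply to differentiate the Poisson-summed expression for the probability termwise, which immediately gives $F_N^{(k)}(a)=O_N(e^{-c/a})$ for every $k$.
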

We leave the proof of this lemma the Appendix \ref{prooflem1}.  From (\ref{n21}) and (\ref{rop5}), we easily obtain the following integral representations for $F_N^{(BE)}(a)$ and $F_N^{(R)}(a)$.
\begin{prop}
The functions $F_N^{(BE)}(a)$ and $F_N^{(R)}(a)$, defined in (\ref{n4d}), have the integral representations
\begin{equation}\label{n19}
F_N^{(BE)}(a)=\int_0^a \int_0^u G^{(BE)}(r)\, dr\,du\,, \qquad F_N^{(R)}(a)=\int_0^a \int_0^u G^{(R)}(r) \,dr\,du\,, 
\end{equation}
where 
\begin{equation}
\begin{aligned}
G^{(BE)}(r)&=-\frac{2N^2+N}{2r^2}+\pi^4N^2\xi_+^{4N+3}\frac{h_{2N+1,2N+1}^{(0)}(r\xi_+)}{h_{2N,2N-1}^{(0)}(r)}\,, \\ G^{(R)}(r)&=-\frac{2N^2-N}{2r^2}+\pi^4N^2\xi_-^{-4N+3}\frac{h_{2N,2N}^{(1/2)}(r)}{h_{2N-1,2N-2}^{(1/2)}(r\xi_-)}\,.
\end{aligned}
\end{equation}

\end{prop}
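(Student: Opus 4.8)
The plan is to obtain the two integral representations by integrating the deformation identities (\ref{rop5}) twice and feeding in the boundary data at $a=0$ supplied by Lemma \ref{smalla}. I would carry out the argument for $F_N^{(BE)}(a)$ in detail; the reflecting case is word-for-word the same with the obvious replacements of superscripts and of $\xi_+$ by $\xi_-$.

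The one preliminary point I would check is that $G^{(BE)}(r)$, the right-hand side of the first identity in (\ref{rop5}), defines a function that is continuous --- hence integrable --- on the closed interval $[0,a]$. For $r>0$ this is immediate: the normalizing constants $h_{n,k}^{(\al)}(r)$ depend analytically on the weight parameter wherever the series defining the orthogonality relation (\ref{dope2}) converges, so the ratio appearing in $G^{(BE)}$ is smooth for $r>0$. At the endpoint $r=0$ each of the two terms of $G^{(BE)}$ diverges on its own, but their sum is by construction equal to $\tfrac{d^2}{da^2}F_N^{(BE)}(a)$, and Lemma \ref{smalla}, specifically the estimate (\ref{n20a}), tells us that this has a finite limit (of order $N^{-2}$) as $a\to 0^+$. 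Hence $G^{(BE)}$ extends to a continuous function on all of $[0,a]$, and the double integral in (\ref{n19}) is well defined.

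Granting this, the argument is two applications of the fundamental theorem of calculus. Integrating the first identity in (\ref{rop5}) from $0$ to $u$ and using $\tfrac{d}{da}F_N^{(BE)}(a)\big|_{a=0}=0$ from (\ref{n21}) gives $\tfrac{d}{da}F_N^{(BE)}(u)=\int_0^u G^{(BE)}(r)\,dr$; integrating this from $0$ to $a$ and using $F_N^{(BE)}(0)=0$, again from (\ref{n21}), yields precisely the stated formula for $F_N^{(BE)}(a)$. The same two integrations applied to the second identity in (\ref{rop5}), together with the vanishing of $F_N^{(R)}(0)$ and of $\tfrac{d}{da}F_N^{(R)}(a)\big|_{a=0}$, give the representation for $F_N^{(R)}(a)$. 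I do not expect any real obstacle here: the substantive content is entirely in the two ingredients already in hand --- the closed-form second derivative (\ref{rop5}), coming from the Toda-lattice deformation equation, and the boundary behaviour at $a=0$, which is Lemma \ref{smalla}. The only place where a bit of care is called for is the integrability of $G^{(BE)}$ and $G^{(R)}$ at the origin, and this is exactly why Lemma \ref{smalla} is stated with the quantitative bound (\ref{n20a}) and not merely the vanishing (\ref{n21}).
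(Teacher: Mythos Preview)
Your proposal is correct and follows exactly the paper's approach: the paper simply states that the integral representations follow ``easily'' from (\ref{n21}) and (\ref{rop5}), which is precisely the two-fold integration of the deformation identity using the boundary data from Lemma \ref{smalla} that you carry out. Your additional remark on the integrability of $G^{(BE)}$ and $G^{(R)}$ at $r=0$ via (\ref{n20a}) is a point the paper leaves implicit, so your write-up is if anything more careful than the original.
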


\subsection{Evaluation of the integrals (\ref{n19})}
We would like to evaluate the integrals (\ref{n19}) in the limit as $N\to \infty$, and 
\begin{equation}\label{rop0}
a=1-LN^{-2/3}+O(N^{-4/3}).
\end{equation}
Let us write (\ref{n19}) as
\begin{equation}\label{rop7}
F_N^{(BE)}(a)=I_0^{(BE)}+I_1^{(BE)}+I_2^{(BE)}, \qquad F_N^{(R)}(a)=I_0^{(R)}+I_1^{(R)}+I_2^{(R)},
\end{equation}
where 
\begin{equation}\label{rop8}
\begin{aligned}
I_0^{(BE)}&= \int_0^{1-N^{-\ep}}\int_0^u G^{(BE)}(r) \, dr\,du \,, \qquad &I_1^{(BE)}&=&\int_{1-N^{-\ep}}^a \int_0^{1-N^{-\ep}} G^{(BE)}(r) \, dr\,du \,, \\ I_2^{(BE)}&=\int_{1-N^{-\ep}}^a \int_{1-N^{-\ep}}^u G^{(BE)}(r) \, dr\,du \,, \qquad &I_0^{(R)}&= &\int_0^{1-N^{-\ep}}\int_0^u G^{(R)}(r) \, dr\,du\,, \\
 I_1^{(R)}&=\int_{1-N^{-\ep}}^a \int_0^{1-N^{-\ep}} G^{(R)}(r) \, dr\,du \,, \qquad &I_2^{(R)}&=&\int_{1-N^{-\ep}}^a \int_{1-N^{-\ep}}^u G^{(R)}(r) \, dr\,du \,.
\end{aligned}
\end{equation}
for some $0<\ep<2/3$.

Consider first $I_0$.  We need a large $n$ formula for the normalizing constants $h_{n,k}^{(\al)}(r)$ when $r < 1-n^{-\ep}$.  In this case the asymptotics of $h_{n,k}^{(\al)}(r)$ match the asymptotics of the corresponding system of continuous orthogonal polynomials, the (monic and rescaled) Hermite polynomials.  We have the following lemma, whose proof is given in Appendix \ref{sub}.
\begin{lem}\label{subcrit}
Let $h_{n,k}^{(\al)}(a)$ be defined as in (\ref{dope2}) and (\ref{dope3}).  Let the parameter $a$ be such that $a<1-n^{-\ep}$ for some $0<\ep<2/3$.  Then as $n \to \infty$, 
 \begin{equation}\label{calc1}
 \begin{aligned}
h_{n,n}^{(\al)}(a)&=\frac{2}{\sqrt{a}}\left(\frac{1}{\pi^2 a e}\right)^n \left(1+\frac{1}{12n}+\frac{1}{288n^2} -\frac{139}{51840 n^3} +O(n^{-4})\right)  \,, \\
h_{n,n-1}^{(\al)}(a)^{-1}&=\frac{1}{2\pi^2\sqrt{a}}\left({\pi^2 a e}\right)^n\left(1-\frac{1}{12n}+\frac{1}{288n^2} +\frac{139}{51840 n^3} +O(n^{-4})\right) \,.
\end{aligned}
\end{equation}
\end{lem}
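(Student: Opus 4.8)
The plan is to reduce the problem to the exactly solvable continuous case. When $a<1-n^{-\ep}$ with $0<\ep<2/3$, the equilibrium measure attached to the discrete Gaussian weight does not feel the upper constraint: by \eqref{main2} its density is maximal at the origin, where it equals $\sqrt a$, so it stays below the constraint value $1$ by a margin $1-\sqrt a$ which is bounded below by $\tfrac12 n^{-\ep}$ for large $n$. Hence the saturated region is empty and, in this range of $a$, the interpolation (Riemann--Hilbert) problem characterizing the discrete Gaussian orthogonal polynomials of degree close to $n$ is governed by the ordinary one-cut steepest-descent analysis of \cite{BKMM} (the same machinery used in Section 4, but now with no Painlev\'{e} parametrix): after the usual $g$-function conjugation and opening of lenses, all jump matrices away from the two soft edges $\pm 2/(\pi\sqrt a)$ are $I+O(e^{-cn^{\kappa}})$ with $\kappa=1-\ep>0$ --- the auxiliary contour carrying the effect of the discrete support may be pushed a distance of order $n^{-\ep}$ off the real axis, where the phase responsible for that term has the favorable sign precisely because the gap to the constraint is that large --- while the two edges are resolved by the classical Airy parametrices, exactly as for a continuous weight.

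I would then compare this model problem with the Riemann--Hilbert problem for the monic polynomials orthogonal on $\R$ with respect to the continuous Gaussian weight $e^{-n\pi^2 a x^2/2}$, i.e.\ the appropriately rescaled Hermite polynomials. The two model problems coincide (same external field, same band, same local parametrices), so their solutions --- in particular their normalizing constants --- agree up to relative error $O(e^{-cn^{\kappa}})=O(n^{-\infty})$:
\[
h_{n,n}^{(\al)}(a)=h_n^{\mathrm{Herm}}\bigl(1+O(n^{-\infty})\bigr),\qquad h_{n,n-1}^{(\al)}(a)=h_{n-1}^{\mathrm{Herm}}\bigl(1+O(n^{-\infty})\bigr),
\]
where, by a direct scaling of the Hermite polynomials, $h_k^{\mathrm{Herm}}=\sqrt{2\pi}\,k!\,(n\pi^2 a)^{-k-1/2}$ is known exactly. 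The point of routing through the continuous problem is that its exact solvability removes the need to expand the error matrix $R$ of the Riemann--Hilbert analysis to three orders: the $n^{-3}$ precision in \eqref{calc1} is then supplied entirely by Stirling's formula. (The same $O(n^{-\infty})$ bound can be motivated more elementarily by Poisson summation --- the orthogonality sum differs from the corresponding Gaussian integral by $O(e^{-cn})$ --- but converting that into a statement about the norms requires the Riemann--Hilbert structure, since the Hankel matrices of Gaussian moments are badly conditioned.)

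It remains to expand $h_n^{\mathrm{Herm}}$ and $h_{n-1}^{\mathrm{Herm}}$. Inserting Stirling's series $\log\Gamma(z)=(z-\tfrac12)\log z-z+\tfrac12\log(2\pi)+\tfrac{1}{12z}-\tfrac{1}{360z^3}+O(z^{-5})$ with $z=n$ and with $z=n-1$, and re-expanding the $(n-1)$-dependent terms about $n$, gives
\[
h_{n,n}^{(\al)}(a)=\frac{2}{\sqrt a}\Bigl(\frac{1}{\pi^2 a e}\Bigr)^n\exp\!\Bigl(\frac{1}{12n}-\frac{1}{360 n^3}+O(n^{-4})\Bigr),
\]
\[
h_{n,n-1}^{(\al)}(a)^{-1}=\frac{1}{2\pi^2\sqrt a}\,(\pi^2 a e)^n\exp\!\Bigl(-\frac{1}{12n}+\frac{1}{360 n^3}+O(n^{-4})\Bigr),
\]
and expanding the exponentials, using $\tfrac{1}{10368}-\tfrac{1}{360}=-\tfrac{139}{51840}$, reproduces \eqref{calc1}.

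The main obstacle is the uniformity needed in the first step: the subcritical window $a<1-n^{-\ep}$ is not fixed but shrinks toward the critical point at rate $n^{-\ep}$, so one must track the $n$-dependence of the equilibrium measure and verify that the auxiliary contour enclosing the lattice can be chosen uniformly, at distance of order $n^{-\ep}$ from $\R$, inside the region where the relevant phase is favorable. This is exactly what forces the discretization corrections to be of size $O(e^{-cn^{\kappa}})$ with $\kappa>0$ --- hence negligible to all polynomial orders and independent of $\al$ --- and once it is established the remainder of the argument is bookkeeping with Stirling's series.
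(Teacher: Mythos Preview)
Your proposal is correct and follows essentially the same route as the paper: compare the discrete Riemann--Hilbert problem with the continuous one for the rescaled Hermite polynomials, observe that in the subcritical window the discretization-induced jumps are exponentially small so the two solutions agree to all polynomial orders, and then read off \eqref{calc1} from the exact Hermite normalizing constants via Stirling's series. The paper obtains the exponential smallness slightly differently---by invoking the decay \eqref{pii7a} of the Painlev\'e parametrix for large positive $s$ rather than by pushing the auxiliary contour off the axis---but the underlying idea and the subsequent Stirling bookkeeping are the same.
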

Combining these, we find that, as $N\to \infty$,
\begin{equation}\label{i02}
\begin{aligned}
\frac{h_{2N+1,2N+1}^{(0)}(r\xi_+)}{h_{2N,2N-1}^{(0)}(r)}&=\frac{1}{\pi^4 r^2 e^2}\left(1-\frac{1}{2N}+\frac{5}{24N^2}-\frac{1}{12N^3}+O(N^{-4})\right)\,, \\
\frac{h_{2N,2N}^{(1/2)}(r)}{h_{2N-1,2N}^{(1/2)}(r\xi_-)}&=\frac{1}{\pi^4 r^2 e^2}\left(1+\frac{1}{2N}+\frac{5}{24N^2}+\frac{1}{12N^3}+O(N^{-4})\right)\,,
\end{aligned}
\end{equation}
Inserting this asymptotic formula into the integrand of $I_0^{(BE)}$ and $I_0^{(R)}$ gives
\begin{equation}\label{i03}
\begin{aligned}
I_0^{(BE)}&= \int_0^{1-N^{-\ep}}\int_0^u \left(-\frac{2N^2+N}{2r^2}+\frac{N^2}{r^2}\left(1+\frac{1}{2N}+O(N^{-4})\right)\right)dr\,du=O(N^{-2})\,, \\
I_0^{(R)}&= \int_0^{1-N^{-\ep}}\int_0^u \left(-\frac{2N^2-N}{2r^2}+\frac{N^2}{r^2}\left(1-\frac{1}{2N}+O(N^{-4})\right)\right)dr\,du=O(N^{-2})\,,
\end{aligned}
\end{equation}
Similarly, we find that
\begin{equation}\label{i04}
I_1^{(BE)}=O(N^{-2-\ep})\,, \quad I_1^{(R)}=O(N^{-2-\ep})\,.
\end{equation}

We are left to evaluate $I_2^{(BE)}$ and $I_2^{(R)}$.  These integrals are in the regime in which Proposition \ref{asymptoticsh} is valid.
Let us write $r=1-xN^{-2/3}$, so that as $r$ varies from $1-N^{-\ep}$ to $a$, $x$ varies from $N^{2/3-\ep}$ to $L$.
Applying equation (\ref{ins5}) with $n=2N$, we obtain
\begin{equation}\label{int1}
\begin{aligned}
\frac{h_{2N+1,2N+1}^{(0)}(r\xi_+)}{h_{2N,2N-1}^{(0)}(r)}&=\frac{1}{\pi^2 e^2}\left(1+\frac{2^{5/3}(R'(s)+q'(s))}{N^{2/3}}+\frac{2x}{N^{2/3}}+O(N^{-1})\right)\,, \\
\frac{h_{2N,2N}^{(1/2)}(r)}{h_{2N-1,2N-2}^{(1/2)}(r\xi_-)}&=\frac{1}{\pi^2 e^2}\left(1+\frac{2^{5/3}(R'(s)+q'(s))}{N^{2/3}}+\frac{2x}{N^{2/3}}+O(N^{-1})\right).
\end{aligned}
\end{equation}
It follows that the integrands of $I_2^{(BE)}$ and $I_2^{(R)}$ agree up to the order $O(N)$.  According to (\ref{main3}) $s=2^{4/3}x+O(N^{-2/3})$.  
Inserting the formula (\ref{int1}) into either integrals $I_2^{(BE)}$ or $I_2^{(R)}$, we obtain
\begin{equation}\label{int2}
\begin{aligned}
I_2&=\int_{1-N^{-\ep}}^{1-LN^{-2/3}} \int_{1-N^{-\ep}}^u \left\{[-N^2-2xN^{4/3}+O(N)]+N^2\left(1+\frac{2^{5/3}T'(2^{4/3}x)}{N^{2/3}}\right.\right. \\
&\hspace{5cm}\left.\left.+\frac{2x}{N^{2/3}}+O(N^{-1})\right)\right\}dr\,du \\
&=\int_{1-N^{-\ep}}^{1-LN^{-2/3}} \int_{1-N^{-\ep}}^u \bigg(2^{5/3}N^{4/3}T'(2^{4/3}x)+O(N)\bigg) dr\,du,
\end{aligned}
\end{equation}
where $I_2$ is either of the integrals $I_2^{(BE)}$ and $I_2^{(R)}$.
If we write $u=1-yN^{-2/3}$, then we can write the integral (\ref{int2}) in terms of the variables $x$ and $y$:
\begin{equation}\label{int3}
\begin{aligned}
I_2=&\int_{L}^{N^{2/3-\ep}} \int_y^{N^{2/3-\ep}} \big(2^{5/3}N^{4/3}T'(2^{4/3}x)+O(N)\big)N^{-4/3}dx\, dy \\
=&\int_{L}^{N^{2/3-\ep}} \int_y^{N^{2/3-\ep}} \big(2^{5/3}T'(2^{4/3}x)+O(N^{-1/3})\big)dx\, dy.
\end{aligned}
\end{equation}
If we let $1/3<\ep<2/3$ then, after integrating, the error term goes to zero as $N\to \infty$, provided that it is uniform for large $x$.  This uniformity follows from the Riemann-Hilbert analysis, as discussed in section \ref{evalcrit}.  It follows that this integral has a limit as $N\to \infty$, which is 
\begin{equation}\label{int4}
\int_{L}^{\infty} \int_y^{\infty} 2^{5/3}T(2^{4/3}x)dx\, dy=-\int_L^\infty 2^{1/3} T(2^{4/3}y)dy=-\frac{1}{2}\int_{2^{4/3}L}^\infty T(x)dx.
\end{equation}
 Combining (\ref{i03}), (\ref{i04}), and (\ref{int4}), we thus find that, for $a=1-LN^{-2/3}$,
\begin{equation}\label{int5}
\begin{aligned}
F_N(a)=I_0+I_1+I_2&=-\frac{1}{2}\int_{2^{4/3}L}^\infty T_{2N+1}(x)dx+O(N^{-\de}) \\
&=-\frac{1}{2}\int_{2^{4/3}L}^\infty \bigg(R(x)+q(x)\bigg)dx+O(N^{-\de})\,, \qquad 0<\de<1/3.
\end{aligned}
\end{equation}
A simple change of variables and exponentiation gives (\ref{main21}), the result of Theorem \ref{maxheight}.  The proof of Theorem \ref{dopefreeenergy} is very similar, and we present it in the next section.

\medskip

\section{Proof of Theorem \ref{dopefreeenergy}}
A rescaling of (\ref{dope11}) using (\ref{dope4a}) gives the formula
\begin{equation}\label{fe3}
\begin{aligned}
\frac{\d}{\d a^2} F_n^{(DOPE)}(a)&=-\left(\frac{\pi^2}{2}\right)^2 \left(\frac{h_{n,n}(a)}{h_{n-1,n-2}\big(\xi_- a\big)}(\xi_-)^{-2n+3}+(\xi_+)^{2n+3}\frac{ h_{n+1,n+1}\big(\xi_+ a\big)}{h_{n,n-1}(a)}\right. \\
&\qquad\qquad \left.+ \frac{h_{n,n}^{(\al)}(a)}{h_{n,n-1}^{(\al)}(a)}\big(\xi_+ A_{n+1,n}^{(\al)} (a\xi_+)+A_{n,n-1}^{(\al)}(a)\big)^2 \right).
\end{aligned}
\end{equation}
The small $r$ behavior of the function $Z_n^{(DOPE)}(r)$ is described in the following lemma.
\begin{lem}\label{smallr}
As $r \to 0$, 
\begin{equation}\label{fe5}
F_n^{(DOPE)}(r)=\frac{\log(r)}{2}-\frac{1}{2}\log\left(\frac{2}{n\pi^2}\right)+F_n^{(GUE)}+O\left(\frac{r^2}{n^4}\right).
\end{equation}
where $F_n^{(GUE)}$ is defined in (\ref{main18}).
\end{lem}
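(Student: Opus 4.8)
The plan is to compare the discrete partition function $Z_n^{(DOPE)}(r)$ with its continuous counterpart and show that the difference is negligible as $r\to 0$. Introduce
\[
\widetilde Z_n(r)=\int_{\R^n}\prod_{1\le j<k\le n}(\la_k-\la_j)^2\prod_{j=1}^n e^{-\frac{n\pi^2 r}{2}\la_j^2}\,d\la_1\cdots d\la_n.
\]
The linear substitution $\la_j=x_j\sqrt{2/(n\pi^2 r)}$ turns $\prod_{j<k}(\la_k-\la_j)^2$ into $(2/(n\pi^2 r))^{n(n-1)/2}\prod_{j<k}(x_k-x_j)^2$, the weight into $e^{-\sum x_j^2}$, and $d\la$ into $(2/(n\pi^2 r))^{n/2}\,dx$, so that $\widetilde Z_n(r)=(2/(n\pi^2 r))^{n^2/2}\,Z_n^{(GUE)}$ with $Z_n^{(GUE)}$ as in (\ref{main18}). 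Consequently
\[
-\frac{1}{n^2}\log\widetilde Z_n(r)=\frac{\log r}{2}-\frac12\log\!\Big(\frac{2}{n\pi^2}\Big)+F_n^{(GUE)},
\]
which is exactly the main term in (\ref{fe5}). It therefore suffices to prove $Z_n^{(DOPE)}(r)=\widetilde Z_n(r)\big(1+O(r^2/n^4)\big)$ as $r\to 0$.

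Write $F(s_1,\dots,s_n)=\prod_{j<k}(s_k-s_j)^2\prod_j e^{-\frac{n\pi^2 r}{2}s_j^2}$. Since $L_{n,\al}=\{(k-\al)/n:k\in\Z\}$ and the weight in (\ref{dope6}) contributes the factor $n^{-n}$, we have $Z_n^{(DOPE)}(r)=n^{-n}\sum_{\mathbf k\in\Z^n}F\big((\mathbf k-\al\mathbf 1)/n\big)$, which is a Riemann sum with mesh $1/n$ for $\widetilde Z_n(r)=\int_{\R^n}F$. As $F$ is a polynomial times a product of Gaussians it is Schwartz, so Poisson summation applies and gives
\[
Z_n^{(DOPE)}(r)=\sum_{\mathbf m\in\Z^n}e^{-2\pi i\al(m_1+\cdots+m_n)}\,\widehat F(n\mathbf m),
\]
where $\widehat F$ is the $n$-dimensional Fourier transform; the term $\mathbf m=0$ equals $\widehat F(0)=\widetilde Z_n(r)$. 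The claim thus reduces to showing that $\widetilde Z_n(r)^{-1}\sum_{\mathbf m\neq 0}|\widehat F(n\mathbf m)|$ is $o(r^K)$ for every $K$ (and in particular $O(r^2/n^4)$) as $r\to 0$.

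For the tail, observe that $\widehat F(w)=c\,\big[\prod_{j<k}(\partial_{w_k}-\partial_{w_j})^2\big]\prod_j \widehat g(w_j)$ for a constant $c$ depending only on $n$, where $\widehat g(\omega)=\sqrt{2/(n\pi r)}\,e^{-2\omega^2/(nr)}$ is the Fourier transform of $e^{-n\pi^2 r s^2/2}$. Carrying out the $n(n-1)$ derivatives gives $|\widehat F(w)|\le (2/(n\pi r))^{n/2}\,Q_{n,r}(w)\prod_j e^{-2w_j^2/(nr)}$ for a polynomial $Q_{n,r}$ of degree $n(n-1)$ with coefficients polynomial in $1/r$. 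Evaluating at $w=n\mathbf m$ with $\mathbf m\neq 0$, at least one $|m_j|\ge 1$ so the corresponding factor is at most $e^{-2n m_j^2/r}\le e^{-2n/r}$, and summing the remaining rapidly convergent series over $\mathbf m\in\Z^n\setminus\{0\}$ yields $\sum_{\mathbf m\neq 0}|\widehat F(n\mathbf m)|\le C_n\,r^{-M_n}e^{-2n/r}$. Since $\widetilde Z_n(r)^{-1}=(n\pi^2 r/2)^{n^2/2}/Z_n^{(GUE)}$ is only polynomially large in $1/r$, the relative error is $O\!\big(r^{-M'_n}e^{-2n/r}\big)$, which for fixed $n$ decays faster than every power of $r$. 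Together with the first paragraph this establishes (\ref{fe5}).

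The only substantive point is the tail estimate: one must track the polynomial factors produced by differentiating the $n(n-1)$-fold Vandermonde on the Fourier side and check that, after dividing by $\widetilde Z_n(r)$, the Gaussian factor $e^{-2n/r}$ swamps all the accumulated powers of $1/r$. Two shortcuts are worth noting: (i) since $F$ extends to an entire function of order $2$ on $\C^n$, one can shift each contour in $\widehat F(n\mathbf m)=\int_{\R^n}F(s)e^{-2\pi i n\mathbf m\cdot s}\,ds$ to read off the decay $e^{-c n^2|\mathbf m|^2/r}$ directly, avoiding the combinatorics; (ii) using $Z_n^{(DOPE)}(r)=n!\prod_{k=0}^{n-1}h_{n,k}^{(\al)}(r)$ from (\ref{dope6}), one shows instead that $h_{n,k}^{(\al)}(r)=\big(2/(n\pi^2 r)\big)^{k+1/2}h_k^{H}\big(1+o(1)\big)$ as $r\to0$, with $h_k^H$ the $k$-th normalizing constant of the monic Hermite polynomials for weight $e^{-x^2}$, by applying the same Riemann-sum-to-integral comparison to the orthogonality relation (\ref{dope2}); the product over $k$ then recombines to $\widetilde Z_n(r)$.
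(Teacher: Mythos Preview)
Your argument is correct and takes a genuinely different route from the paper's. The paper proves the lemma via an Euler--Maclaurin-type estimate (its Lemma~\ref{aplem}): it rescales so that $Z_n^{(DOPE)}(r)$ becomes a Riemann sum with mesh $\ep=\pi\sqrt{r/(2n)}$ for the GUE integrand, then checks that the first two ``correction'' integrals $\int A_1$ and $\int A_2$ vanish (by parity and integration by parts, respectively), which upgrades the naive $O(\ep)$ Riemann-sum error to $O(\ep^4)=O(r^2/n^2)$. You instead apply Poisson summation on the shifted lattice $L_{n,\al}^n$, identify the $\mathbf m=0$ term with the continuous integral $\widetilde Z_n(r)$, and bound the remaining dual-lattice terms by the Gaussian decay $e^{-2n/r}$ of $\widehat g$ at nonzero integers. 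Your bound is therefore exponentially small in $1/r$, strictly sharper than the paper's $O(r^2)$, and the argument is arguably more transparent. On the other hand, the paper's elementary Taylor-expansion lemma is reused verbatim for Lemma~\ref{smalla} (the half-line case with integrand vanishing at the boundary), where a direct Poisson-summation argument is less immediate; so the paper's approach buys uniformity across the two lemmas, while yours buys a cleaner and stronger estimate here.
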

The proof of this lemma is in Appendix \ref{prooflem1}. 
Consider now $F_n(a)$ for $a$ close to $1$.  According to (\ref{fe3}) and (\ref{fe5}), we have
\begin{equation}\label{fe6}
\begin{aligned}
F_n^{(DOPE)}(a)&=F_n^{(GUE)}+\frac{\log(a)}{2}-\frac{1}{2}\log\left(\frac{2}{n\pi^2}\right) +\int_0^a \int_0^u G^{(DOPE)}(r) \,dr\,du \\
&=F_n^{(GUE)}-\frac{1}{2}\log\left(\frac{2}{n\pi^2}\right)+\frac{\log(a)}{2}+I_0^{(DOPE)}+I_1^{(DOPE)}+I_2^{(DOPE)},
\end{aligned}
\end{equation}
where
\begin{equation}
\begin{aligned}
G^{(DOPE)}(r)&=\frac{1}{2r^2} -\frac{\pi^4}{4}\left(\xi_-^{-2n+3}\frac{h_{n,n}(r)}{h_{n-1,n-2}(\xi_- r)}+\xi_+^{2n+3}\frac{h_{n+1,n+1}(\xi_+r)}{h_{n,n-1}(r)}\right. \\
&\qquad\qquad\left. + \frac{h_{n,n}^{(\al)}(r)}{h_{n,n-1}^{(\al)}(r)}\big(\xi_+ A_{n+1,n}^{(\al)} (r\xi_+)+A_{n,n-1}^{(\al)}(r)\big)^2 \right)\,,
\end{aligned}
\end{equation}
and
\begin{equation}\label{fe7}
\begin{aligned}
I_0^{(DOPE)}&=\int_0^{1-n^{-\ep}} \int_0^u G^{(DOPE)}(r) \,dr\,du \,, \qquad
I_1^{(DOPE)}=\int_{1-n^{-\ep}}^a \int_0^{1-n^{-\ep}} G^{(DOPE)}(r)\,dr\,du \,,\\
I_2^{(DOPE)}&=\int_{1-n^{-\ep}}^a \int_{1-n^{-\ep}}^u G^{(DOPE)}(r)\,dr\,du,
\end{aligned}
\end{equation}
for some $0<\ep<2/3.$
In the regime $r< 1-n^{-\de}$ for some $0<\de<2/3$, the recurrence coefficients $A_{n,k}^{(\al)}$ are exponentially small in $n$, as shown in Appendix \ref{sub}.  Therefore, inserting the asymptotics (\ref{calc1}) into the integrals $I_0^{(DOPE)}$ and $I_1^{(DOPE)}$, we find that both $I_0^{(DOPE)}$ and $I_1^{(DOPE)}$ are $O(n^{-4})$ as $n\to \infty$.

We now evaluate $I_2^{(DOPE)}$.  We scale $a$ as $a=1-Ln^{-2/3}$and write $r=1-xn^{-2/3}$, so that as $r$ varies from $1-n^{-\ep}$, $x$ varies from $n^{2/3-\ep}$ to $L$.  Applying the asymptotics (\ref{main7a}), we find that in this regime, 
\begin{equation}
\xi_+ A_{n+1,n}^{(\al)} (r\xi_+)+A_{n,n-1}^{(\al)}(r)=\frac{(-1)^n 2^{5/3} \sin(2\pi\al)}{\pi n^{1/3}}\bigg(q(s)-q(s_+)\bigg)+O(n^{-2/3}),
\end{equation}
which, using (\ref{ins3}), is $O(n^{-2/3})$.  From (\ref{main6}), we have that $ \frac{h_{n,n}^{(\al)}(r)}{h_{n,n-1}^{(\al)}(r)}=O(1)$, and thus
\begin{equation}
\frac{h_{n,n}^{(\al)}(r)}{h_{n,n-1}^{(\al)}(r)}\big(\xi_+ A_{n+1,n}^{(\al)} (r\xi_+)+A_{n,n-1}^{(\al)}(r)\big)^2=O(n^{-4/3}).
\end{equation}
Applying the asymptotics (\ref{ins5}) the integral $I_2^{(DOPE)}$ can thus be written as
\begin{equation}
\int_{1-n^{-\ep}}^a \int_{1-n^{-\ep}}^u \bigg[-\frac{2^{1/3}}{n^{2/3}} \big(T_n'(s)+T_{n+1}'(s)\big)+O(n^{-1})\bigg] \, dr\, du.
\end{equation}
According to (\ref{main3}), $s=2^{2/3} x +O(n^{-2/3})$, and we therefore have
\begin{equation}
I_2^{(DOPE)}=\int_{1-n^{-\ep}}^{1-Ln^{-2/3}} \int_{1-n^{-\ep}}^u \left(\frac{2^{4/3}q^2(2^{2/3}x)}{n^{2/3}}+O(n^{-1})\right)dr\,du.
\end{equation}
Writing $u=1-yn^{-2/3}$ and taking $1/2<\ep<2/3$, this becomes
\begin{equation}
\begin{aligned}
I_2^{(DOPE)}&=\int_{L}^{n^{2/3-\ep}} \int_{y}^{n^{2/3-\ep}} \left(\frac{2^{4/3}q^2(2^{2/3}x)}{n^2}+O(n^{-7/3})\right)dx\,dy \\
&=\int_{L}^{n^{2/3-\ep}}  \left(\frac{2^{2/3}R(2^{2/3}y)}{n^2}+O(n^{-5/3-\ep})\right)\,dy \\
&=\int_{2^{2/3}L}^{\infty} \frac{R(x)}{n^2}\,dx+O(n^{-1-2\ep})\,. \\
\end{aligned}
\end{equation}
It follows that
\begin{equation}
\begin{aligned}
F_n^{(DOPE)}(a)-F_n^{(GUE)}-\frac{\log{a}}{2}+\frac{1}{2}\log\left(\frac{2}{n\pi^2}\right)&=\frac{1}{n^2} \int_{2^{2/3}L}^\infty R(x)\,dx +O(n^{-1-2\ep}) \\
&=-\frac{1}{n^2} \log\bigg(\fcal_2(2^{2/3}L)\bigg)+O(n^{-1-2\ep}),
\end{aligned}
\end{equation}
from which (\ref{main19}) follows immediately.

The rest of the paper is dedicated to the proof of Proposition \ref{asymptoticsh}, which is based on steepest descent analysis of a discrete Riemann-Hilbert problem.

\medskip 

\section{Riemann-Hilbert analysis}
\subsection{Equilibrium Measure and the $g$-function}\label{equilibrium}
The equilibrium measure associated with the weight $e^{-nV(x)}$ is the unique measure which minimizes the functional
\begin{equation}\label{eq1}
H(\nu)=\int \int \log\frac{1}{|x-y|}d\nu(x)d\nu(y)+\int V(x) d\nu(x),
\end{equation}
over the set of probability measures on $\R$.  In the case that $V(x)$ is given by (\ref{dope7}), it is well known that the solution to this equilibrium problem is supported on the interval $[-\frac{2}{\pi\sqrt{a}},\frac{2}{\pi \sqrt{a}}]$ and on this interval it has a density
\begin{equation}\label{eq2}
d\nu_0(x)=\frac{\pi a}{2} \sqrt{\frac{4}{\pi^2 a}-x^2} \ dx.
\end{equation}
Let us denote the density $\rho(x)$.
Clearly $\rho(x)$ has its maximum value at $x=0$ and $\rho(0)=\sqrt{a}$.  The critical value of the parameter $a$, for which $\rho(x)$ attains the upper constraint, is $a_c=1$.

We define the $g$-function associated with these orthogonal polynomials as the log transform of the equilibrium measure:
\begin{equation}\label{g1}
g(z)=\int_{-b}^b \log(z-x) \rho(x)dx, \quad b=\frac{2}{\pi \sqrt{a}},
\end{equation}
where we take the principal branch for the logarithm.  This function satisfies the following properties:

\begin{enumerate}
\item \( g(z)\) is analytic in \(\C\setminus
    (-\infty,b]\).
\item For large $z$,
\begin{equation}\label{g2}
 g(z)=\log z-\sum_{j=1}^\infty \frac{g_{2j}}{z^{2j}},\qquad 
g_{2j}=\int_{-b}^{b} \frac{x^{2j}}{2j}\,d\nu_0(x).
\end{equation}
\item The Euler-Lagrange variational conditions  for the equilibrium problem (\ref{eq1}) are
\begin{equation}\label{g5}
g_+(x)+g_-(x)\,
\left\{
\begin{aligned}
&=V(x)+l \quad \textrm{for}\quad x \in [-b,b] \\
&< V(x)+l \quad \textrm{for}\quad x \in \R \setminus [-b,b] ,
\end{aligned}
\right.
\end{equation}
where $g_+$ and $g_-$ refer to the limiting values from the upper and lower half-planes, respectively, and $l\in\R$ is the Lagrange multiplier.

\item The function
\begin{equation}\label{g6}
 G(x)\equiv g_+(x)-g_-(x)
\end{equation}
is pure imaginary for all real 
\( x\), and 
\begin{equation}\label{g7}
G(x)=2\pi i\int_x^{b}\rho(s)\,ds.
\end{equation} 

\item From (\ref{g5}) and (\ref{g7}) we obtain that
\begin{equation}\label{g9}
 2g_{\pm}(x)=
V(x)+l\pm 2\pi i \int_x^{b} \rho(s)ds\quad\textrm{for} \quad x \in [-b,b].
\end{equation}
\item 
Also from (\ref{g7}), we get that $G(x)$ is real analytic on the set $(-b,b)$.  We can therefore extend $G$ into a complex neighborhood of $(-b,b)$, and the Cauchy-Riemann equations imply that 
\begin{equation}\label{g10}
 \left. \frac{dG(x+iy)}{dy}\right|_{y=0}=2\pi \rho(x)\ge 0.
\end{equation}
\end{enumerate}
From (\ref{g5}) we have that
\begin{equation}\label{g11}
 G(x)=2g_+(x)-V(x)-l=-[2g_-(x)-V(x)-l],\quad x \in [-b,b].
\end{equation}

%We will denote by  $\rho_c(z)$ and $G_c(z)$ the functions $\rho(z)$ and $G(z)$ corresponding to the critical value of the parameter $t$ ($t=t_c=1)$.

The value of the Lagrange multiplier is given by the equation
\begin{equation}\label{g12}
e^l=\frac{1}{\pi^2ae}\,.
\end{equation}

 \subsection{Interpolation Problem}\label{IP}
The orthogonal polynomials (\ref{dope2}) are encoded in the following interpolation problem (IP).  
For a given $n=0,1,\ldots$, find a $2\times 2$ matrix-valued function
$\bold P_n(z)=(\bold P_n(z)_{ij})_{1\le i,j\le 2}$ with the following properties:
\begin{enumerate}
\item
{\it Analyticity}: $\bold P_n(z)$ is an analytic function of $z$ for $z\in\C\setminus L_{n,\al}$.
\item
{\it Residues at poles}: At each node $x\in L_{n,\al}$, the elements $\bold P_n(z)_{11}$ and
$\bold P_n(z)_{21}$ of the matrix $\bold P_n(z)$ are analytic functions of $z$, and the elements $\bold P_n(z)_{12}$ and
$\bold P_n(z)_{22}$ have a simple pole with the residues,
\begin{equation} \label{IP1}
\underset{z=x}{\rm Res}\; \bold P_n(z)_{j2}=w_n(x)\bold P_n(x)_{j1},\quad j=1,2.
\end{equation}
\item
{\it Asymptotics at infinity}: There exists a function $r(x)>0$ on  $L_{n,\al}$ such that 
\begin{equation} \label{IP2a}
\lim_{x\to\infty} r(x)=0,
\end{equation} 
and such that as $z\to\infty$, $\bold P_n(z)$ admits the asymptotic expansion,
\begin{equation} \label{IP2}
\bold P_n(z)\sim \left( I+\frac {\bold P_1}{z}+\frac {\bold P_2}{z^2}+\ldots\right)
\begin{pmatrix}
z^n & 0 \\
0 & z^{-n}
\end{pmatrix},\qquad z\in \C\setminus \left[\bigcup_{x\in L_{n,\al}}^\infty D\big(x,r(x)\big)\right],
\end{equation}
where $D(x,r(x))$ denotes a disk of radius $r(x)>0$ centered at $x$ and $I$ is the identity matrix.
\end{enumerate}

It is not difficult to see (see \cite{BKMM}) that 
the IP has a unique solution, which is
\begin{equation} \label{IP3}
\bold P_n(z)=
\begin{pmatrix}
P_n(z) & n^{-1}C(w_n P_n)(z) \\
(h_{n-1})^{-1}P_{n-1}(z) & (nh_{n-1})^{-1}C(w_nP_{n-1})(z)
\end{pmatrix},
\end{equation}
where the Cauchy transformation $C$ is defined by the formula,
\begin{equation} \label{IP4}
C(f)(z)=\sum_{x\in L_{n,\al}}\frac{f(x)}{z-x}\,.
\end{equation}
Because of the orthogonality condition, as $z\to\infty$,
\begin{equation} \label{IP5}
\frac{1}{n}C(w_nP_n)(z)=\sum_{x\in L_{n,\al}}\frac{w_n(x)P_n(x)}{n(z-x)}
\sim \sum_{x\in L_{n,\al}} \frac{w_n(x)}{n}P_n(x)\sum_{j=0}^\infty \frac{x^j}{z^{j+1}}=\frac{h_n}{z^{n+1}}
+\sum_{j=n+2}^\infty \frac{a_j}{z^j},
\end{equation}
for some constants $a_j$, which justifies asymptotic expansion (\ref{IP2}), and we have that 
\begin{equation} \label{IP6}
h_{n,n}^{(\al)}=[\bold P_1]_{12},\qquad \left(h_{n,n-1}^{(\al)}\right)^{-1}=[\bold P_1]_{21}\,.
\end{equation}
Furthermore, the recurrence coefficient $A_{n,n-1}^{(\al)}$ is given by (see \cite{BKMM}, \cite{BL})
\begin{equation}\label{IP6a}
A_{n,n-1}^{(\al)}=\frac{[\bold P_2]_{21}}{[\bold P_1]_{21}}-[\bold P_1]_{11}
\end{equation}
and the Christoffel-Darboux kernel (\ref{dope13}) is given by
\begin{equation}\label{ope4a}
K_n(x,y)=\frac{e^{-nV(x)/2}e^{-nV(y)/2}}{n(x-y)} \begin{pmatrix} 0 & 1 \end{pmatrix} \bold P_n(x)^{-1}\bold P_n(y) \begin{pmatrix} 1 \\ 0 \end{pmatrix}.
\end{equation}

The asymptotic analysis of this IP follows the steepest descent method of Deift-Zhou, the plan of which is as follows.  We first convert the IP to a Riemann-Hilbert problem (RHP), where the condition on poles and residues is replaced by a jump condition on some contours in $\C$.  Then we perform a series of explicit transformations which convert the RHP to one with jumps which approach the identity matrix as $n \to \infty$.  This small norm problem can be solved by a series of perturbation theory.  We can then recover the orthogonal polynomials encoded in the IP by inverting the explicit transformations which led us to the small norm problem.

\subsection{Reduction of IP to RHP}\label{red}
We now reduce the interpolation problem to a Riemann-Hilbert problem.  Introduce the function
\begin{equation}\label{red1}
\Pi(z)=\frac{\sin(n\pi z+\al\pi)}{n\pi}.
\end{equation}
Notice that
\begin{equation}\label{red2}
\Pi(x_k)=0, \quad \Pi'(x_k)=\exp\left(in\pi x_k+i\pi\al\right)=(-1)^k, \quad \textrm{for} \quad x_k=\frac{k-\al}{n} \in L_{n,\al}.
\end{equation}

Introduce the upper triangular matrices,
\begin{equation} \label{red3}
\bold D^u_{\pm}(z)
=\begin{pmatrix}
1 & -\frac{w_n(z)}{n\Pi(z)}e^{\pm i\pi(n  z+\al)}   \\
0 & 1
\end{pmatrix},
\end{equation}
and the lower triangular matrices,
\begin{equation} \label{red4}
\bold D^l_{\pm}
=\begin{pmatrix}
\Pi(z)^{-1} & 0   \\
-\frac{n}{w_n(z)}e^{\pm i\pi(n z+\al)} & \Pi(z)
\end{pmatrix}
=\begin{pmatrix}
\Pi(z)^{-1} & 0   \\
0 & \Pi(z)
\end{pmatrix}
\begin{pmatrix}
1 & 0   \\
-\frac{n}{\Pi(z)w_n(z)}e^{\pm i\pi(n z+\al)} & 1
\end{pmatrix}.
\end{equation}
Define the matrix-valued functions,
\begin{equation} \label{red5}
\bold R^u_n=\bold P_n(z)\times
\left\{
\begin{aligned}
& \bold D^u_+(z) \quad \textrm{when}\quad \Im z\ge0 \\
& \bold D^u_-(z) \quad \textrm{when}\quad \Im z\le0,
\end{aligned}
\right.
\end{equation}
and 
\begin{equation} \label{red6}
\bold R^l_n=\bold P_n(z)\times
\left\{
\begin{aligned}
& \bold D^l_+(z),\quad \textrm{when}\quad \Im z\ge0 \\
& \bold D^l_-(z),\quad \textrm{when}\quad \Im z\le0.
\end{aligned}
\right.
\end{equation}
%From (\ref{IP3}) we have that
%\begin{equation} \label{red7}
%\begin{aligned}
%\bold R_n^u(z)&=
%\begin{pmatrix}
%P_n(z) & -\frac{w_n(z)P_n(z)}{n\Pi(z)}e^{\pm in\pi z}
%+\frac{C(w_n P_n)(z)}{n}   \\
%(nh_{n-1})^{-1}P_{n-1}(z)  
%& -\frac{w_n(z)(P_{n-1}(z)}{nh_{n-1}\Pi(z)}e^{\pm in\pi z}
%+\frac{C(w_n P_{n-1})(z)}{nh_{n-1}}
%\end{pmatrix} \\
%&\quad\textrm{when}\quad \pm \Im z\ge 0,
%\end{aligned}
%\end{equation}
%and
%\begin{equation} \label{red8}
%\begin{aligned}
%\bold R_n^l(z)&=
%\begin{pmatrix}
%\frac{P_n(z)}{\Pi(z)}-\frac{C(w_n P_n)(z)}{w_n(z)}e^{\pm in\pi z} 
%  & n^{-1}\Pi(z)C(w_n P_n)(z)   \\
%\frac{P_{n-1}(z)}{h_{n-1}\Pi(z)}-\frac{C(w_n P_{n-1})(z)}{h_{n-1}w_n(z)}e^{\pm in\pi z} 
%  & \Pi(z)(nh_{n-1})^{-1}C(w_n P_{n-1})(z)  
%\end{pmatrix} \\
%&\quad\textrm{when}\quad \pm \Im z\ge 0.
%\end{aligned}
%\end{equation}
The functions $\bold R_n^u(z)$, $\bold R_n^l(z)$ are meromorphic on the closed upper and lower complex planes and they are two-valued on the real axis. Their possible poles
are located on the lattice $L_{n,\al}$. As shown in \cite{BL},
in fact they do not have any poles at all.

Consider the regions $\Om_{\pm}^\Delta$ and $\Om_{\pm}^\nabla$ shown in Figure \ref{contour_Sigma}, and the contour $\Sg$ which bounds these regions.  These regions lie entirely within an $\ep$-strip of the real line, and the regions $\Om_{\pm}^\Delta$ are small sectors above and below the origin, respectively, within this strip. 
%%%%%%%%%%%%%%%%%%%%%%%%%%%%%%%%%%%
\begin{figure}
\vskip 1cm
\scalebox{0.39}{\includegraphics{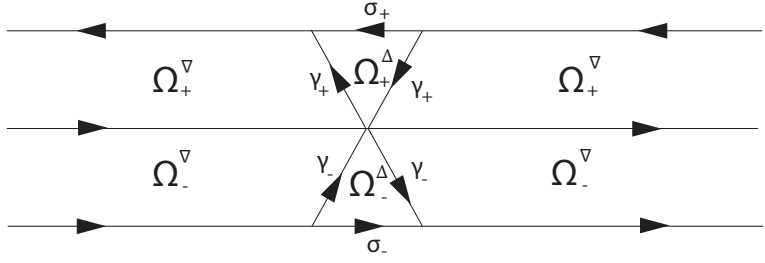}}
\caption{The contour $\Sigma$ dividing an $\ep$-neighborhood of the real line into the regions $\Om_\pm^\Delta$ and $\Om_\pm^\nabla$. }\label{contour_Sigma}
\label{fig2}
\end{figure}
%%%%%%%%%%%%%%%%%%%%%%%%%%%%%%%%%%%
 We make the transformation
\begin{equation} \label{red9}
\bold R_n(z)=
\left\{
\begin{aligned}
&\bold K \bold R_n^u(z)\bold K^{-1},\quad \textrm{for}\quad z\in\Om_{\pm}^\nabla,\\
&\bold K\bold R_n^l(z)\bold K^{-1},\quad \textrm{for}\quad z\in\Om_{\pm}^\De,\\
&\bold K \bold P_n(z)\bold K^{-1}, \quad \textrm{otherwise}.
\end{aligned}
\right.
\end{equation}
where $\bold K=\begin{pmatrix} 1 & 0 \\ 0 & -2i\pi \end{pmatrix}$.

Let us denote by $\ga_{\pm}$ the part of the contour $\Sg$ which is the boundary between the region $\Om_{\pm}^{\Delta}$ and the region $\Om_{\pm}^{\nabla}$.  Then the region $\Om_{\pm}^{\Delta}$ is bounded by the contour $\ga_{\pm}$ and a small segment on which $\Im z =\pm \ep$.  Let us denote these segments $\sg_{\pm}$.  

%The function $\bold R_n(z)$ satisfies the jump condition on the contour $\Sg$
%\begin{equation} \label{red10}
%\bold R_{n+}(z)=\bold R_{n-}(z)j_n(z),
%\end{equation}
%where
%\begin{equation} \label{red11}
%j_R(z)=
%\left\{
%\begin{aligned}
%& \begin{pmatrix} 1 & w_n(z) \\ 0 & 1 \end{pmatrix} \quad \textrm{when}\quad z\in \R \setminus \{0\} \\
%& \bold K\bold D_{\pm}^u(z)\bold K^{-1}=\begin{pmatrix} 1 & \frac{1}{2in\pi}
%\frac{w_n(z)e^{\pm i \pi(n z+\al)}}{\Pi(z)} \\ 0 & 1 \end{pmatrix} \quad \textrm{when}\quad z\in \{\R \pm i\ep\} \setminus \sg_{\pm}\\
%&\bold K\bold D_{\pm}^l(z)\bold K^{-1}=
%\begin{pmatrix}
%\Pi(z)^{-1} & 0 \\ 
%2in\pi \frac{e^{\pm i\pi(n z+\al)}}{w_n(z)} & \Pi(z) 
%\end{pmatrix}
%\quad \textrm{when }\quad z\in\sg_\pm\\
%& \bold K\bold D_{\pm}^l(z)^{-1}\bold D_{\pm}^u(z)\bold K^{-1}=
%\begin{pmatrix} 
%\Pi(z) & \frac{1}{2n\pi i}w_n(z)e^{\pm i\pi(n z+\al)} \\ 
%-2n\pi iw_n(z)^{-1}e^{\pm i\pi(n z+\al)} 
%& \mp 2n\pi i e^{\pm i\pi (nz+\al)} 
%\end{pmatrix} \\
%&\hspace{2.5 cm} \textrm{when}\quad z\in  \ga_{\pm}\,.
%\end{aligned}
%\right.
%\end{equation}

\subsection{First transformation of the RHP}\label{ft}

Define the matrix function $\bold T_n(z)$ as follows from the equation
\begin{equation}\label{ft1}
\bold R_n(z)=e^{\frac{nl}{2}\sigma_3}\bold T_n(z)e^{n(g(z)-\frac{l}{2})\sigma_3},
\end{equation}
where $l$ is the Lagrange multiplier, the function $g(z)$ is described 
in section \ref{equilibrium}, and $\sigma_3=\begin{pmatrix} 1 & 0 \\ 0 &-1 \end{pmatrix}$ is the third Pauli matrix.  
Then $\bold T_n(z)$ satisfies the following Riemann-Hilbert Problem:
\begin{enumerate}
  \item 
  $\bold T_n(z)$ is analytic in $\C \setminus \Sg$. 
  \item $\bold T_{n+}(z)=\bold T_{n-}(z)j_T(z)$ for $z\in\Sg$, where
  \begin{equation}\label{ft2}
   j_T(z)=
  \left\{
  \begin{aligned}
 & e^{n(g_-(z)-\frac{l}{2})\sigma_3}j_R(z)e^{-n(g_+(z)-\frac{l}{2})\sigma_3} \quad \textrm{for} \quad z \in \R \\
  &e^{n(g(z)-\frac{l}{2})\sigma_3}j_R(z)e^{-n(g(z)-\frac{l}{2})\sigma_3} \quad \textrm{for} \quad z \in \Sg \setminus \R.
  \end{aligned}\right.
  \end{equation}
  \item As $z \to \infty,$
   \begin{equation} \label{ft3}
\bold T_n(z)\sim  I+\frac {\bold T_1}{z}+\frac {\bold T_2}{z^2}+\ldots.
\end{equation}
\end{enumerate}
%More specifically,
%  \begin{equation}\label{ft4}
%j_T(z)=
%\left\{
%\begin{aligned}
%& \begin{pmatrix} e^{nG(z)} & e^{n(g_+(z)+g_-(z)-l-V(z))} \\ 0 & e^{-nG(z)} \end{pmatrix} \quad \textrm{when}\quad z\in \R \setminus \{0\} \\
%&\begin{pmatrix} 1 & 
%\frac{e^{n(2g(z)-l-V(z))}e^{\pm in \pi z}}{2in\pi\Pi(z)} \\ 0 & 1 \end{pmatrix} \quad \textrm{when}\quad z\in \{\R \pm i\ep\} \setminus \sg_{\pm}\\
%&\begin{pmatrix}
%\Pi(z)^{-1} & 0 \\ 
%2in\pi e^{-n(2g(z)-l-V(z))}e^{\pm in\pi z}& \Pi(z) 
%\end{pmatrix}
%\quad \textrm{when }\quad z\in\sg_\pm\\
%&\begin{pmatrix} 
%\Pi(z) & \frac{1}{2n\pi i}e^{n(2g(z)-l-V(z))}e^{\pm in\pi z} \\ 
%-2n\pi i e^{-n(2g(z)-l-V(z))}e^{\pm in\pi z} 
%& \mp 2n\pi i e^{\pm in\pi z} 
%\end{pmatrix} 
%\quad \textrm{when}\quad z\in  \ga_{\pm}\,.
%\end{aligned}
%\right.
%\end{equation}

\subsection{Second transformation of the RHP}

Introduce the matrices
\begin{equation}\label{st1}
\begin{aligned}
j_-(z)&=\begin{pmatrix} 1 & 0 \\ e^{nG(z)} & 1 \end{pmatrix}, \quad &j_+(z)&=\begin{pmatrix} 1 & 0 \\ e^{-nG(z)} & 1 \end{pmatrix}, \\
\bold A_+(z) &=\begin{pmatrix} -\frac{1}{2n\pi i}e^{- i\pi(n z+\al)} & 0 \\ 0 & -2n\pi i e^{i\pi(n z+\al)} \end{pmatrix}, \quad &\bold A_-(z) &=\begin{pmatrix} \frac{1}{2n\pi i}e^{i\pi(n z+\al)} & 0 \\ 0 & 2n\pi i e^{-i\pi(n z+\al)} \end{pmatrix}\,.
\end{aligned}
\end{equation}
We now make the transformation
\begin{equation}\label{st2}
\begin{aligned}
&\bold S_n(z)=\left\{
\begin{aligned}
&\bold T_n(z)j_+(z)^{-1} \quad \textrm{for} \quad z\in \{(-b,b) \times (0,i\ep)\} \cap \Om_+^\nabla  \\
&\bold T_n(z)j_-(z)\quad \textrm{for} \quad z\in \{(-b,b) \times (0,-i\ep)\} \cap \Om_-^\nabla\\
&\bold T_n(z)\bold A_{\pm}(z)\quad \textrm{for} \quad z\in \Om_\pm^\Delta \\
&\bold T_n(z)  \quad \textrm{otherwise}.
\end{aligned}\right. \\ 
\end{aligned}
\end{equation}
This function satisfies a RHP similar to $\bold T$, but jumps now occur on a new contour $\Sigma_S$, 
which is obtained from $\Sg$ by adding the segments $(-b-i\ep, -b+i\ep)$ and $(b-i\ep, b+i\ep)$, see Figure \ref{sigma_S}.
%%%%%%%%%%%%%%%%%%%%%%%%%%%%%%%%%%%%
\begin{figure}
\scalebox{0.3}{\includegraphics{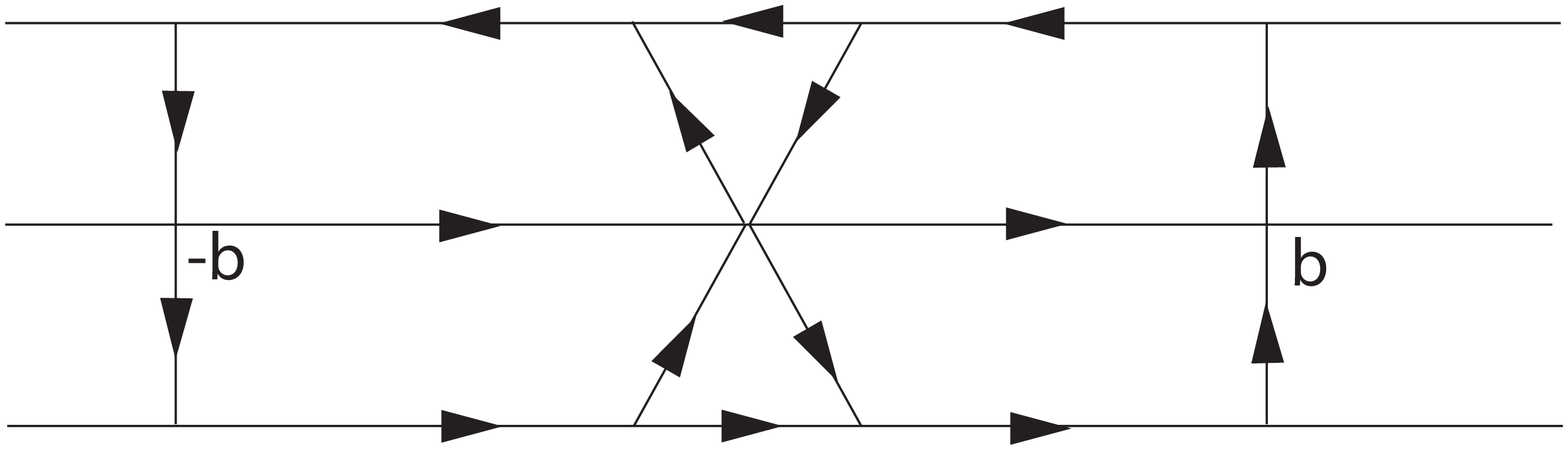}}
\caption{The contour $\Sigma_S$. }
\label{sigma_S}
\end{figure}
%%%%%%%%%%%%%%%%%%%%%%%%%%%%%%%%%%%%
On this contour, the function $\bold S_n(z)$ satisfies the jump condition
\begin{equation}\label{st3}
\bold S_+(z)=\bold S_-(z) j_S(z),
\end{equation}
where
  \begin{equation}\label{st3a}
j_S(z)=
\left\{
\begin{aligned}
& \begin{pmatrix} 0 & 1 \\ -1 & 0 \end{pmatrix} \quad \textrm{when}\quad z\in (-b,b) \setminus \{0\} \\
& \begin{pmatrix} 1 & e^{n(g_+(z)+g_-(z)-l-V(z))} \\ 0 & 1 \end{pmatrix} \quad \textrm{when}\quad z\in \R\setminus(-b,b)  \\
&\begin{pmatrix}  
(1-e^{\pm 2i\pi(n z+\al)})^{-1} & \pm \frac{e^{\pm nG(z)}}{1-e^{\mp 2\pi i( n z+\al)}} \\ \mp e^{\mp nG(z)} & 1 \end{pmatrix} \quad \textrm{when}\quad z\in \{(-b,b) \pm i\ep\} \setminus \sg_{\pm}\\
&\begin{pmatrix}
(1-e^{\pm 2\pi i(n z+\al)})^{-1} & 0 \\ 
\mp  e^{\mp nG(z))} & 1-e^{\pm2\pi i(nz+\al)}
\end{pmatrix}
\quad \textrm{when }\quad z\in\sg_\pm\\
&\begin{pmatrix} 
1 &\mp e^{\pm nG(z)\pm 2\pi i(n z+\al)} \\ 
0
& 1
\end{pmatrix} 
\quad \textrm{when}\quad z\in  \ga_{\pm} \\
&\begin{pmatrix} 1 & \pm \frac{e^{n(2g(z)-l-V(z))}}{1-e^{\mp 2\pi i(nz+\al)}} \\ 0 & 1 \end{pmatrix} \quad z\in \bigg\{\{\R \setminus (-b,b)\} \pm i\ep \bigg\}\,.
\end{aligned}
\right.
\end{equation}

By formula (\ref{g5}) for the $G$-function and the upper constraint on the density $\rho$ we obtain that, for sufficiently small $\ep > 0$ and $x \in (-b,-\ep) \cup (\ep, b)$, 
\begin{equation}\label{st4}
0< \pm \Re G(x\pm i\ep) = 2\pi \ep \rho(x) +O(\ep^2) < 2\pi \ep +  O(\ep^2).
\end{equation}
Combined with property (\ref{g7}) of the $g$-function, this implies that all jumps on horizontal segments are exponentially close to the identity matrix, provided that they are bounded away from the segment $(-b,b)$.

\subsection{Model RHP}

The model RHP appears when we drop in the jump matrix $j_S(z)$ the terms that vanish as $n\to\infty$:
\begin{enumerate}
  \item 
  $\bold M(z)$ is analytic in $\C \setminus [-b,b]$.
  \item $\bold M_{+}(z)=\bold M_{-}(z)j_M(z)$ for $z\in[-b,b]$, where
  \begin{equation}\label{m1}
   j_M(z)=
  \begin{pmatrix} 0 & 1 \\ -1 & 0 \end{pmatrix}\,. 
 \end{equation}
 \item As $z \to \infty$,
\begin{equation} \label{m2}
\bold M(z)\sim I+\frac {\bold M_1}{z}+\frac {\bold M_2}{z^2}+\ldots.
\end{equation}
\end{enumerate}
This problem has the unique solution (see e.g., \cite{DKMVZ})
\begin{equation}\label{m3}
\bold M(z) = \begin{pmatrix} \frac{\ga(z)+\ga^{-1}(z)}{2} &  \frac{\ga(z)-\ga^{-1}(z)}{-2i} \\ \frac{\ga(z)-\ga^{-1}(z)}{2i} & \frac{\ga(z)+\ga^{-1}(z)}{2} \end{pmatrix}\,,
\end{equation}
where 
\begin{equation}\label{m4}
\ga(z)=\left(\frac{z+b}{z-b}\right)^{1/4}\,,
\end{equation}
with a cut on $[-b,b]$, taking the branch such that $\ga(z) \sim 1$ as $z\to \infty$.

\subsection{Parametrix at band-void edge points}

We now consider small disks $D(b,\ep)$ and $D(-b,\ep)$ around the endpoints of the support of the equilibrium measure.  Denote
\begin{equation}\label{pm0a}
D=D(b,\ep) \cup D(-b,\ep).
\end{equation}
We seek a local parametrix $\bold U(z)$ defined on $D$ such that
\begin{enumerate}
\item 
\begin{equation}\label{pm0}
\bold U(z) \ \textrm{is analytic on} \ D \setminus \Sigma_S.
\end{equation}
\item
\begin{equation}\label{pm1}
\bold U_{+}(z)=\bold U_{-}(z)j_S(z) \quad \textrm{for} \quad z\in D \cap \Sigma_S.
\end{equation}
\item
\begin{equation}\label{pm2}
\bold U(z)=\bold M(z) \big(I+O(n^{-1})\big) \quad \textrm{uniformly for} \  z\in \partial D.
\end{equation}
\end{enumerate}
The solution to the problem is given in \cite{DKMVZ} (see also \cite{BL}), and we do not repeat it here. 

\subsection{The Riemann-Hilbert problem associated with the Painlev\'{e} II equation}
In section \ref{pmo}, we will discuss the local analysis to our Riemann-Hilbert problem near the origin, which is the point at which the equilibrium measure attains the upper constraint.  The solution will be given in terms of a well known problem in integrable systems, which we discuss now.  For a more complete description of this problem, see the book \cite{FIKN}.

Let $\bold \Psi(\z)$ be the $ 2\times 2$ matrix-valued solution to the differential equation
\begin{equation}\label{pii1}
\frac{d}{d\z} \bold \Psi(\z)=\begin{pmatrix} -4i\z^2-i(s+2q^2) & 4\z q+2ir \\ 4\z q-2ir & 4i\z^2 +i(s+2q^2) \end{pmatrix}\bold \Psi(\z)\,.
\end{equation}
It is known that there exist solutions $\bold \Psi_j(\z)$ defined in each of the six sectors
\begin{equation}\label{pii2}
S_j=\left\{\z \in \C : \frac{2j-3}{6}\pi < \arg \z < \frac{2j-1}{6}\pi \right\},
\end{equation}
such that for $j \in \{1,2,3,4,5,6\}$, as $\z\to \infty$,
\begin{equation}\label{pii3}
 \bold \Psi_j(z)e^{i(\frac{4}{3}\z^3+s\z)\sg_3}=I+O(\z^{-1}),
 \end{equation}
and on the ray $\Ga_j=\{\z : \arg \z = \frac{2j-1}{6}\pi\}$ we have the jump condition
\begin{equation}\label{pii4}
\bold \Psi_{j+1}(\z)=\bold \Psi_j(\z) A_j,
\end{equation}
where
\begin{equation}\label{pii5}
A_j=\begin{pmatrix} 1 & 0 \\ a_j & 1 \end{pmatrix} \quad j \ \textrm{odd} \ , \quad A_j=\begin{pmatrix} 1 & a_j \\ 0 & 1 \end{pmatrix} \quad j \ \textrm{even}\,.
\end{equation}
The numbers $a_j$ are called the Stokes multipliers and satisfy the relations $a_{j+3}=a_j$, and $a_1a_2a_3+a_1+a_2+a_3=0$.

If the parameters $s$, $q$, and $r$ are chosen such that $q$, as a function of $s$, is the Hastings-McLeod solution to the Painlev\'{e} II equation and $r=q'(s)$, then the Stokes multipliers become 
\begin{equation}\label{pii7}
a_1=1, \quad a_2=0, \quad a_3=-1\,.
\end{equation}
It is known that the Hastings-McLeod solution to the Painlev\'{e} II equation is a meromorphic function whose (infinitely many) poles are all located away from the real line.  We will consider only real $s$, and thus we can always choose the parameters $r$, $q$, and $s$ in such a way.  Because $q(s)$ behaves like the Airy function for large $s>0$, in fact the error in (\ref{pii3}) is
\begin{equation}\label{pii7a}
O\left(\frac{e^{-\frac{2}{3}s^{3/2}}}{\z}\right)\,, \quad \textrm{as} \quad \z \to \infty, \quad s\to +\infty.
\end{equation}
%Thus, in isomonodromy deformations of the differential equation (deformations which preserve the Stokes multipliers) (\ref{pii1}), we may regard $s$ as an independent variable, as long as it is not a pole of the Hastings-McLeod solution to the Painlev\'{e} II equation.  It is known that there are no real poles of this function, and we will consider only real $s$.

Consider the oriented contour $\Gamma$ in the $\z$-plane depicted in Figure \ref{Gamma}, which is made up of the four rays $\ga_1, \ga_2, \ga_3$, and $\ga_4$, where
\begin{equation}\label{pii8}
\begin{aligned}
\ga_1&=\  \left\{\z \in \C : \arg \z  =\frac{\pi}{6} \right\}, \quad &\ga_2&=\  \left\{\z \in \C : \arg \z = \frac{5\pi}{6} \right\}, \\
\ga_3&=\  \left\{\z \in \C : \arg \z  =\frac{7\pi}{6} \right\}, \quad &\ga_4&=\  \left\{\z \in \C : \arg \z  =\frac{11\pi}{6} \right\}\,.
\end{aligned}
\end{equation}
%%%%%%%%%%%%%%%%%%%%%%%%%%%%%%%%%%%%
\begin{figure}
\scalebox{0.3}{\includegraphics{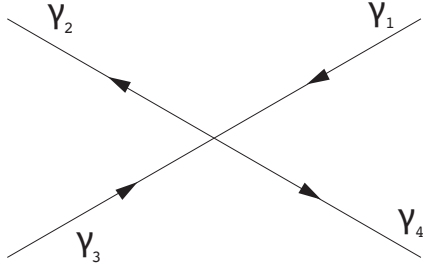}}
\caption{The contour $\Gamma$, which is composed of the rays $\ga_j$, $j=1,2,3,4.$}
\label{Gamma}
\end{figure}
%%%%%%%%%%%%%%%%%%%%%%%%%%%%%%%%%%%%
The function $\bold \Psi(\z)$ which solves differential equation (\ref{pii1}) and corresponds to the Hastings-McLeod solution to the Painlev\'{e} II equation satisfies the following Riemann-Hilbert problem.
\begin{enumerate}
\item $\bold \Psi(\z)$ is analytic for $\z \in \C \setminus \Gamma.$
\item For $\z\in \Gamma$, $\bold \Psi(\z)$ satisfies the jump condition
\begin{equation}\label{pii9}
\bold \Psi_+(\z)=\bold \Psi_-(\z)j_{\bold \Psi}(\z),
\end{equation}
where
\begin{equation}\label{pii10}
j_{\bold \Psi}(\z)=\left\{
\begin{aligned} 
\begin{pmatrix} 1 & 0 \\ -1 & 1 \end{pmatrix} \quad \textrm{for} \ \z \in \ga_1 \cup \ga_2 \\
\begin{pmatrix} 1 & -1 \\ 0 & 1 \end{pmatrix} \quad \textrm{for} \ \z \in \ga_3 \cup \ga_4\,. \\
\end{aligned}\right.
\end{equation}
\item As $\z \to \infty,$
\begin{equation}\label{pii11}
\bold \Psi(\z) =(I+O(\z^{-1}))e^{-i(\frac{4}{3}\z^3+s\z)\sg_3}\,.
\end{equation}
\end{enumerate}

Introduce the parameter $\th \in \R$, and let $\bold \Psi_2(\z;s, \th)$ be defined via $\bold \Psi(\z)$ as 
\begin{equation}\label{pii13}
\bold \Psi_2(\z; s, \th)=\left\{
\begin{aligned}
&e^{i\th\sg_3} \sg_3 \bold \Psi(\z)e^{i(\frac{4}{3}\z^3+s\z)\sg_3} e^{-i\th\sg_3}\sg_1  \quad \Im \z >0 \\
&e^{i\th\sg_3}  \sg_3 \bold \Psi(\z) e^{i(\frac{4}{3}\z^3+s\z)\sg_3} e^{-i\th\sg_3} \sg_3 \quad \Im \z <0\,,
\end{aligned}\right.
\end{equation}
where $\sg_1=\begin{pmatrix} 0 & 1 \\ 1 & 0 \end{pmatrix}$ and $\sg_3=\begin{pmatrix} 1 & 0 \\ 0 & -1 \end{pmatrix}$ are Pauli matrices.
Then the function $\bold \Psi_2(\z; s, \th)$ satisfies the Riemann-Hilbert problem
\begin{enumerate}
\item $\bold \Psi_2(\z; s, \th)$ is analytic for $\z \in \C \setminus \big(\Gamma \cup \R\big).$
\item For $\z\in \Gamma$, $\bold \Psi_2(\z; s, \th)$ satisfies the jump condition
\begin{equation}\label{pii14}
\bold \Psi_{2+}(\z; s, \th)=\bold \Psi_{2-}(\z; s, \th)j_{2}(\z; s, \th),
\end{equation}
where
\begin{equation}\label{pii15}
j_2(\z; s, \th)=\left\{
\begin{aligned}
&\begin{pmatrix} 1 & -e^{2i(\frac{4}{3}\z^3+s\z-\th)\sg_3} \\ 0 & 1 \end{pmatrix} \quad \z \in \ga_1 \cup \ga_2 \\
&\begin{pmatrix} 1 & e^{-2i(\frac{4}{3}\z^3+s\z-\th)\sg_3} \\ 0 & 1 \end{pmatrix} \quad \z \in \ga_3 \cup \ga_4 \\
&\begin{pmatrix} 0 & 1 \\ -1 & 0 \end{pmatrix} \quad \z \in \R\,.
\end{aligned}\right.
\end{equation}
\item As $\z \to \infty,$
\begin{equation}\label{pii16}
\bold \Psi_2(\z; s, \th)=\left\{
\begin{aligned}
 & \big(I+O(\z^{-1})\big)\begin{pmatrix} 0 & 1 \\  -1 & 0 \end{pmatrix} \quad \Im \z >0 \\
 &\big(I+O(\z^{-1})\big)  \quad \Im \z <0 \,.
 \end{aligned}\right.
\end{equation}
\end{enumerate}

Finally we note, as in \cite{CK}, the formula for the entries of the matrix $\bold \Psi$,
\begin{equation}\label{pii21}
\bold \Psi(\z;s)\begin{pmatrix} 1 \\ 1 \end{pmatrix} = \begin{pmatrix} \Phi_1(\z;s) \\ \Phi_2(\z;s)\end{pmatrix} \quad \z \in S_1 \cup S_4,
\end{equation}
where $\Phi_{1,2}(\z;s)$  are defined in (\ref{main11}).

\subsection{Parametrix at the origin}\label{pmo}
We seek a local parametrix $\bold U(z)$ defined on $D(0,\ep)$ such that
\begin{enumerate}
\item 
$\bold U(z)$ is analytic for $\z \in D(0,\ep) \setminus \Sigma_S$.
\item For $\z\in \Gamma$, $\bold \Psi_2(\z)$ satisfies the jump condition
\begin{equation}\label{pmz1}
\bold U_{+}(z)=\bold U_{-}(z)j_S(z) \quad \textrm{for} \quad z\in D(0,\ep) \cap \Sigma_S.
\end{equation}
\item As $n \to \infty$,
\begin{equation}\label{pmz2}
\bold U(z)=\bold M(z) \big(I+O(n^{-1/3})\big) \quad \textrm{uniformly for} \  z\in \partial D(0,\ep).
\end{equation}
\end{enumerate}
Let us recall the jumps $j_S$ on the contours $\ga_+$, $\ga_-$, and $\R$ close to the origin:
\begin{equation}\label{pmz3}
j_{S}(z)=\left\{
\begin{aligned}
&\begin{pmatrix} 
1 &- e^{nG(z)+ 2\pi i(n z+\al)} \\ 
0
& 1
\end{pmatrix} \quad z\in \ga_+ \\
&\begin{pmatrix} 
1 & e^{- nG(z)- 2\pi i(n z+\al)} \\ 
0
& 1
\end{pmatrix} \quad z\in \ga_- \\
&\begin{pmatrix} 0 & 1 \\ -1 & 0 \end{pmatrix} \quad z\in \R \,.
\end{aligned}\right.
\end{equation}
These jumps are similar to the jumps given in (\ref{pii15}).  We thus seek $\z(z;t)$, $s(t)$, and $\th(t)$ which solve the equation
\begin{equation}\label{aa1}
2i\left(\frac{4}{3}\z(z;a)^3+s(a)\z(z;a)-\th(a)\right)=n(G(z)+2\pi iz)+2\pi i \al,
\end{equation}
or equivalently
\begin{equation}\label{aa2}
\left(\frac{1}{3}\z(z;a)^3+\frac{s(a)}{4}\z(z;t)-\frac{\th(a)}{4}\right)=\frac{n}{8i}(G(z)+2\pi iz)+\frac{\pi \al}{4}\,.
\end{equation}
As shown in \cite{CFU}, there is a unique solution to this equation which is regular at the origin, and it is defined in terms of the stationary points of the right side of (\ref{aa2}).  If we denote
\begin{equation}\label{aa3}
f(z;a)=\frac{n}{8i}(G(z)+2\pi iz)+\frac{\pi \al}{4}=\frac{\pi n}{8}\left[1+2z-2\int_0^z \rho(x)dx\right]+\frac{\pi \al}{4},
\end{equation}
then the zeroes of the function $f'(z;a)$ (here $'$ means differentiation with respect to $z$) are at the stationary points
\begin{equation}\label{aa4}
z_1(a)=\frac{2}{\pi a}\sqrt{a-1}, \quad z_2(a)=-\frac{2}{\pi a}\sqrt{a-1}\,.
\end{equation}
Notice that $z_1(1)=z_2(1)=0$.  We then have that (see \cite{CFU})
\begin{equation}\label{aa5}
\th(a)=-2\big(f(z_1;a)+f(z_2;a)\big), \quad s(a)^3=-36\big(f(z_2;a)-f(z_1;a)\big)^2.
\end{equation}
Notice that, as $\rho(x)$ is an even function of $x$, we have
\begin{equation}\label{aa7}
f(z;a)+f(-z;a)=\frac{n\pi}{4}+\frac{\pi \al}{2}, \quad f(z;a)-f(-z;a)=\frac{n\pi z}{2}-\frac{n\pi}{2}\int_0^z \rho(x) dx.
\end{equation}
It follows that 
\begin{equation}\label{aa8}
\th(a)=-\frac{n\pi}{2}-\pi \al, \quad s(a)^3=-9\pi^2n^2\left(z_1-\int_0^{z_1} \rho(x) dx\right)^2 .
\end{equation}

The possible zeroes of $\z(z;a)$ are the solutions to the equation
\begin{equation}\label{aa6}
f(z;a)=\frac{n\pi}{8}+\frac{\pi \al}{4},
\end{equation}
or equivalently,
\begin{equation}\label{aa10}
z=\int_0^z \rho(x)dx.
\end{equation}
The only solution to this equation, and thus the only possible zero of $\z(z)$, is at $z=0$.  Indeed, it is not difficult to see that $\z(0;a)=0$, and we thus have
\begin{equation}\label{aa11}
\z'(0;a)=\frac{4f'(0;a)}{s(a)}=\frac{\pi n(1-\sqrt{a})}{s(a)}, \quad \z''(0;a)=\frac{4f''(0;a)}{s(a)}=0.
\end{equation}

We now take 
\begin{equation}\label{pmz7}
\tilde{\bold U}(z;s,\th)=\left\{
\begin{aligned}
&\bold M(z)\begin{pmatrix} 0 & -1 \\  1 & 0 \end{pmatrix} \bold \Psi_2(\z(z);s,\th) \quad \Im z>0 \\
&\bold M(z) \bold \Psi_2(\z(z);s,\th) \quad \Im z<0\,. \\
\end{aligned}\right.
\end{equation}
This function is analytic for $z\in D(0,\ep)\setminus \Sigma_S$.  For $z \in \Sigma_S$, it satisfies the jump conditions
\begin{equation}\label{pmz8}
\tilde{\bold U}_+(z;s,\th)=\tilde{\bold U}_-(z;s,\th)j_U(z)
\end{equation}
where
\begin{equation}\label{pmz9}
j_U(z)=\left\{
\begin{aligned}
&\begin{pmatrix} 1 & -e^{2i(\frac{4}{3}\z(z)^3+s\z(z)-\th)\sg_3} \\ 0 & 1 \end{pmatrix} \quad z \in \ga_+  \\
&\begin{pmatrix} 1 & e^{-2i(\frac{4}{3}\z(z)^3+s\z(z)-\th)\sg_3} \\ 0 & 1 \end{pmatrix} \quad z \in \ga_-  \\
&\begin{pmatrix} 0 & 1 \\ -1 & 0 \end{pmatrix} \quad \z \in \R.
\end{aligned}\right.
\end{equation}
Let us check the large $n$ behavior of the function $\tilde{\bold U}(z).$  As $n\to \infty$, we have
\begin{equation}\label{pmz15}
\begin{aligned}
\tilde{\bold U}(z;s,\th)&=\left\{
\begin{aligned}
&\bold M(z)\begin{pmatrix} 0 & -1 \\  1 & 0 \end{pmatrix} (I+O(n^{-1/3}))\begin{pmatrix} 0 & 1 \\  -1 & 0 \end{pmatrix} \quad \Im z>0 \\
&\bold M(z)(I+O(n^{-1/3})) \quad \Im z<0\,. \\
\end{aligned}\right. \\
&=\bold M(z)(I+O(n^{-1/3}))\,.
\end{aligned}
\end{equation}
It follows that we may take our local solution to be $\bold U(z)=\tilde{\bold U}(z;s,\th)$, where $s=s(a)$ and $\th$ are given in (\ref{aa8}).

\subsection{The third and final transformation of the RHP}\label{tt}

We now consider the contour $\Sigma_X$, which consists of the circles $\partial D(-b, \ep)$, $\partial D(b, \ep)$, and $\partial D(0,\ep)$,  all oriented counterclockwise, together with the parts of 
$\Sigma_S \setminus [-b,b]$ 
which lie outside of the disks $D(-b, \ep)$, $D(b, \ep)$, and $D(0,\ep)$.
Let
\begin{equation}\label{tt1}
\bold X_n(z)=\left\{
\begin{aligned}
&\bold S_n(z) \bold M(z)^{-1} \quad \textrm{for} \ z \ \textrm{outside the disks }  D(-b, \ep), \ D(b, \ep), \ D(0,\ep) \\
&\bold S_n(z) \bold U(z)^{-1} \quad \textrm{for} \ z \ \textrm{inside the disks }  D(-b, \ep), \ D(b, \ep), \ D(0,\ep). \\
\end{aligned}\right.
\end{equation}
Then $\bold X_n(z)$ solves the following RHP:
\begin{enumerate}
\item   
$\bold X_n(z)$ is analytic on $\C \setminus \Sigma_X$.
\item
$\bold X_n(z)$ has the jump properties
\begin{equation}\label{tt2}
\bold X_{n+}(x)=\bold X_{n-}(z)j_X(z)
\end{equation}
where
\begin{equation}\label{tt3}
j_X(z)=\left\{
\begin{aligned}
&\bold M(z)\bold U(z)^{-1} \quad \textrm{for} \ z \ \textrm{on the circles} \\
&\bold M(z)j_S\bold M(z)^{-1} \quad \textrm{otherwise}.
\end{aligned}\right.
\end{equation}
\item
As $z\to\infty$, 
\begin{equation}\label{tt4}
\bold X_n(z)\sim I+\frac {\bold X_1}{z}+\frac {\bold X_2}{z^2}+\ldots
\end{equation}
\end{enumerate}
Additionally, we have that $j_X(z)$ is uniformly close to the identity in the following sense:
\begin{equation}\label{tt5}
j_X(z)=\left\{
\begin{aligned}
& I+O(n^{-1}) \quad \textrm{uniformly on the circles} \ \partial D(-b,\ep), \ \partial D(b,\ep) \\
& I+O(n^{-1/3}) \quad \textrm{uniformly on the circle} \ \partial D(0,\ep) \\
& I+O(e^{-C(z)n}) \quad \textrm{on the rest of} \ \Sigma_X, 
\end{aligned}\right.
\end{equation}
where $C(z)$ is a positive function bounded away from zero and with sufficient growth at infinity so that $e^{-C(z)}  \in L^1(\Sg_X)$.

If we set
\begin{equation}\label{tt6}
j_X^0(z)=j_X(z)-I,
\end{equation}
then (\ref{tt5}) becomes
\begin{equation}\label{tt7}
j_X^0(z)=\left\{
\begin{aligned}
&O(n^{-1}) \quad \textrm{uniformly on the circles} \  \partial D(-b,\ep), \ \partial D(b,\ep) \\
& O(n^{-1/3}) \quad \textrm{uniformly on the circle} \ \partial D(0,\ep) \\
&O(e^{-C(z)n}) \quad \textrm{on the rest of} \ \Sigma_X.
\end{aligned}\right.
\end{equation}
%The solution to the RHP for $\bold X_n$ is based on the following lemma:
%\begin{lem}
%Suppose v(z) is a function on $\Sigma_X$ solving the equation
%\begin{equation}\label{tt8}
%v(z)= I-\frac{1}{2\pi i} \int_{\Sigma_X} \frac{v(u)j_X^0(u)}{z_--u}du \quad \textrm{for} \ z\in\Sigma_X
%\end{equation}
%where $z_-$ means the value of the integral on the minus side of $\Sigma_X$.  Then 
%\begin{equation}\label{tt9}
%\bold X_n(z)= I-\frac{1}{2\pi i} \int_{\Sigma_X} \frac{v(u)j_X^0(u)}{z-u}du \quad \textrm{for} \ z\in\C\setminus\Sigma_X
%\end{equation}
%solves the RHP for $\bold X_n$.
%\end{lem}
%The proof of this lemma is immediate from the jump property of the Cauchy transform.  By assumption
%\begin{equation}\label{tt12}
%\bold X_{n-}(z)=v(z) 
%\end{equation}
%and the additive jump of the Cauchy transform gives
%\begin{equation}\label{tt13}
%\bold X_{n+}(z)-\bold X_{n-}(z)=v(z)j_X^0(z)=\bold X_{n-}(z)j_X^0(z)
%\end{equation}
%thus $\bold X_{n+}(z)=\bold X_{n-}(z)j_X(z)$.  Asymptotics at infinity are given by (\ref{tt9}).

The solution to this small norm problem is given by a series of perturbation theory.  Namely, define the functions $v_k$ recursively as
\begin{equation}\label{tt15}
v_k(z)=-\frac{1}{2\pi i} \int_{\Sigma_X} \frac{v_{k-1}(u)j_X^0(u)}{z_--u}du\,, \quad  \quad v_0(z)=I,
\end{equation}
where $z_-$ means that the integration takes place on the minus-side of the contour.
The solution is then
\begin{equation}\label{tt17}
\bold X_n(z)=I+\sum_{k=1}^\infty\bold X_{n,k}(z)\,,
\end{equation}
where
\begin{equation}\label{tt18}
\bold X_{n,k}(z)=-\frac{1}{2\pi i}\int_{\Sigma_X}\frac{v_{k-1}(u)j_X^0(u)}{z-u}du.
\end{equation}
In particular, this implies that 
\begin{equation}\label{tt19}
\bold X_n \sim I + O\left(\frac{1}{n^{1/3}(|z|+1)}\right) \quad  \textrm{as} \quad n \to \infty 
\end{equation}
uniformly for $z\in \C \setminus \Sigma_X$.

\medskip

\section{Evaluation of $\bold X_{1}$ in the critical case}\label{evalcrit}
The function $\bold \Psi(\z)$ satisfies, as $\z \to \infty,$
\begin{equation}\label{eval3}
\begin{aligned}
e^{-i(\frac{4}{3}\z^3+s\z)\sg_3} \mathbf \Psi(\z)^{-1} =\left(I+\frac{1}{2i\z}A+\frac{1}{8\z^2}B +O(\z^{-3})\right)\,,
\end{aligned}
\end{equation}
where 
\begin{equation}\label{eval4}
A=-R(s)\sg_3+q(s)\sg_1\sg_3\,, \quad B=\big(q(s)^2-R(s)^2\big)I-2\big(q'(s)+q(s)R(s)\big)\sg_1\,,
\end{equation}
$q(s)$ is the Hastings-McLeod solution to Painlev\'{e} II, and
\begin{equation}\label{eval5}
R(s)=\int_s^\infty q(\xi)^2d\xi\,,
\end{equation}
see \cite{DZ1}.
For $ u \in \partial D(0,\ep)$, we have that 
\begin{equation}\label{eval6}
j_X(u)=\left\{
\begin{aligned}
&\bold M(u) \sg_1 e^{i\th \sg_3} e^{-i(\frac{4}{3}\z(u)^3+s(u)\z(u))\sg_3} \bold \Psi(\z(u);s)^{-1}  e^{-i\th \sg_3} \sg_1 \bold M(u)^{-1} \quad \Im u >0 \\
&\bold M(u) \sg_3 e^{i\th \sg_3} e^{-i(\frac{4}{3}\z(u)^3+s(u)\z(u))\sg_3} \bold \Psi(\z(u);s)^{-1} \sg_3 e^{-i\th \sg_3} \bold M(u)^{-1} \quad \Im u <0\,.
\end{aligned}\right.
\end{equation}
From (\ref{eval3}) and (\ref{eval6}), we see that
\begin{equation}
j_X^0(u)=\left\{
\begin{aligned}\label{eval7}
&\mathbf M(u) \sg_1 e^{i\th \sg_3} \left(\frac{1}{2i\z(u)}A+\frac{1}{8\z(u)^2}B +O(n^{-1})\right)e^{-i\th \sg_3}\sg_1\mathbf M(u)^{-1}  \\
& \hspace{9 cm} \Im u >0 \\
&\mathbf M(u) \sg_3 e^{i\th \sg_3} \left(\frac{1}{2i\z(u)}A+\frac{1}{8\z(u)^2}B +O(n^{-1})\right)\sg_3 e^{-i\th \sg_3} \mathbf M(u)^{-1} \\ & \hspace{9 cm} \Im u <0\,.
\end{aligned}\right.
\end{equation}
Notice that, according to (\ref{pii7a}), the error term in (\ref{eval7}) is uniform for $s \in [s_0, \infty)$ for any $s_0 \in \R$.  As can be seen in Appendix \ref{subcrit}, the jump matrices on the contours $D(\pm b, \ep)$ are uniformly close to the identity for $a$ close to 1.  Thus if we write $a=1-xn^{-2/3}$, using (\ref{main3}) and (\ref{tt17}), any error we compute in the large $n$ expansion of ${\bf X}_n$ will be uniform for $x \in [L, \infty)$ for any $L \in \R$.
Multiplying out the above expression, we get
\begin{equation}\label{eval8}
j_X^0(u)=\frac{R_1(u)}{2i\z(u)}+\frac{R_2(u)}{8\z(u)^2}+O(n^{-1}),
\end{equation}
where
\begin{equation}\label{eval9}
\begin{aligned}
R_1(u)&=\frac{R(s)}{2}\bigg(\tilde{\ga}(u)^2+\tilde{\ga}(u)^{-2}\bigg)\sg_3+\frac{R(s)}{2i}\bigg(\tilde{\ga}(u)^2-\tilde{\ga}(u)^{-2}\bigg)\sg_1+(-1)^n\cos(2\pi\al) q(s)\sg_3\sg_1 \\
& \qquad -\frac{(-1)^n}{2} \sin(2\pi\al)\bigg(\tilde{\ga}(u)^2-\tilde{\ga}(u)^{-2}\bigg)q(s) \sg_3 - \frac{(-1)^n}{2i}\sin(2\pi\al) \bigg(\tilde{\ga}(u)^2+\tilde{\ga}(u)^{-2}\bigg)q(s)\sg_1\,, \\
R_2(u)&=\big(q(s)^2-R(s)^2\big)I-i(-1)^n\cos(2\pi \al)\bigg(\tilde{\ga}(u)^2-\tilde{\ga}(u)^{-2}\bigg)\big(q'(s)+q(s)R(s)\big)\sg_3 \\
&\qquad -(-1)^n\cos(2\pi\al)\bigg(\tilde{\ga}(u)^2+\tilde{\ga}(u)^{-2}\bigg)\big(q'(s)+q(s)R(s)\big)\sg_1 \\
& \qquad -2i(-1)^n \sin(2\pi \al) \big(q'(s)+q(s)R(s)\big) \sg_3\sg_1.
\end{aligned}
\end{equation}
and $\tilde{\ga}(u)^2$ is the analytic continuation of $\ga(u)^2$ from the upper half plane.   The Taylor expansion of $\tilde{\ga}(u)^2$ about the origin is
\begin{equation}\label{eval10}
\tilde{\ga}(u)^2=-i\left(1+\frac{z}{b}+\frac{z^2}{2b}+\frac{z^3}{2b^3}+O(z^4)\right), \quad b=\frac{2}{\pi \sqrt{a}},
\end{equation}
which in turn gives that
\begin{equation}\label{eval11}
\tilde{\ga}(u)^2-\tilde{\ga}(u)^{-2}=-i\left(2+\frac{z^2}{b^2}+O(z^4)\right), \quad \tilde{\ga}(u)^2+\tilde{\ga}(u)^{-2}=-i\left(\frac{2z}{b}+\frac{z^3}{b^3}+O(z^5)\right).
\end{equation}

Expanding each term in (\ref{eval8}) about the origin gives
\begin{equation}\label{eval12}
\begin{aligned}
j_X^0(u)&=\frac{1}{2i\z'(0)}\left(\frac{R_1(0)}{u}+R_1'(0)+\left(\frac{R_1''(0)}{2}-\frac{\z'''(0)R_1(0)}{6\z'(0)}\right)u+\cdots \right) \\
& \qquad +\frac{1}{8\z'(0)^2}\left(\frac{R_2(0)}{u^2}+\frac{R_2'(0)}{u}+\frac{R_2''(0)}{2}-\frac{\z'''(0)R_2(0)}{3\z'(0)}+\cdots \right)+O(n^{-1})\,.
\end{aligned}
\end{equation}
Let us write this expansion as
\begin{equation}\label{eval13}
\begin{aligned}
j_X^0(u)&=\frac{1}{n^{1/3}}\left(\frac{A_{-1}}{u}+A_0+A_1 u+\cdots \right) +\frac{1}{n^{2/3}}\left(\frac{B_{-2}}{u^2}+\frac{B_{-1}}{u}+B_0+\cdots \right) +O(n^{-1}). \\
\end{aligned}
\end{equation}
In particular,
\begin{equation}\label{eval13a}
A_{-1}=\frac{n^{1/3}R_1(0)}{2i\z'(0)}\,, \quad A_{0}=\frac{n^{1/3}R_1'(0)}{2i\z'(0)}\,, \quad B_{-1}=\frac{n^{2/3}R_2'(0)}{8\z'(0)^2}\,, \quad B_{-2}=\frac{n^{2/3}R_2(0)}{8\z'(0)^2}\,.
\end{equation}

Let us evaluate $\mathbf X_{n,1}(z)$.  We have
\begin{equation}\label{eval14}
\mathbf X_{n,1}(z)=-\frac{1}{2\pi i}\int_{\partial D(0,\ep)} \frac{j_X^0(u)}{z-u} du+O(n^{-1}).
\end{equation}
This integral can be evaluated via a residue calculation.  Indeed, for $z \in \C \setminus D(0,\ep)$,
\begin{equation}\label{eval15}
\frac{1}{2\pi i} \int_{\d D(0,\ep)} \frac{j_X^0(u)du}{u-z}=-\frac{1}{z}\left[\frac{A_{-1}}{n^{1/3}}+\frac{1}{n^{2/3}}\left(B_{-1}+\frac{B_{-2}}{z}\right)\right]+O(n^{-1}).
\end{equation}
In particular, this gives that, for $z\in \Sg_X$,
\begin{equation}\label{eval16}
v_1(z)=-\frac{1}{z}\left[\frac{A_{-1}}{n^{1/3}}+\frac{1}{n^{2/3}}\left(B_{-1}+\frac{B_{-2}}{z}\right)\right]+O(n^{-1}),
\end{equation}
and thus
\begin{equation}\label{eval17}
\begin{aligned}
\mathbf X_{n,2}(z)&=\frac{1}{2\pi i } \int_{\d D(0,\ep)} \frac{v_1(u)j_X^0(u)}{u-z}du+O(n^{-4/3}) \\
&=-\frac{1}{2\pi i } \int_{\d D(0,\ep)} \frac{1}{(u-z)u}\left[\frac{A_{-1}}{n^{1/3}}\right]j_X^0(u)du+O(n^{-1}).
\end{aligned}
\end{equation}
For $z \in \C \setminus D(0,\ep)$ this is evaluated as
\begin{equation}\label{eval18}
\mathbf X_{n,2}(z)=\frac{1}{z}\left[\frac{1}{n^{2/3}}\left(A_{-1}A_0+\frac{A_{-1}^2}{z}\right)\right]+O(n^{-1}).
\end{equation}
This can also be taken as a formula for $v_2(z)$ for $z\in \Sg_X$.

Let $[\mathbf X_{n,k}]_j$ be the coefficient of the $z^{-j}$ term in the expansion of $\mathbf X_{n,k}(z)$ at $z=\infty$, so that
\begin{equation}\label{eval18a}
\mathbf X_{n,k}(z)=\frac{[\mathbf X_{n,k}]_1}{z}+\frac{[\mathbf X_{n,k}]_2}{z^2}+O(z^{-3}).
\end{equation}
From (\ref{eval16}) and (\ref{eval18}), we see that
\begin{equation}\label{eval19}
[\mathbf X_{n,1}]_1=-\frac{A_{-1}}{n^{1/3}}-\frac{B_{-1}}{n^{2/3}}+O(n^{-1}), \qquad
[\mathbf X_{n,2}]_1=\frac{A_{-1}A_0}{n^{2/3}}+O(n^{-1}).
\end{equation}
Adding these together gives that
\begin{equation}\label{eval20}
\begin{aligned}
\mathbf X_{1}&=[\mathbf X_{n,1}]_1+[\mathbf X_{n,2}]_1+O(n^{-1})\\
&=-\frac{A_{-1}}{n^{1/3}}+\frac{1}{n^{2/3}}(-B_{-1}+A_{-1}A_0)+O(n^{-1}) \\
&=\frac{1}{2\z'(0)}\left[iR_1(0)-\frac{R_2'(0)}{4\z'(0)}-\frac{R_1(0)R_1'(0)}{2\z'(0)}\right]+O(n^{-1}).
\end{aligned}
\end{equation}
Also from (\ref{eval16}) and (\ref{eval18}), we have
\begin{equation}\label{eval20a}
\begin{aligned}
\mathbf X_{2}&=[\mathbf X_{n,1}]_2+[\mathbf X_{n,2}]_2+O(n^{-1})\\
&=-\frac{B_{-2}}{n^{2/3}}+\frac{A_{-1}^2}{n^{2/3}}+O(n^{-1}) \\
&=-\frac{1}{8\z'(0)^2}\left[R_2(0)+2R_1(0)^2\right]+O(n^{-1}).
\end{aligned}
\end{equation}
Notice that
 \begin{equation}\label{eval23}
 \begin{aligned}
  R_1(0)&=-R(s)\sg_1+(-1)^n\cos(2\pi \al)q(s)\sg_3\sg_1+i(-1)^n \sin(2\pi \al) q(s)\sg_3, \\
R_1'(0)&=\frac{\pi \sqrt{a}R(s)}{2i}\sg_3+\frac{(-1)^n \pi \sqrt{a}}{2} \sin(2\pi\al) q(s) \sg_1, \\
R_2(0)&=\big(q(s)^2-R(s)^2\big)I-2(-1)^n\cos(2\pi\al)\big(q'(s)+q(s)R(s)\big)\sg_3 \\
& \qquad -2i(-1)^n\sin(2\pi\al)\big(q'(s)+q(s)R(s)\big)\sg_3\sg_1\,, \\
R_2'(0)&=(-1)^n i\pi \sqrt{a} \cos(2\pi \al)\big(q'(s)+q(s)R(s)\big)\sg_1.
  \end{aligned}
 \end{equation}
 
 \medskip
 
 \section{Proof of Proposition \ref{asymptoticsh} and Theorem \ref{kernel}}
 The quantities $h_{n,n}$ and $h_{n,n-1}$ are encoded in the matrix $\bold P_1$ described in (\ref{IP2}).  According to (\ref{red9}), (\ref{ft1}), and (\ref{tt1}), 
 \begin{equation}\label{as1}
\bold P_1=\bold K^{-1} \begin{pmatrix} [\bold X_1]_{11} + [\bold M_1]_{11} & \big([\bold X_1]_{12} + [\bold M_1]_{12}\big)e^{nl} \\ \big([\bold X_1]_{21} + [\bold M_1]_{21}\big)e^{-nl} &   [\bold X_1]_{22} + [\bold M_1]_{22}  \end{pmatrix} \bold K.
\end{equation}
It follows from (\ref{IP6}) that
\begin{equation}\label{as2}
\begin{aligned}
h_{n,n}&=-2\pi i\big([\bold X_1]_{12} + [\bold M_1]_{12}\big)e^{nl} \\
&=-2\pi i \left(\frac{1}{\pi^2 a e}\right)^n\left([\bold X_1]_{12} + \frac{ib}{2}\right)\\
&=\frac{2}{\sqrt{a}}\left(\frac{1}{\pi^2 a e}\right)^n\big(1-[\bold X_1]_{12}\pi i \sqrt{a} \big)  \,,
\end{aligned}
\end{equation}
and
\begin{equation}\label{as3}
\begin{aligned}
h_{n,n-1}^{-1}&=\frac{1}{-2\pi i}\big([\bold X_1]_{21} + [\bold M_1]_{21}\big)e^{-nl} \\
&=\frac{1}{-2\pi i} \left({\pi^2 a e}\right)^n\left([\bold X_1]_{21} - \frac{ib}{2}\right)\\
&=\frac{1}{2\sqrt{a}\pi^2}\left({\pi^2 a e}\right)^n\big(1+[\bold X_1]_{21}\pi i \sqrt{a} \big)  \,.
\end{aligned}
\end{equation}
According to (\ref{eval20}) and (\ref{eval23}),
\begin{equation}\label{as4}
\begin{aligned}
\left[\bold X_1\right]_{12}\pi i \sqrt{a} =& \frac{2^{2/3}}{n^{1/3}}\bigg(R-(-1)^n\cos(2\pi\al) q\bigg) \\
&\quad +\frac{2^{1/3}}{n^{2/3}}\bigg((-1)^n\cos(2\pi\al)(q'+2qR)-R^2+q^2\sin^2(2\pi\al)\bigg)+O(n^{-1}),
\end{aligned}
\end{equation}
and
\begin{equation}\label{as5}
\begin{aligned}
\left[\bold X_1\right]_{21}\pi i \sqrt{a} =& \frac{2^{2/3}}{n^{1/3}}\bigg(R+(-1)^n\cos(2\pi\al) q\bigg) \\
&\quad +\frac{2^{1/3}}{n^{2/3}}\bigg((-1)^n\cos(2\pi\al)(q'+2qR)+R^2-q^2\sin^2(2\pi\al)\bigg)+O(n^{-1}),
\end{aligned}
\end{equation}
where the functions $q$ and $R$ written with no argument refer to those functions evaluated at $s$.  This proves equation (\ref{main6}).

Also from (\ref{red9}), (\ref{ft1}), and (\ref{tt1}) we get
\begin{equation}
\frac{[\bold P_2]_{21}}{[\bold P_1]_{21}}-[\bold P_1]_{11}=\frac{[\bold X_2]_{21} + [\bold M_2]_{21}+[{\bf X}_1 {\bf M}_1]_{21}}{[\bold X_1]_{21}+[\bold M_1]_{21}}-[\bold X_1]_{11}-[\bold M_1]_{11}.
\end{equation}
Asymptotic evaluation of this expression from (\ref{eval20}) and (\ref{eval20a}), in light of (\ref{IP6a}), proves (\ref{main7a}).  Thus Proposition \ref{asymptoticsh} is proved.

We now turn to the proof of Theorem \ref{kernel}.  Consider $z\in D(0,\ep) \cap \Om^{\nabla}_{\pm}$.  In this region we have
\begin{equation}\label{ker1}
\bold P_n(z)=\left\{
\begin{aligned}
&\bold K^{-1} e^{\frac{nl}{2}\sg_3} \bold X_n(z)\bold M(z) \sg_1 e^{i\th\sg_3} \bold \Psi(\z(z)) e^{i(\z(z)^3+s\z(z;t)-\th)\sg_3} \sg_1 j_+(z) e^{n(g(z)-l/2)\sg_3} \\
& \qquad \times \bold K (\bold D_+^u(z))^{-1} \quad \Im z >0 \\
&\bold K^{-1} e^{\frac{nl}{2}\sg_3} \bold X_n(z)\bold M(z)e^{i\th\sg_3}\sg_3\bold \Psi(\z(z);s) e^{i(\z(z)^3+s\z(z;t)-\th)\sg_3} \sg_3 j_-(z)^{-1}  e^{n(g(z)-l/2)} \\
& \qquad \times \bold K (\bold D_-^u(z))^{-1} \quad \Im z <0.
\end{aligned}\right.
\end{equation}
Of course $\z(z)$ and $s$ depend on the parameter $a$, but we suppress the notation here.
Make the scaling 
\begin{equation}\label{ker2}
x=\frac{u}{cn^{1/3}},\quad y=\frac{v}{cn^{1/3}}, \quad c=\pi2^{-5/3}\,, \quad a=1-Ln^{-2/3}\,.
\end{equation}
Notice that in this scaling
\begin{equation}\label{ker3}
\z(x)=u+O(n^{-2/3}), \quad \z(y)=v+O(n^{-2/3}),\quad s=2^{2/3}L+O(n^{-2/3}),
\end{equation}
and that
\begin{equation}\label{ker4}
s(1)=s_\infty=\lim_{n\to\infty}s(t)=2^{2/3}L\,.
\end{equation}
It then follows that
\begin{equation}\label{ker5}
\bold P_n(x)^{-1} \bold P_n(y)=\left\{
\begin{aligned}
&\bold D_+^u(x) \bold K^{-1} e^{-n(g(x)-l/2)\sg_3} j_+(x)^{-1} \sg_1 e^{-i(\frac{4}{3}\z(x)^3+s\z(x)-\th)\sg_3} \bold \Psi(\z(x);s)^{-1} \\
&\quad \times \left(I+O\left(\frac{u-v}{n^{1/3}}\right)\right)\bold \Psi(\z(y);s)e^{i(\frac{4}{3}\z(y)^3+s\z(y)-\th)\sg_3}\sg_1 j_+(y) \\
&\quad \times e^{n(g(y)-l/2)\sg_3}\bold K \bold D_+^u(y)^{-1}\quad \Im x,y>0 \\
&\bold D_-^u(x) \bold K^{-1} e^{-n(g(x)-l/2)\sg_3} j_-(x) \sg_3 e^{-i(\frac{4}{3}\z(x)^3+s\z(x)-\th)\sg_3} \bold \Psi(\z(x);s)^{-1} \\
&\quad \times \left(I+O\left(\frac{u-v}{n^{1/3}}\right)\right)\bold \Psi(\z(y);s)e^{i(\frac{4}{3}\z(y)^3+s\z(y)-\th)\sg_3}\sg_3 j_-(y)^{-1} \\
&\quad \times e^{n(g(y)-l/2)\sg_3}\bold K \bold D_-^u(y)^{-1}\quad \Im x,y<0 \,.\\
\end{aligned}\right.
\end{equation}
Taking limits from either the upper or lower half planes, we find that for $x$ and $y$ real and in $D(0,\ep)$,
\begin{equation}\label{ker6}
\begin{aligned}
K_n(x,y)&=\frac{e^{-nV(x)/2}e^{-nV(y)/2}}{n(x-y)} \begin{pmatrix} 0 & 1 \end{pmatrix} \bold P_n(x)^{-1}\bold P_n(y) \begin{pmatrix} 1 \\ 0 \end{pmatrix} \\
&=\frac{cn^{1/3}}{2in\pi(u-v)} \begin{pmatrix} -e^{\frac{nG(x)}{2}} & e^{-\frac{nG(x)}{2}} \end{pmatrix} e^{-i(\frac{4}{3}\z(x)^3+s\z(x)-\th)\sg_3} \bold \Psi(\z(x);s)^{-1} \\
&\quad \times \left(I+O\left(\frac{u-v}{n^{1/3}}\right)\right)\bold \Psi(\z(y);s)e^{i(\frac{4}{3}\z(y)^3+s\z(y)-\th)\sg_3} \begin{pmatrix} e^{-\frac{nG(y)}{2}} \\ e^{\frac{nG(y)}{2}}\end{pmatrix} \\
&=\frac{cn^{1/3}}{2in\pi(u-v)} \begin{pmatrix} -e^{-i\pi (n x+\al)} & e^{i\pi(n x+\al)} \end{pmatrix} \bold \Psi(\z(x);s)^{-1} \left(I+O\left(\frac{u-v}{n^{1/3}}\right)\right) \\
& \quad \times \bold\Psi(\z(y);s) \begin{pmatrix} e^{i\pi(n y+\al)} \\ e^{-i\pi(n y+\al)}\end{pmatrix}.
\end{aligned}
\end{equation}
In order for this kernel to make sense, we must choose $x$ and $y$ to lie on the lattice $L_{n,\al}$.  If we take $u\neq v$ and
\begin{equation}\label{ker7}
x=\frac{k_n-\al}{n}, \quad y=\frac{m_n-\al}{n}; \quad k_n, m_n \in \Z,
\end{equation}
then the above expression becomes
\begin{equation}\label{ker8}
K_n(x,y)=\frac{(-1)^{k_n+m_n}cn^{1/3}}{2in\pi(u-v)} \begin{pmatrix} -1 & 1 \end{pmatrix} \bold \Psi\big(\z(x);s)\big)^{-1} \left(I+O\left(\frac{u-v}{n^{1/3}}\right)\right)\bold \Psi\big(\z(y);s\big) \begin{pmatrix} 1\\ 1\end{pmatrix}.
\end{equation}
According to (\ref{pii21}), 
\begin{equation}\label{ker9}
\bold \Psi(\z;s)\begin{pmatrix} 1 \\ 1 \end{pmatrix} = \begin{pmatrix}  \Phi_1(\z;s) \\  \Phi_2(\z;s) \end{pmatrix} , \quad \begin{pmatrix} -1 & 1 \end{pmatrix} \bold \Psi^{-1}(\z;s) = \begin{pmatrix} -\Phi_2(\z;s) & \Phi_1(\z;s) \end{pmatrix},
\end{equation}
thus (\ref{ker8}) implies (\ref{main15}).  Notice that the presence of the factors $e^{\pm i\pi(n z+\al)}$ in (\ref{ker6}) plays little role in the limiting values of the off diagonal terms.  To compute the diagonal terms, we must take a limit of (\ref{ker6}) as $u\to v$, and these factors do indeed play a role.  In taking this limit, we must take into account the fact that $\det \bold \Psi =1$, and we obtain (\ref{main17}), which proves Theorem \ref{kernel}.

\appendix

\section{Proof of Lemmas \ref{smalla} and \ref{smallr}}\label{prooflem1}
The proofs of Lemmas \ref{smalla} and \ref{smallr} are based on the following estimate of the rate of convergence of a Riemann sum, which is slightly sharper than the a priori rate of $O(\ep)$.
\begin{lem}\label{aplem}  Let $f(x_1,\dots, x_n)$ be an analytic function of $n$ variables.  Let the functions $A_k(x_1,\dots, x_n)$ be defined recursively via
\begin{equation}\label{ap9}
A_{k}(x_1,\dots,x_n)=\sum_{j=1}^n \frac{\d}{\d x_j} A_{k-1}(x_1,\dots,x_n)\,, \quad A_0 = f.
\end{equation}
Suppose that 
\begin{equation}\label{ap11}
\int \cdots \int_{\R^n} A_1(x_1, \dots, x_n) dx_1 \cdots dx_n=\int \cdots \int_{\R^n} A_2(x_1, \dots, x_n) dx_1 \cdots dx_n=0.
\end{equation}
Then as $\ep \to 0$,
\begin{equation}\label{ap13a}
\int \cdots \int_{\R^n} f(x_1, \dots, x_n) dx_1 \cdots dx_n-\ep^n\sum_{x_1, \dots, x_n \in \ep\Z}f(x_1, \dots, x_n)=O(\ep^4).
\end{equation}
\end{lem}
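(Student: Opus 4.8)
\noindent The plan is to establish the one–dimensional version by the Euler–Maclaurin summation formula and then iterate over the $n$ coordinates; equivalently, and a little more cleanly, one may invoke the Poisson summation formula, which I sketch as well.

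First, freeze all but one variable and regard $g(x)=f(\dots,x,\dots)$ as a function of a single real variable. Applying Euler–Maclaurin to $\ep\sum_{k\in\Z}g(\ep k)$ and sending the endpoints to $\pm\infty$ gives
\[
\ep\sum_{k\in\Z}g(\ep k)-\int_{\R}g(y)\,dy=\frac{\ep}{2}\big(g(+\infty)+g(-\infty)\big)+\sum_{j\ge 1}\frac{B_{2j}}{(2j)!}\,\ep^{2j}\big[g^{(2j-1)}(y)\big]_{-\infty}^{+\infty}+\rho_p(\ep),
\]
where the $B_{2j}$ are Bernoulli numbers and the remainder $\rho_p(\ep)=\tfrac{\pm\ep^{2p}}{(2p)!}\int_{\R}\widetilde B_{2p}(y/\ep)\,g^{(2p)}(y)\,dy=O(\ep^{2p})$ once $g^{(2p)}\in L^1(\R)$, with $\widetilde B_{2p}$ the periodized Bernoulli polynomial. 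Because the odd Bernoulli numbers beyond $B_1$ vanish there is no $\ep^3$ term, so as soon as $g$ and $g'$ decay at $\pm\infty$ the left side is $O(\ep^4)$; note that these two decay statements are exactly $\int_{\R}g'=\int_{\R}g''=0$, the one–variable instance of hypothesis (\ref{ap11}).

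Second, apply this estimate successively — sum over $x_1\in\ep\Z$, then over $x_2\in\ep\Z$, and so on — replacing each sum by its integral plus $\ep$–expansion; analyticity of $f$ legitimises differentiating under the sums and integrals and controls the remainders. Tracking the contributions of orders $\ep$ and $\ep^2$ accumulated through the $n$ stages, the total order–$\ep$ term is $\tfrac12\int_{\R^n}\sum_j\d_{x_j}f=\tfrac12\int A_1$, and the total order–$\ep^2$ term is a fixed multiple of $\int_{\R^n}\sum_{j,k}\d_{x_j}\d_{x_k}f=\int A_2$ plus products of two order–$\ep$ boundary pieces, which are themselves governed by the now–vanishing $\int A_1$; by (\ref{ap11}) all of this is zero, leaving an $O(\ep^4)$ error. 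Alternatively, the $n$–dimensional Poisson formula $\ep^n\sum_{\mathbf k\in\Z^n}f(\ep\mathbf k)=\sum_{\mathbf m\in\Z^n}\widehat f(2\pi\mathbf m/\ep)$ reduces the error to $\sum_{\mathbf m\ne\mathbf 0}\widehat f(2\pi\mathbf m/\ep)$; integrating by parts four times in a coordinate $x_j$ with $|m_j|$ maximal — justified once the low–order derivatives decay, which (\ref{ap11}) encodes — extracts a factor $\ep^4/|m_j|^4$ and makes the sum $O(\ep^4)$.

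The only real work is the \emph{uniform} $O(\ep^4)$ estimate of the remainder, with an implied constant controlled well enough to survive the scaling in the applications. This is precisely where analyticity pays off: pushing each Fourier integral $\widehat f(2\pi\mathbf m/\ep)$ into the strip of analyticity upgrades the polynomial bound $O(\ep^4|\mathbf m|^{-4})$ to the exponential bound $O(e^{-c|\mathbf m|/\ep})$, provided $f$ remains bounded and integrable on that strip — which, for the concrete $f$'s behind (\ref{n1}), (\ref{n1b}) and (\ref{dope6}), namely a Gaussian times a polynomial, is immediate. For those $f$ every boundary term at infinity vanishes identically, each $A_k$ is again the total derivative of a Schwartz function so (\ref{ap11}) holds by inspection, and the $O(\ep^4)$ bound is met with exponential room to spare.
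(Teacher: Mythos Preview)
Your approach is correct in outline but genuinely different from the paper's. The paper does not use Euler--Maclaurin or Poisson summation at all: it partitions $\R^n$ into lattice cubes, Taylor-expands $f$ about the corner of each cube, integrates term by term to produce an expansion whose successive terms are (essentially) $\ep^n\sum A_k$, and then \emph{bootstraps}---applying the same Riemann-sum estimate to $A_1$ and $A_2$ themselves and using the hypothesis $\int A_1=\int A_2=0$ to push the error down to $O(\ep^4)$. The cancellation of the $\ep^3$ term there is an explicit algebraic coincidence among the resulting coefficients. By contrast, your Euler--Maclaurin route gets the absence of an $\ep^3$ term for free from $B_3=0$, and your Poisson route gives the much stronger exponential bound for the analytic Gaussians actually used in the applications; both are cleaner in that respect.

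One point deserves care: you identify hypothesis~(\ref{ap11}) with the vanishing of the Euler--Maclaurin boundary terms, but these are not the same condition. The paper's hypothesis is $\int_{\R^n}A_1=\int_{\R^n}A_2=0$, a statement about \emph{integrals} over $\R^n$; what your one-variable Euler--Maclaurin step needs is the \emph{pointwise} decay $f(\dots,\pm\infty,\dots)=\partial_j f(\dots,\pm\infty,\dots)=0$ in each coordinate separately, which is strictly stronger. Consequently your second paragraph, where you ``track the order-$\ep$ and order-$\ep^2$ contributions'' and set them equal to $\tfrac12\int A_1$ and a multiple of $\int A_2$, is not quite right: on the full line with decaying data those boundary contributions are identically zero at every stage, so there is nothing to track, and~(\ref{ap11}) never enters your argument in the form stated. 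This is harmless for the lemma's actual uses in the paper (Gaussian times polynomial, where pointwise decay is immediate), but as a proof of the lemma \emph{as stated} you are tacitly assuming more than~(\ref{ap11}). The paper's bootstrap, by contrast, uses exactly $\int A_1=\int A_2=0$ and nothing pointwise.
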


\begin{proof}
A multi-integral my be estimated by writing
\begin{equation}\label{ap6}
\int \cdots \int_{\R^n} f(x_1, \dots, x_n) dx_1 \cdots dx_n=\sum_{x_1, \dots, x_n \in \ep\Z} \int_{x_1}^{x_1+\ep} \cdots \int_{x_n}^{x_n+\ep}f(t_1, \dots, t_n) dt_1 \cdots dt_n.
\end{equation}
Expanding each integrand on the RHS as a Taylor series and integrating term by term, this becomes
\begin{equation}\label{ap7}
\sum_{x_1, \dots, x_n \in \ep\Z} \bigg[\ep^n f+\frac{A_1}{2}\ep^{n+1}+\frac{A_2}{6}\ep^{n+2}+\frac{A_3}{24}\ep^{n+3}+O(\ep^{n+4})\bigg],
\end{equation}
where
\begin{equation}\label{ap8}
f\equiv f(x_1,\dots,x_n)\,, \quad A_k\equiv A_j(x_1, \dots, x_n)\,.
\end{equation}
Thus the error in the Riemann sum is given by
\begin{equation}\label{ap10}
\begin{aligned}
\int \cdots& \int_{\R^n} f dx_1 \cdots dx_n-\sum_{x_1, \dots, x_n \in \ep\Z}\ep^n f=\ep^n \sum_{x_1, \dots, x_n \in \ep\Z} \bigg[\frac{A_1}{2}\ep+\frac{A_2}{6}\ep^{2}+\frac{A_3}{24}\ep^{3}+O(\ep^{4})\bigg] \\
&=\left(\ep^n \sum_{x_1, \dots, x_n \in \ep\Z} A_1\right)\frac{\ep}{2}+\left(\ep^n \sum_{x_1, \dots, x_n \in \ep\Z} A_2\right)\frac{\ep^2}{6}+\left(\ep^n \sum_{x_1, \dots, x_n \in \ep\Z} A_3\right)\frac{\ep^3}{24}+O(\ep^4).
\end{aligned}
\end{equation}
If (\ref{ap11}) holds, then by the same argument,
\begin{equation}\label{ap12}
\begin{aligned}
\ep^n\sum_{x_1, \dots, x_n \in \ep\Z} A_1(x_1, \dots, x_n)&=-\left(\ep^n \sum_{x_1, \dots, x_n \in \ep\Z} A_2\right)\frac{\ep}{2}-\left(\ep^n \sum_{x_1, \dots, x_n \in \ep\Z} A_3\right)\frac{\ep^2}{6}+O(\ep^3), \\
\ep^n\sum_{x_1, \dots, x_n \in \ep\Z} A_2(x_1, \dots, x_n)&=-\left(\ep^n \sum_{x_1, \dots x_n \in \ep\Z} A_3\right)\frac{\ep}{2}+O(\ep^2),
\end{aligned}
\end{equation}
and (\ref{ap10}) can be written as
\begin{equation}\label{ap13}
\begin{aligned}
\int \cdots& \int_{\R^n} f(x_1, \dots, x_n) dx_1 \cdots dx_n-\ep^n\sum_{x_1, \dots x_n \in \ep\Z}f(x_1, \dots, x_n) \\
&=-\left(\ep^n \sum_{x_1, \dots, x_n \in \ep\Z} A_2\right)\frac{\ep^2}{12}-\left(\ep^n \sum_{x_1, \dots, x_n \in \ep\Z} A_3\right)\frac{\ep^3}{24}+O(\ep^4)
&=O(\ep^4).
\end{aligned}
\end{equation}
\end{proof}

Let us first apply this result to the proof of Lemma \ref{smallr}.  By rescaling we can write (\ref{dope6}) as
\begin{equation}\label{ap20}
Z_n^{(DOPE)}(r)=\left(\frac{2}{\pi^2 n r}\right)^{n^2/2} \left(\ep^n \sum_{x_1, \dots, x_n \in \ep \{\Z-\al\}} \prod_{1\le j < k \le n} (x_k -x_j)^2 \exp\left\{-\sum_{j=1}^n x_j^2\right\}\right)
\end{equation}
where 
\begin{equation}\label{ap21}
\ep=\pi \sqrt{\frac{r}{2n}}.
\end{equation}
We have here an explicit prefactor times a Riemann sum for the function
\begin{equation}
f(x_1, \dots, x_n)=\prod_{1\le j < k \le n} (x_k -x_j)^2 \exp\left\{-\sum_{j=1}^n x_j^2\right\},
\end{equation}
which is exactly the integrand in (\ref{main18}).
One may easily check then that
\begin{equation}\label{ap22}
\begin{aligned}
A_1(x_1, \dots, x_n)&=-2(x_1+\cdots+x_n)f(x_1, \dots, x_n)\,, \\
 A_2(x_1, \dots, x_n)&=2f(x_1, \dots, x_n)\bigg(2(x_1+\cdots+x_n)^2-n\bigg) \\
 &=-2\bigg(A_1(x_1, \dots, x_n)(x_1+\cdots+x_n)+nf(x_1,\dots,x_n)\bigg).
 \end{aligned}
\end{equation}
It is easy to see that $A_1$ has the symmetry $A_1(x_1, \dots, x_n)=-A_1(-x_1, \dots, -x_n)$, from which it follows that
\begin{equation}\label{ap23}
\int \cdots \int_{\R^n} A_1(x_1, \dots, x_n) dx_1 \cdots dx_n=0.
\end{equation}
It is a simple exercise to integrate by parts to see that
\begin{equation}\label{ap14}
\int \cdots \int_{\R^n} A_2(x_1, \dots, x_n) dx_1 \cdots dx_n=0.
\end{equation}
It then follows that as $r\to 0$,
\begin{equation}\label{app25}
\begin{aligned}
Z_n^{(DOPE)}(r)&=\left(\frac{2}{\pi^2 n r}\right)^{n^2/2} Z_n^{(GUE)} \bigg(1+O(\ep^4)\bigg) \\
&=\left(\frac{2}{\pi^2 n r}\right)^{n^2/2} Z_n^{(GUE)} \left(1+O\left(\frac{r^2}{n^2}\right)\right).
\end{aligned}
\end{equation}
Taking logarithms gives (\ref{fe5}).

We now prove Lemma \ref{smalla} in the absorbing case.  The proof in the reflecting case is nearly identical.  Using symmetry about the origin and a rescaling of (\ref{n1}), we get
\begin{equation}\label{ap1}
\begin{aligned}
\mathbb{P} \bigg(\max_{0<t<1} &b_N^{(BE)}(t)<M\bigg) = \\
&\frac{2^{N(N+1)}}{N! \pi^{N/2} \prod_{k=0}^{N-1} (2k+1)!} \left(\ep^N \sum_{{\bf x} \in (\ep \N)^N} \big(\De({\bf x^2})\big)^2 \left(\prod_{j=1}^N x_j^2\right)\exp\left\{-\sum_{j=1}^N x_j^2\right\}\right) ,
\end{aligned}
\end{equation}
where 
\begin{equation}\label{ap2}
\ep=\frac{\pi}{M\sqrt{2}}.
\end{equation}
We again have an explicit prefactor times a Riemann sum for the integral
\begin{equation}\label{ap3}
\int_0^\infty \cdots \int_0^\infty \big(\De({\bf x^2})\big)^2 \left(\prod_{j=1}^N x_j^2\right)\exp\left\{-\sum_{j=1}^N x_j^2\right\}dx_1\cdots dx_N.
\end{equation}
This integral is the partition function for the Laguerre unitary ensemble.  Its value is known (see e.g., \cite{For}), and it exactly cancels the prefactor, so that
\begin{equation}\label{ap5}
\lim_{M \to \infty} \mathbb{P} \bigg(\max_{0<t<1} b_N(t)<M\bigg)=1.
\end{equation}
Lemma \ref{aplem} also holds for multi-integrals over $\R_+^n$, that is if we replace $\R^n$ with $\R_+$ and $\Z$ with $\N$,  and we can thus use Lemma \ref{aplem} with 
\begin{equation}\label{ap14b}
f(x_1, \dots, x_N)=\big(\De({\bf x^2})\big)^2 \left(\prod_{j=1}^N x_j^2\right)\exp\left\{-\sum_{j=1}^N x_j^2\right\}.
\end{equation}
It is not difficult to see that in this case the condition (\ref{ap11}) is satisfied.  Indeed, notice that for any $j=1,2,\dots,N$,
\begin{equation}\label{ap14a}
\int_0^\infty \left(\frac{\d}{\d x_j} f(x_1, \dots, x_N)\right) dx_j = -f((x_1,\dots,x_{j-1},0,x_{j+1},\dots,x_N)=0.
\end{equation}
Furthermore, notice that
\begin{equation}\label{ap16}
\frac{\d}{\d x_j} f(x_1, \dots, x_N)=\exp\left\{-\sum_{k=1}^N x_k^2\right\} \left( \prod_{k=1}^N x_k\right) P(x_1, \dots, x_N)
\end{equation}
for some polynomial $P$.  It follows that, for any $j=1,2,\dots,N$, 
\begin{equation}\label{ap16a}
A_1(x_1,\dots,x_{j-1},0,x_j,\dots,x_N)=0,
\end{equation}
and thus
\begin{equation}\label{ap17}
\int_0^\infty \left(\frac{\d}{\d x_j} A_1(x_1, \dots, x_N)\right) dx_j = -A_1(x_1,\dots,x_{j-1},0,x_j,\dots,x_N)=0.
\end{equation}
(\ref{ap14a}) and (\ref{ap17}) imply (\ref{ap11}), and thus Lemma \ref{aplem} applies.
It follows that
\begin{equation}\label{ap18}
\mathbb{P} \bigg(\max_{0<t<1} b_N^{(BE)}(t)<M\bigg) =1+O(\ep^4)=1+O(M^{-4}).
\end{equation}
In the scaling $M=\sqrt{\frac{2N}{a}}$ this becomes, as $a\to 0$,
\begin{equation}\label{ap19}
\mathbb{P} \bigg(\max_{0<t<1} b_N^{(BE)}(t)<M\bigg) =1+O(\ep^4)=1+O\left(\frac{a^2}{N^2}\right).
\end{equation}
Taking the logarithm proves Lemma \ref{smalla}.

\medskip

\section{Proof of Lemma \ref{subcrit}}\label{sub}
 If the parameter $a$ is such that
 \begin{equation}
 a<1-n^{-\de}, \quad 0<\de<\frac{2}{3},
 \end{equation}
 then by (\ref{main3}) and (\ref{pii7a}) the jump matrix for $\bold X_n(z)$ about the origin is exponentially small in $n$, and therefore the asymptotic expansion for $\bold X_n(z)$ comes from the jumps on the circles $\d D(\pm b,\ep)$.  We need to calculate this expansion up to an error of the order $n^{-3}$.  Instead of doing this directly, which is rather tedious, let us proceed by comparing our discrete orthogonal polynomials with their continuous brethren, the monic scaled Hermite polynomials $\{P_j^{(c)}(x)\}_{j=0}^\infty$, for which we have exact formulas.  These polynomials satisfy the orthogonality condition
\begin{equation}
\int_{-\infty}^{\infty} P_j^{(c)}(x)P_k^{(c)}(x)e^{-n\frac{a\pi^2x^2}{2}}\,dx=h_k^{(c)} \de_{jk}.
\end{equation}
The superscript $(c)$ stands for continuous. 
 The continuous orthogonal polynomials $P_n^{(c)}$ can be characterized in terms of the following Riemann-Hilbert problem.  We seek a matrix $\bold P_n^{(c)}(z)$ satisfying the following properties.
\begin{enumerate}
\item $\bold P_n^{(c)}(z)$ is analytic on $\C \setminus \R$.
\item For any real $x$,
\begin{equation}
 \bold P_n^{(c)}(x)=\bold P_n^{(c)}(x)j_c(x),\quad j_c(x)=
\begin{pmatrix}
1 & w(x) \\
0 & 1 
\end{pmatrix}.
\end{equation}
\item As $z\to\infty,$
\begin{equation}
\bold P_n^{(c)}(z)\sim \left( I+\frac {\bold P^{(c)}_1}{z}+\frac {\bold P^{(c)}_2}{z^2}+\ldots\right)
\begin{pmatrix}
z^n & 0 \\
0 & z^{-n} 
\end{pmatrix},
\end{equation}
where $\bold P^{(c)}_k,\; k=1,2,\dots$, are some constant $2\times
  2$ matrices.
\end{enumerate}
This problem has the unique solution
\begin{equation}
 \bold P_n^{(c)}(x)=\begin{pmatrix}
P_n^{(c)}(z) & \frac{1}{2\pi i}\int_{\R}
\frac{P_n^{(c)}(u)\,w_n(u)\,du}{u-z} \\
-\frac{2\pi i}{h_{n-1}}P_{n-1}(z) & -\frac{1}{h_{n-1}}\int_{\R}
\frac{P_{n-1}^{(c)}(u)\,w_n(u)\,du}{u-z}
\end{pmatrix}.
\end{equation}
The normalizing constants $h_n^{(c)}$ can be found as
\begin{equation}
h_n^{(c)}=-2\pi i\left[\bold P^{(c)}_1\right]_{12},\quad (h_{n-1}^{(c)})^{-1}=-\frac{\left[\bold P^{(c)}_1\right]_{21}}{2\pi i}\,.
\end{equation}

We can make a series of transformations to $\bold P_n^{(c)}$ to arrive at a small norm problem. Define $\bold T_n^{(c)}$ from the equation
\begin{equation}
\bold P^{(c)}_n(z)=e^{\frac{nl}{2}\sigma_3}\bold T^{(c)}_n(z)e^{n(g(z)-\frac{l}{2})\sigma_3},
\end{equation}
 and $\bold S_{n}^{(c)}$ as
 \begin{equation}
\begin{aligned}
&\bold S_n^{(c)}(z)=\left\{
\begin{aligned}
&\bold T_n^{(c)}(z)j_+(z)^{-1} \quad \textrm{for} \quad z\in \{(-b,b) \times (0,i\ep)\}  \\
&\bold T_n^{(c)}(z)j_-(z)\quad \textrm{for} \quad z\in \{(-b,b) \times (0,-i\ep)\}  \\
&\bold T_n^{(c)}(z)  \quad \textrm{otherwise}.
\end{aligned}\right. \\ 
\end{aligned}
\end{equation}
Then the matrix $\bold X_n^{(c)}$ can be defined as
\begin{equation}
\bold X_n^{(c)}(z)=\left\{
\begin{aligned}
&\bold S_n^{(c)}(z) \bold M(z)^{-1} \quad \textrm{for} \ z \ \textrm{outside the disks }  D(-b, \ep), \ D(b, \ep)\ \\
&\bold S_n^{(c)}(z) \bold U(z)^{-1} \quad \textrm{for} \ z \ \textrm{inside the disks }  D(-b, \ep), \ D(b, \ep)\,. \\
\end{aligned}\right.
\end{equation}
The jump matrices for $\bold X_n^{(c)}(z)$ are exponentially close to those for $\bold X_n(z)$, and therefore, by (\ref{tt17}) and (\ref{tt18}), $\bold X_n^{(c)}(z)$ and $\bold X_n(z)$ are exponentially close to eachother.  We are interested in the off diagonal terms of the matrix $\bold X_1^{(c)}$, where 
\begin{equation}
\bold X_n^{(c)}(z)=I+\frac{\bold X_1^{(c)}}{z}+O(z^{-2}).
\end{equation}
One can easily see that
\begin{equation}
\begin{aligned}
\left[\bold X_1^{(c)} \right]_{12}&=\left[\bold P_1^{(c)}\right]_{12}e^{-nl}-[\bold M_1]_{12}\, \qquad\qquad \left[\bold X_1^{(c)}\right]_{21}&=\left[\bold P_1^{(c)}\right]_{21}e^{nl}-[\bold M_1]_{21} \\ 
&=-\frac{h_n^{(c)}}{2\pi i} (\pi^2 a e)^n -\frac{i}{\pi \sqrt{a}}\,, \quad &=-\frac{2\pi i}{h_{n-1}^{(c)}(\pi^2 a e)^n}+\frac{i}{\pi \sqrt{a}}\,.
\end{aligned}
\end{equation}
The constants $h_n^{c}$ and $h_{n-1}^{(c)}$ are known exactly:
\begin{equation}
h_{n}^{(c)}=\frac{n! \sqrt{2\pi}}{(\sqrt{na} \pi)^{2n+1}}\,, \qquad\qquad h_{n-1}^{(c)}=\frac{(n-1)! \sqrt{2\pi}}{(\sqrt{na} \pi)^{2n-1}}.
\end{equation}
It follows that
\begin{equation}
\left[\bold X_1^{(c)} \right]_{12}=-\frac{i}{\pi \sqrt{a}} \left(1-\left(\frac{e}{n}\right)^n \frac{n!}{\sqrt{2\pi n}}\right)\,, \quad \left[\bold X_1^{(c)} \right]_{21}=\frac{i}{\pi \sqrt{a}}\left(1-\left(\frac{n}{e}\right)^n\frac{\sqrt{2\pi}}{\sqrt{n} (n-1)!}\right).
\end{equation}
Applying Stirlings formula, we find that
\begin{equation}
\begin{aligned}
\left[{\bf X}_1^{(c)}\right]_{12}=\frac{i}{\pi \sqrt{a}} \left(\frac{1}{12n}+\frac{1}{288n^2} -\frac{139}{51840 n^3} +O(n^{-4})\right), \\
\left[{\bf X}_1^{(c)} \right]_{21}=\frac{i}{\pi \sqrt{a}} \left(\frac{1}{12n}-\frac{1}{288n^2} -\frac{139}{51840 n^3} +O(n^{-4})\right) .
\end{aligned}
\end{equation}
 
 Let us now return to the discrete system of orthogonal polynomials.  The normalizing constants are given as
 \begin{equation}
 \begin{aligned}
h_{n,n}&=\frac{2}{\sqrt{a}}\left(\frac{1}{\pi^2 a e}\right)^n\big(1-[\bold X_1]_{12}\pi i\sqrt{a}\big)  \,, \\
h_{n,n-1}^{-1}&=\frac{1}{2\sqrt{a}\pi^2}\left({\pi^2 a e}\right)^n\big(1+[\bold X_1]_{21}\pi i\sqrt{a}\big)  \,.
\end{aligned}
\end{equation}
Since $\bold X_n$ and $\bold X_n^{(c)}$ are exponentially close to eachother, we may use the above expansion for $\bold X_1$, obtaining
  \begin{equation}
 \begin{aligned}
h_{n,n}&=\frac{2}{\sqrt{a}}\left(\frac{1}{\pi^2 a e}\right)^n \left(1+\frac{1}{12n}+\frac{1}{288n^2} -\frac{139}{51840 n^3} +O(n^{-4})\right)  \,, \\
h_{n,n-1}^{-1}&=\frac{1}{2\sqrt{a}\pi^2}\left({\pi^2 a e}\right)^n\left(1-\frac{1}{12n}+\frac{1}{288n^2} +\frac{139}{51840 n^3} +O(n^{-4})\right) \,.
\end{aligned}
\end{equation}

Let us also note that in this asymptotic regime, by a similar argument, we find that the recurrence coefficients $A_{n,k}^{(\al)}(a)$ are exponentially close to zero, as they vanish for Hermite polynomials.
% To evaluate the ratio $\frac{h_{2N+1,2N+1}(r)}{h_{2N, 2N-1}(r)}$, we need to take the above formula for $h_{n,n}$ with $n=2N+1$ and $t=\sqrt{r(1+1/2N)}$, and the formula for $h_{n, n-1}$ with $n=2N$ and $t=\sqrt{r}$.  Their product yields 
% \begin{equation}
% \begin{aligned}
% \frac{h_{2N+1,2N+1}(r)}{h_{2N, 2N-1}(r)}&=\frac{1}{\pi^4 r^2 e \sqrt{1+\frac{1}{2N}}}\left(\frac{1}{1+\frac{1}{2N}}\right)^{2N+1}\left(1-\frac{1}{48N^2}+\frac{1}{96N^3}+O(N^{-4})\right) \\
% &=\frac{1}{\pi^4 r e^2}\left(1-\frac{1}{2N}+\frac{5}{24N^2}-\frac{1}{12N^3}+O(N^{-4})\right).
% \end{aligned}
% \end{equation}

\section{Deformation equations for orthogonal polynomials}\label{defOP}
In this Appendix, we prove the deformation equations (\ref{dope9}) and (\ref{n4b}).  These equations are in fact quite general, and we present the proof for a general class of orthogonal polynomials.
Let $\{p_k(x)\}_{k=0}^\infty$ be a system of monic polynomials satisfying the orthogonality condition
\begin{equation}\label{def1}
\int_\R p_k(x)p_j(x)e^{-ax^2} d\mu(x)=h_k \de_{jk} ,
\end{equation}
where $d\mu(x)$ is any measure on $\R$ such that the system of orthogonal polynomials exists.  We consider deformations of this system with respect to the parameter $a$.  Let us write the three term recurrence equation, explicitly noting the dependence of each recurrence coefficient on the parameter $a$:
\begin{equation}\label{def2}
xp_k(x)=p_{k+1}(x)+A_k(a)p_k(x)+B_k(a)p_{k-1}(x)\,, \quad B_k(a)=\frac{h_k(a)}{h_{k-1}(a)}\,.
\end{equation}
Notice that, since the polynomials $p_k$ are monic, $\frac{\d}{\d a} p_k(x)$ is a polynomial of degree strictly less than $k$, and thus its integral against $p_k(x) e^{-ax^2} d\mu(x)$ is zero.  Thus if we differentiate (\ref{def1}) with respect to $a$ in the case $j=k$, apply the three term recurrence twice and integrate, we obtain
\begin{equation}\label{def3}
h_k'(a)=-h_k \big(B_{k+1}+A_{k}+B_k\big)\,,
\end{equation}
or equivalently
\begin{equation}\label{def3a}
\frac{\d}{\d a} \log h_k = -A_k- \frac{h_{k+1}}{h_k}- \frac{h_{k}}{h_{k-1}}\,,
\end{equation}
where we have suppressed the notation which explicitly indicates dependence on $a$.

Let us use $c_{k,j}$ to denote the coefficient of the $x^j$ term in the polynomial $p_k(x)$, so that
\begin{equation}\label{def4}
p_k(x)=x^k+c_{k,k-1} x^{n-1} +c_{k,k-2} x^{n-2}+ \cdots.
\end{equation}
These coefficients depend on the parameter $a$, and by matching the  coefficients of the $x^k$ term in (\ref{def2}), we see that
\begin{equation}\label{def5}
A_k(a)=c_{k,k-1}-c_{k+1,k}\,.
\end{equation}
To arrive at a deformation equation for $A_k$ consider (\ref{def1}) with $j=k-1$.  Differentiating with respect to $a$ and disregarding the term for which the integral vanishes gives
\begin{equation}\label{def6}
\int_\R \left[\frac{\d}{\d a} \big(p_k(x)\big) p_{k-1}(x) -x^2 p_{k-1}(x)p_k(x)\right]e^{-ax^2} d\mu(x)=0.
\end{equation}
Applying the three term recursion twice and integrating, we obtain
\begin{equation}\label{def7}
\left(\frac{\d}{\d a} c_{k,k-1}\right) h_{k-1}=A_kh_k+A_{k-1}B_k h_{k-1}\,.
\end{equation}
Combining (\ref{def5}) with (\ref{def7}) both as it is written and with $k \mapsto k+1$, we find
\begin{equation}\label{def8}
A_k'(a)=B_k\big(A_k+A_{k-1}\big) - B_{k+1}\big(A_{k+1}+A_{k}\big).
\end{equation}

We now use (\ref{def3}) and (\ref{def8}) to integrate (\ref{def3}) once more, obtaining
\begin{equation}\label{def9}
\frac{\d^2}{\d a^2} \log h_k=I_{k+1}-I_k\,,
\end{equation}
where
\begin{equation}\label{def10}
I_k=B_k\bigg(B_{k+1}+B_{k-1}+\big(A_{k-1}+A_k\big)^2\bigg)\,.
\end{equation}
It follows that the sum
\begin{equation}\label{def11}
\sum_{k=0}^{n-1} \frac{\d^2}{\d a^2} \log h_k
\end{equation}
telescopes and  its value is $I_n-I_0$.  But $I_0=0$, and thus the sum (\ref{def11}) is simply $I_n$.  After a change of variable, this proves (\ref{dope9}).

We now prove (\ref{n4b}).  In the case that the measure of orthogonality is even, the recurrence coefficients $A_k$ vanish, and we have
\begin{equation}\label{def12}
I_k=B_kB_{k+1}+B_kB_{k-1}\,
\end{equation}
and
\begin{equation}\label{def13}
\sum_{k=0}^{N-1} \frac{\d^2}{\d a^2} \log h_{2k}=\sum_{k=0}^{N-1} B_{2k+1}B_{2k+2}-B_{2k}B_{2k-1}\,; \quad \sum_{k=0}^{n-1} \frac{\d^2}{\d a^2} \log h_{2k+1}=\sum_{k=0}^{n-1} B_{2k+2}B_{2k+3}-B_{2k}B_{2k+1}
\end{equation}
which are again telescoping sums, and we obtain (\ref{n4b}) after a change of variables.

\end{document}